\theoremstyle{plain}
\newtheorem{theorem}{Theorem}[section]
\newtheorem{proposition}[theorem]{Proposition}
\newtheorem{lemma}[theorem]{Lemma}
\newtheorem{corollary}[theorem]{Corollary}
\newtheorem{problem}[theorem]{Problem}
\theoremstyle{definition}
\newtheorem{remark}[theorem]{Remark}
\newtheorem{definition}[theorem]{Definition}
\newtheorem{example}[theorem]{Example}
\DeclareMathOperator{\Ir}{Ir}
\DeclareMathOperator{\Ieq}{\,\normalfont \rotatebox{90}{\hspace{1.5pt}\small{I}}\,}
\newcommand{\chif}{\bar{\chi}_f}
\newcommand*{\MyDef}{\mathrm{def}}
\newcommand*{\eqdefU}{\ensuremath{\mathop{\overset{\MyDef}{=}}}}
\newcommand*{\eqdef}{\mathop{\overset{\MyDef}{\resizebox{\widthof{\eqdefU}}{\heightof{=}}{=}}}}
\newcommand{\muspace}{\mspace{1mu}}
\DeclareRobustCommand{\scond}{\mathchoice{\muspace\vert\muspace}{\vert}{\vert}{\vert}}
\DeclareRobustCommand{\discint}{\mathchoice{\mspace{-1.5mu}:\mspace{-1.5mu}}{\mspace{-1.5mu}:\mspace{-1.5mu}}{:}{:}}
\newcommand{\suchthat}{\colon}
\def\rank{\mathop{\rm rank}\nolimits}%
\newcommand{\Hc}{\mathcal{H}}
\newcommand{\Lc}{\mathcal{L}}
\newcommand{\Yc}{\mathcal{Y}}
\newcommand{\Fcal}{\mathcal{F}}
\newcommand{\Hcal}{\mathcal{H}}
\newcommand{\sh}{{\hat{s}}}
\newcommand{\gt}{{\tilde{g}}}
\def\a{\alpha}
\def\b{\beta}
\def\textiid{i.i.d.\@\xspace}
\newcommand\iid{\ifmmode\text{ i.i.d. } \else \textiid \fi}
\def\clap#1{\hbox to 0pt{\hss#1\hss}}
\def\mathclap{\mathpalette\mathclapinternal}
\def\mathclapinternal#1#2{%
  \clap{$\mathsurround=0pt#1{#2}$}}
\let\oldstackrel\stackrel
\renewcommand{\stackrel}[2]{\oldstackrel{\mathclap{#1}}{#2}}
\begin{document}

\title{Graph Information Ratio}

\author{Lele Wang and Ofer Shayevitz \thanks{This work has been supported by an ERC grant no. 639573, and an ISF grant no. 1367/14. L. Wang is jointly with the Department of Electrical Engineering, Stanford University, Stanford, CA, USA and the Department of Electrical Engineering - Systems, Tel Aviv University, Tel Aviv, Israel (email: wanglele@stanford.edu). O. Shayevitz is with the Department of Electrical Engineering - Systems, Tel Aviv University, Tel Aviv, Israel (email: ofersha@eng.tau.ac.il). Parts of this work appear in the proceedings of the International Symposium on Information Theory (ISIT) 2017.}}

\parskip 3pt

\date{}
\maketitle

\begin{abstract}
We introduce the notion of information ratio $\Ir(H/G)$ between two (simple, undirected) graphs $G$ and $H$, defined as the supremum of ratios $k/n$ such that there exists a mapping between the strong products $G^k$ to $H^n$ that preserves non-adjacency. Operationally speaking, the information ratio is the maximal number of source symbols per channel use that can be reliably sent over a channel with a confusion graph $H$, where reliability is measured w.r.t. a source confusion graph $G$. Various results are provided, including in particular lower and upper bounds on $\Ir(H/G)$ in terms of different graph properties, inequalities and identities for behavior under strong product and disjoint union, relations to graph cores, and notions of graph criticality. Informally speaking, $\Ir(H/G)$ can be interpreted as a measure of similarity between $G$ and $H$. We make this notion precise by introducing the concept of information equivalence between graphs, a more quantitative version of homomorphic equivalence. We then describe a natural partial ordering over the space of information equivalence classes, and endow it with a suitable metric structure that is contractive under the strong product. Various examples and open problems are discussed. 
\end{abstract}

\tableofcontents

\thispagestyle{empty}



\section{Overview of Main Results}
The zero-error capacity of a noisy channel is a well known problem in information theory, originally introduced and studied by Shannon \cite{Shannon1956}. One canonical way to describe this problem is the following: A sender is trying to convey one of $M$ distinct messages to a receiver over a channel with some finite input alphabet $V$. The channel noise is characterized by a (simple, undirected) \textit{channel confusion graph} $H$ over the input alphabet (i.e., with a vertex set $V(H)=V$) and an edge set $E(H)$. The sender maps his messages to a sequence of channel inputs in $v_1,\ldots,v_n$, and the receiver in turn observes an arbitrary sequence of edges $e_1,\ldots,e_n\in E(H)$, such that $v_i\in e_i$. This mapping of messages to inputs is called \textit{zero-error} if the receiver can always determine the message uniquely from the sequence of edges. The rate of the mapping is defined as $R=n^{-1}\log{M}$, which corresponds to the number of bits sent per channel use. The zero-error capacity of the channel, also known as the \textit{Shannon graph capacity} $C(H)$, is the supremum over all rates $R$ for which a zero-error mapping exists.

It is not difficult to verify that the maximal zero-error rate for one use of the channel ($n=1$) is exactly $\log\alpha(H)$,  where $\alpha(H)$ is the independence number of the channel graph $H$. More generally, the graph capacity is given by $C(H) = \log\Theta(H)$, where  
\begin{align*}
  \Theta(H) \eqdef \lim_{n\to\infty} \sqrt[n]{\alpha(H^n)}.
\end{align*}
Here $H^n$ is the $n$-fold strong product (a.k.a. AND product) of $H$ with itself. Recall that for two graphs $H_1,H_2$, the strong product $H_1\boxtimes H_2$ is a graph with vertex set $V(H_1)\times V(H_2)$, where $(h_1,h_2)\sim (h'_1,h'_2)$ if for each $i$ either  $h_i\sim h'_i$ in $H_i$ or $h_i=h'_i$. The limit above exists due to supermultiplicativity, but is in general notoriously difficult to compute or even to approximate \cite{Alon--Lubetzky2006}. 

Another closely related problem is that of zero-error compression. Here we are given a \textit{source confusion graph} $G$, and the sender needs to map \textit{source sequences} $g_1,\ldots,g_k\in G$ to one of $M$ messages, that is then sent noiselessly to the receiver. This mapping is called \textit{zero-error} if any two source sequences $g_1,\ldots,g_k\in G$ and  $g'_1,\ldots,g'_k\in G$ that are not confusable w.r.t. $G$, i.e., for which $g_j\not\sim g'_j$ in at least coordinate $j$, are always mapped to distinct messages. In other words, this means that the receiver is always able to output a list of source sequences that are all confusable with the correct one, and are guaranteed to contain it. The rate of the mapping is $R=k^{-1}\log{M}$, which corresponds to the number of message bits per source symbol. The zero-error compression rate of $G$ is the infimum over all rates for which a zero-error mapping exists. We note that this problem was originally introduced by K\"{o}rner \cite{Korner1973} in a slightly different setting where the source is probabilistic and a small error probability is allowed; in that case, the optimal compression ratio is the so-called \textit{K\"{o}rner entropy} of the graph. 

It is not difficult to verify that the minimal rate for source sequences of length one ($n=1$) is exactly $\log\chi(\overline{G})$,  where $\chi(\overline{G})$ is the chromatic number of the complementary graph $\overline{G}$. More generally, the zero-error graph compression rate of $G$ is given by $\log\chif(G)$, where 
\begin{align*}
  \chif(G) \eqdef \lim_{k\to\infty} \sqrt[k]{\chi(\overline{G}^{\vee k})}.
\end{align*}
Here $G^{\vee k}$ is the $k$-fold OR product of $G$ with itself. Recall that for two graphs $G_1,G_2$, the OR product $G_1\vee G_2$ is a graph with vertex set $V(G_1)\times V(G_2)$, where $(g_1,g_2)\sim (g'_1,g'_2)$ if $g_i\sim g'_i$ in $G_i$ for at least one coordinate $i$. The above limit $\chif(G)$, also known as the \textit{fractional chromatic number} of $\overline{G}$, exists due to supermultiplicativity and can computed by solving a simple linear program \cite{Pirnazar--Ullman2002,Scheinerman--Ullman2011}.   

Now, it is only natural to consider the more general problem where a sender wishes to communicate a source sequence with confusion graph $G$ over a noisy channel with confusion graph $H$. Suppose the sender has a source sequence of length $k$, and can use the channel $n$ times. The sender would like to map the source sequence to the channel inputs in a way that the receiver will always be able to output a list of source sequences that are all confusable with the correct one, and are guaranteed to contain it. In graph theoretic terms, we are looking for a mapping $\phi$ from the vertices of $G^k$ to the vertices of $H^n$, such that for any $g,g'\in G^k$ where $g\not\sim g'$, the images $\phi(g)\not\sim\phi(g')$ in $H^n$. We will call such a mapping \textit{non-adjacency preserving}. Does such a mapping exist? The answer depends on $k$ and $n$. This leads us to define the \textit{information ratio} between $H$ and $G$ to be 
\begin{align*}
\Ir(H/G) \eqdef \sup\left\{k\slash n : \text{$\exists$ a non-adjacency preserving mapping from $G^k$ to $H^n$} \right\}.   
\end{align*}
The information ratio can hence be thought of as the maximal number of source symbols per channel use that can be reliably conveyed to the receiver, in the above sense. In what follows, we study the information ratio and derive many interesting properties, which may be of separate interest beyond the information theoretic motivation. 

A simple lower bound on the information ratio can be obtained by a \textit{separation} scheme, namely where the source is first optimally compressed into a message using $\log\chif(G)$ bits per source symbol, and then this message is optimally sent over the channel using $C(H) = \log\Theta(H)$ bits per channel use. It is straightforward to see that this yields the following lower bound:
\begin{align*}
  \Ir(H/G) \geq \frac{\log\Theta(H)}{\log \chif(G)}.
\end{align*}
This lower bound is sometimes tight. For example, denoting the empty graph over two vertices by $\overline{K_2}$, it is clear by definition that $\Ir(H/\overline{K_2}) = \log\Theta(H)$ and $\Ir(\overline{K_2}/G) = 1/\log\chif(G)$, as these cases reduce to the pure communication and pure compression problems respectively. Noting that $\Theta(\overline{K_2}) = \chif(\overline{K_2})=2$, we see that in these cases separation is (trivially) optimal. Separation can be optimal in richer cases as well. However, and in contrast to the classical JSCC case mentioned above \cite{Shannon1959}, separation is not always optimal. To see that, note $\Ir(G/G)\geq 1$ for any $G$ since we can always map $G$ to $G$ using the identity mapping, which is non-adjacency preserving with $n=k=1$ (in fact, we show this is also the best possible). However, in many cases $\Theta(G)$ is strictly smaller than $\chif(G)$. For example, take $G$ to be the pentagon $C_5$ (cycle over five vertices), where separation only achieves $\frac{\log\Theta(C_5)}{\log \chif(C_5)} = \frac{\log \sqrt{5}}{\log 2.5} = 0.878 < 1$. We thus see that $\Ir(H/G)$ depends in general on the relative structure of the two graphs, which is in some sense what makes this problem rich and interesting.  

We will prove various algebraic identities and inequalities for the information ratio. To that end, one may find it instructive to think of the information ratio, very informally, as 
\begin{align*}
  \stackrel{\text{``}}{\,}\Ir(H/G) \approx \frac{\log{|H|}}{\log{|G|}}\,\stackrel{\text{''}}{\,}
\end{align*}
This statement is meant to imply that, loosely speaking, the information ratio behaves like the ratio of logarithmic graph ``sizes'', in terms of satisfying algebraic identities/inequalities similar to the ones satisfied by positive real numbers, where the strong product $G\boxtimes H$ is thought of as ``multiplication'', and the disjoint union $G+H$ is though of as ``addition''. 

More accurately, we will prove the following relations. First, the product of reciprocal information ratios cannot exceed unity:
\begin{align*}
 \Ir(H/G)\, \Ir(G/H)\le 1. 
\end{align*}
The information ratio is super-multiplicative w.r.t. the strong product:
\begin{align*}
 \Ir(G \boxtimes H/F) \ge \Ir(G/F) + \Ir(H/F),
\end{align*}
and is exactly multiplicative for $F=G$, i.e., 
\begin{align*}
 \Ir(G \boxtimes H/G) = 1+ \Ir(H/G).
\end{align*}
Similarly, it also holds that 
\begin{align*}
 \Ir(F/G \boxtimes H) \ge \frac{\Ir(F/G)\,\Ir(F/H)}{\Ir(F/G) + \Ir(F/H)},
\end{align*}
with equality when $F=G$,
\begin{align*}
   \Ir(G/G\boxtimes H) = \frac{\Ir(G/H)}{1+\Ir(G/H)}.
\end{align*}
Furthermore, the following \textit{information ratio power inequality} holds w.r.t. disjoint union:
\begin{align*}
 \chif(F)^{\Ir(G+H/F)} \ge \chif(F)^{\Ir(G/F)} + \chif(F)^{\Ir(H/F)}  
\end{align*}
This implies that we can informally think of the source graph $F$ as the ``logarithm base''. This inequality can be arbitrarily loose, but is tight e.g. when $G=H=F$, in which case 
\begin{align*}
  \Ir(F+F/ F) = 1+\frac{1}{\log \chif(F)}.  
\end{align*}
In a similar vein, it also holds that 
\begin{align*}
    \Ir(F/F+ F) = \frac{\log\Theta(F)}{1+\log\Theta(F)}.
\end{align*}

A simple but useful observation is that the information ratio between two graphs can be equivalently defined in terms of homomorphisms between the respective complement graphs. A homomorphism $\phi$ from $G$ to $H$ is a mapping of $V(G)$ to (a subset of) $V(H)$ that preserves adjacency, i.e. where $g_1\sim g_2$ in $G$ implies $\phi(g_1)\sim \phi(g_2)$ in $H$. This relation is written $G\to H$. It is therefore immediately clear that $\Ir(H/G)$ is the supermum of $k/n$ such that there exists a homomorphism $\overline{G^k}\to\overline{H^n}$, or equivalently, $\overline{G}^{\vee k}\to\overline{H}^{\vee n}$.  

We use the homomorphic definition in conjunction with some known results on hom-monotone functions, i.e., functions that are monotone w.r.t. homomorphisms, to derive several upper bounds on the information ratio:
\begin{align*}
   \Ir(H/G) \le \min\left\{\frac{\log\Theta(H)}{\log\Theta(G)},\, \frac{\log \vartheta(H)}{\log \vartheta(G)},\, \frac{\log \chif(H)}{\log \chif(G)}\right\},
\end{align*}
where $\vartheta(\cdot)$ is the Lov{\'a}sz theta function \cite{Lovasz1979}, a well known upper bound for graph capacity that can be computed by solving a semidefinite program. We note that $\Theta(G)\leq \vartheta(G)\leq \chif(G)$ always holds, hence for the graph capacity problem it is never interesting to look at the fractional chromatic number. It is thus interesting to note that for the information ratio problem there are cases where the fractional chromatic number bound is better than the Lov{\'a}sz theta function bound. We further provide two additional upper bounds which are not easily computed. The first is given by $\Ir(H/G) \leq \frac{\log\gamma_f(H)}{\log\gamma_f(G)}$, where $\gamma_f(\cdot)$ is a properly defined fractional version of Haemers' min-rank function. The second is given implicitly in terms of a newly defined quantity $\beta_f^{(n)}(G,H)$, which is related to the number of homomorphism between powers of $G$ and $H$.

Our upper bounds can be used in conjunction with specific mappings (e.g., separation/uncoded) to find the exact information ratio in several special cases. For example, separation is clearly tight when either $\Theta(G)=\chif(G)$ or $\Theta(H)=\chif(H)$. Using a simple uncoded mapping and the multiplicativity of the upper bounds, it follows that $\Ir(F^{m_1}/F^{m_2}) = \frac{m_1}{m_2}$ for any graph $F$. This can be generalized to exactly determine $\Ir(F_1^{m_1} + F_1^{m_1}/F_2^{m_2} + F_2^{m_2})$. 

The graph homomorphism approach leads to further interesting observations. We say that two graphs $G$ and $H$ are \textit{homomorphically equivalent}, denoted $G\leftrightarrow H$, if both $G\to H$ and $H\to G$. This induces an equivalence relation on the family of (finite, simple, undirected) graphs. It is well known that for any graph $G$, there exists a unique (up to isomorphism) representative $G^\bullet$ of the equivalence class of $G$, called the \textit{core} of $G$. Loosely speaking, the core is the graph with the least number of vertices in the equivalence class. Examples of cores include complete graphs, odd cycles, and Kenser graphs. Using the notion of a core, we can show that the information ratio $\Ir(G/H)$ depends only on the cores of $\overline{G}$ and $\overline{H}$, hence cores are sufficient for the purpose of computing the information ratio. This can simplify the computation; for example, using the core notion it is readily verified that if $H$ is a disjoint union of $t$ (arbitrary) cliques, and $G$ is a disjoint union of $s$ (arbitrary) cliques, then $\Ir(H/G) = \frac{\log{t}}{\log{s}}$. As another example, it can be shown that if $H_1\to H_2$, then $\Ir(\overline{H_1+H_2}/G) = \Ir(\overline{H_2}/G)$ for any $G$.

Clearly, if $\overline{G}$ and $\overline{H}$ are homomorphically equivalent (i.e., have the same core), then $\Ir(H/G) = \Ir(G/H)=1$. However, the reverse implication does not always hold. For example, if $G = \overline{KG(6,2)}$ and $H = \overline{KG(12,4)}$, where $KG(n,r)$ is the Kneser graph, then $\Ir(G/ H) = \Ir( H/ G) = 1$, but $\overline{H}\not\to\overline{G}$. The reason for this deficiency of the homomorphic equivalence is that it is possible for $\overline{G}$ and $\overline{H}$ not to be homomorphically equivalent, yet for their OR powers to become asymptotically ``almost'' so.

This observation leads us to define a new equivalence relation on graphs: we say that $G$ and $H$ are \textit{information--equivalent}, denoted $G\Ieq H$, if $\Ir(H/G) = \Ir(G/H)=1$. This relation is a coarsening of the homomorphic equivalence of the complement graphs, and captures their asymptotic similarity. Information equivalence turns out to enjoy several nice properties. For instance, the set of information equivalence classes equipped with the function  
\begin{align*}
  d(G,H)\eqdef -\log (\min\{\Ir(G/H),\,\Ir(H/G)\})
\end{align*}
forms a metric space, which is contractive w.r.t. to the strong product, i.e., $d(G\boxtimes F, H\boxtimes F) \le d(G,H)$. Furthermore, information equivalence admits an alternative definition in terms of spectra. We define the \textit{source spectrum} (resp. \textit{channel spectrum}) of a graph $F$ to be the (ordered) set of all ratios $\{\Ir(H/F)\}$ (resp. $\{\Ir(F/G)\}$) where $F$ serves as the source (resp. channel). Then two graphs are information--equivalent if and only if they have the same source (resp. channel) spectrum.  

There exists a natural \textit{information partial order} $\preccurlyeq$ over the set of information equivalence classes, defined by $G \preccurlyeq H$ if $\Ir(H/G) \ge 1$. Not all graphs are comparable w.r.t. the information partial order; in the sequel we give an example of graphs such that both $\Ir(G/H) < 1$ and  $\Ir(H/G) < 1$. We show that $G \preccurlyeq H$ if and only if the source (resp. channel) spectrum of $G$ is point-wise not smaller (resp. not larger) than the source (resp. channel) spectrum of $H$. Furthermore, the functions $\Theta(G), \vartheta(G), \chif(G)$ and $\gamma_f(G)$ are all monotonically non-decreasing w.r.t. the information partial order. 

We further say that $G$ and $H$ are \textit{weakly information--equivalent}, denoted $G\Ieq_w H$, if $\Ir(G/H)\Ir(H/G) = 1$. This is a coarsening of the information equivalence that is insensitive to strong products, in the sense that now $G^k\Ieq_w G^n$ for any $k,n$. The set of associated equivalence classes equipped with the function $d_w(G,H)\eqdef -\log (\Ir(G/H)\Ir(H/G))$ forms a metric space. Moreover, the following identity holds
\begin{align*}
  \Ir(G\boxtimes H/ F) = \Ir(G/ F) + \Ir(H/ F),  
\end{align*}
whenever either $G \Ieq_w H$ or $F \Ieq_w G$ or $F \Ieq_w H$.

Finally, we discuss a notion of graph criticality induced by the information ratio. A graph $F$ is called \textit{information--critical} if there exists an edge $e\in E(F)$ that, once removed, changes all the information ratios, i.e., $\Ir(H/(F\setminus e)) < \Ir(H/F)$ and $\Ir((F\setminus e)/G) > \Ir(F/G)$ for all $G,H$. We provide equivalent characterizations as well as sufficient conditions for information--criticality, and show that complements of various known cores are information--critical. 

\subsubsection*{Related work}
The question we consider is reminiscent of the \textit{joint source-channel coding (JSCC)} problem studied in the classical (non-zero-error) information theoretic setup, where a source is to be communicated over a noisy channel under some end-to-end distortion constraint (expected distortion, excess distortion exponent, etc.)~\cite{Shannon1959}. Our problem differs from these classical JSCC setups in a number of aspects. First, our setting is combinatorial in nature and does not allow any errors; in this sense, a more closely related study appears in two papers by Kochman \textit{et al} \cite{Kochman--Mazumdar--Polyanskiy2012a,Kochman--Mazumdar--Polyanskiy2012b} where the authors consider JSCC over an adversarial channel. But more importantly, the way we measure success cannot be cast in a per-symbol distortion JSCC framework. The natural distortion for our setup is one where confusable symbols are assigned zero distortion and non-confusable symbols are assigned infinite distortion. However, this results in a different (more degenerate) setup; for example, if $G$ has a vertex that is connected to all the other vertices, then the receiver can always reconstruct this vertex with no communication cost, and admit zero distortion. A better way to think about our setup is perhaps that of list decoding with structural constraints. Unlike the pure communication setup where the receiver must commit to one message, here the receiver can output a list of possible messages, but this list must have a certain ``similarity structure'' that is captured by the source confusion graph $G$. 

In a related work~\cite{Korner--Marton2001}, K\"{o}rner and Marton introduced and studied the \emph{relative capacity} $R(H|G)$ of a graph pair $(G,H)$, defined (originally in terms of OR products) as the maximal ratio $k/n$ such that $H^n$ contains an induced subgraph that is isomorphic to $G^k$. This is a stronger requirement than ours; the information-ratio setup is concerned with mappings (from $G^k$ to $H^n$) that are only required to preserve non-adjacency, whereas the relative capacity setup considers mappings that must preserve both adjacency and non-adjacency. The relative capacity $R(H|G)$ is therefore a lower bound on the information ratio $\Ir(H/G)$. We note that the relative capacity was determined in~\cite{Korner--Marton2001} for the special case where $G$ is an empty graph, where it can be easily seen to equal the associated information ratio. The main contribution in~\cite{Korner--Marton2001} is a general upper bound on the relative capacity, and its ramifications in a certain problem of graph dimension. This upper bound is not informative in the associated information ratio setup. 

In another related work~\cite{Polyanskiy2017}, Polyanskiy studied certain mappings of the Hamming graph $H(m,d)$, which is a graph over the Hamming cube $\{0,1\}^m$ where two vertices are connected if their Hamming distance is at most $d$. He investigated conditions for the non-existence of  $(\alpha,\beta)$-mappings, which he defined as non-adjacency preserving mappings between $H(m,\alpha m)$ and $H(\ell,\beta \ell)$ (for a fixed $n=k=1$ in our notation, i.e., no graph products). He then provided impossibility results in the limit of $m\to\infty$ for a fixed ratio $m/\ell$, via a derivation of general conditions for existence of graph homomorphism. This problem is also closely related to the combinatorial JSCC setup mentioned above. 

In a recent work \cite{Fritz2017}, Fritz has independently studied general problems of resource conversion in an abstract category theory framework. When interpreted in the case of the ordered commutative monoid of graphs, some his abstract results can be linked to our work.

\subsubsection*{Notation}
We typically reserve $G$ for a source graph and $H$ for a channel graph, hence a pair of graphs $(G,H)$ is a source-channel pair in this order. For such a pair, a non-adjacency preserving mapping from $G^k$ to $H^n$ is called a \textit{$(k,n)$ code}. We denote by $\overline{G}$ the complement  of the graph $G$, i.e., $g_1 \sim g_2$ in $\overline{G}$ if and only if $g_1 \not\sim g_2$ in $G$.  We write $K_s$ to mean a complete graph over $s$ vertices, and $\overline{K_s}$ to mean the empty graph over $s$ vertices. We write $C_n$ to mean a cycle with $n$ vertices, and $W_n$ to mean a wheel, which is a cycle $C_n$ with an extra vertex that is adjacent to all vertices in the cycle. We write $KG(n,r)$ to mean the Kneser graph, whose vertices are all $r$-subsets of $\{1,2,\ldots,n\}$, and where two vertices are adjacent if and only if they correspond to disjoint subsets.

\section{Lower Bounds}

In this section, we derive several lower bounds on the information ratio. We first recall the two extremes of the information ratio problem, which follow directly from the definitions. 
\begin{proposition}[Zero-Error Channel Coding]
\label{prop:cap}
 Let $G = \overline{K_2}$ and $H$ be any graph. Then
 \[
  \Ir(H/\overline{K_2}) = \log\Theta(H).
 \]
\end{proposition}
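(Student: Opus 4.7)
The plan is to unpack the definition of the information ratio when $G = \overline{K_2}$. First I would observe that $\overline{K_2}^k$ is again an empty graph, this time on $2^k$ vertices: by the definition of the strong product, two distinct vertices of $\overline{K_2}^k$ are adjacent only if at every coordinate they are either equal or adjacent in $\overline{K_2}$, but $\overline{K_2}$ has no edges at all, so no two distinct vertices of $\overline{K_2}^k$ can be adjacent.

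Consequently, every pair of distinct vertices of $\overline{K_2}^k$ is non-adjacent. A mapping $\phi \colon V(\overline{K_2}^k) \to V(H^n)$ is non-adjacency preserving if and only if it is injective and its image is an independent set in $H^n$. Therefore a $(k,n)$ code exists if and only if $\alpha(H^n) \ge 2^k$, which is to say $k \le \log \alpha(H^n)$ (with $\log$ to base $2$).

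From this equivalence, the information ratio becomes
\begin{align*}
  \Ir(H/\overline{K_2}) = \sup_{n \ge 1} \frac{\lfloor \log \alpha(H^n) \rfloor}{n}.
\end{align*}
The next step is to pass to the limit. By the supermultiplicativity $\alpha(H^{n+m}) \ge \alpha(H^n)\alpha(H^m)$, Fekete's lemma guarantees that $n^{-1}\log \alpha(H^n)$ converges to its supremum, which by definition is $\log\Theta(H)$. The floor poses no real issue: given any $\epsilon>0$, pick $n_0$ with $n_0^{-1}\log \alpha(H^{n_0}) > \log\Theta(H) - \epsilon/2$, and then for all sufficiently large multiples $n = t n_0$ use $\alpha(H^{tn_0}) \ge \alpha(H^{n_0})^t$ to make the rounding loss $1/n$ arbitrarily small. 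This shows $\sup_n n^{-1}\lfloor \log \alpha(H^n)\rfloor = \log \Theta(H)$, completing the identification.

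The only mildly delicate point is the handling of the floor in the supremum, but the supermultiplicativity allows us to amplify any near-optimal $n$ by taking powers, which absorbs the integer-part loss; there is no substantive obstacle in the proof.
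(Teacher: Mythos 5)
Your proof is correct and takes the standard route the paper has in mind when it says the proposition ``follows directly from the definitions'': since $\overline{K_2}^k$ is empty on $2^k$ vertices, a non-adjacency preserving map (equivalently, a homomorphism $K_{2^k}\to\overline{H^n}$ via Lemma~\ref{lem:graph-hom}) is exactly an injection onto an independent set of $H^n$, so the problem reduces to $\alpha(H^n)\ge 2^k$, and Fekete's lemma with the amplification trick to absorb the integer floor gives $\log\Theta(H)$. This matches the paper's (elided) reasoning, so nothing more needs to be said.
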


\begin{proposition}[Zero-Error Source Compression]
\label{prop:chif}
Let $G$ be any graph and $H = \overline{K_2}$. Then
\[
 \Ir(\overline{K_2}/G) = \frac{1}{\log \chif(G)}.
\]
\end{proposition}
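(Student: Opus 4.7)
The plan is to reduce the information ratio $\Ir(\overline{K_2}/G)$ to the graph coloring problem for OR powers of $\overline{G}$, and then invoke the definition of $\chif(G)$ as the limit of $\chi(\overline{G}^{\vee k})^{1/k}$.

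First, I would use the homomorphic reformulation mentioned in the overview: $\Ir(\overline{K_2}/G)$ is the supremum of $k/n$ such that there exists a homomorphism $\overline{G}^{\vee k} \to \overline{\overline{K_2}}^{\vee n}$. Since $\overline{\overline{K_2}} = K_2$ and the OR product of complete graphs is complete, $K_2^{\vee n} = K_{2^n}$. A homomorphism from a graph $F$ to $K_m$ is precisely a proper $m$-coloring of $F$, so a $(k,n)$ code exists if and only if $\chi(\overline{G}^{\vee k}) \le 2^n$, i.e., $n \ge \log \chi(\overline{G}^{\vee k})$.

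Second, I would conclude that
\begin{align*}
\Ir(\overline{K_2}/G) = \sup_{k \ge 1} \frac{k}{\lceil \log \chi(\overline{G}^{\vee k}) \rceil}.
\end{align*}
Since $\chi$ is submultiplicative under the OR product ($\chi(G_1 \vee G_2) \le \chi(G_1)\chi(G_2)$, via the pairing of colorings), $\log \chi(\overline{G}^{\vee k})$ is subadditive in $k$, so by Fekete's lemma the limit defining $\chif(G)$ exists and equals the infimum, giving $\chi(\overline{G}^{\vee k}) \ge \chif(G)^k$ for every $k$. This immediately yields the upper bound $\Ir(\overline{K_2}/G) \le 1/\log \chif(G)$.

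For the matching lower bound, I would fix $\varepsilon > 0$, pick $k$ large enough so that $\tfrac{1}{k}\log \chi(\overline{G}^{\vee k}) < \log \chif(G) + \varepsilon/2$, and use $\lceil x \rceil \le x + 1$ to obtain
\begin{align*}
\frac{k}{\lceil \log \chi(\overline{G}^{\vee k})\rceil} \ge \frac{1}{\tfrac{1}{k}\log \chi(\overline{G}^{\vee k}) + 1/k} > \frac{1}{\log \chif(G) + \varepsilon}
\end{align*}
for all sufficiently large $k$. Letting $\varepsilon \to 0$ gives the lower bound and hence the claimed identity. The only slightly nontrivial step is tracking the ceiling to pass from the $1/k$ limit to the supremum over admissible integer pairs $(k,n)$, but this is routine once the reduction to coloring is established.
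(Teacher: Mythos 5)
Your proof is correct and takes the same natural route the paper has in mind: observe that $\overline{\overline{K_2}}^{\vee n} = K_2^{\vee n} = K_{2^n}$, so a $(k,n)$ code is exactly a proper $2^n$-coloring of $\overline{G}^{\vee k}$, and then invoke the definition of $\chif(G)$ as $\lim_k \chi(\overline{G}^{\vee k})^{1/k}$. The paper states this proposition without an explicit argument (it ``follows directly from the definitions''), and your careful bookkeeping of the ceiling and the Fekete/subadditivity step simply fills in the routine details the paper leaves implicit.
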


The following simple lemma is useful in the sequel. 
\begin{lemma}
\label{lem:m-ext}
If a $(k,n)$ code exists for $(G,H)$, then a $(km,nm)$ code exists for $(G,H)$, for any $m \in \mathbbm{N}$.
\end{lemma}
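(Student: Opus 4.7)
The plan is to prove the lemma by a direct block-concatenation construction: a single $(k,n)$ code is applied in parallel to $m$ disjoint blocks of source symbols. More precisely, let $\phi\colon V(G^k)\to V(H^n)$ be a $(k,n)$ code. I would define $\psi\colon V(G^{km})\to V(H^{nm})$ by partitioning a $G^{km}$-vertex $(g_1,\ldots,g_{km})$ into $m$ consecutive blocks of length $k$, applying $\phi$ separately to each block, and concatenating the resulting $m$ tuples in $V(H^n)$ to obtain a single tuple in $V(H^{nm})$.

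The key step is to verify that $\psi$ is non-adjacency preserving. Given any two distinct non-adjacent vertices $g,g'\in V(G^{km})$, the strong-product definition guarantees some coordinate $i$ with $g_i\neq g'_i$ and $g_i\not\sim g'_i$ in $G$. This coordinate lies in exactly one of the $m$ source blocks, say block $j$. The restrictions of $g$ and $g'$ to block $j$ are then distinct and non-adjacent in $G^k$, since the same coordinate $i$ still witnesses the non-edge. Since $\phi$ is a $(k,n)$ code, the images of these restrictions are non-adjacent in $H^n$, which, by the strong-product definition applied to $H^n$, means they disagree at some coordinate inside the $j$-th $H^n$-block with non-adjacent entries in $H$. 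That same coordinate witnesses the non-adjacency of $\psi(g)$ and $\psi(g')$ in $H^{nm}$, completing the verification.

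I do not expect a substantial obstacle here: the argument formalises the familiar intuition that the strong product is compatible with coordinate-wise composition, and that non-adjacency in a strong product is a local, single-coordinate property. The only care needed is to keep the block indexing straight and to invoke the precise definition of a non-edge in the strong product (inequality together with non-adjacency at some coordinate), which makes the lifting from a non-adjacent pair inside one block to a non-adjacent pair in the full product immediate.
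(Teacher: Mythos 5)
Your proof is correct and follows exactly the same block-concatenation construction the paper uses (the paper writes it in one line as $(g_1,\ldots,g_m)\mapsto(f(g_1),\ldots,f(g_m))$ and simply asserts it is non-adjacency preserving). Your additional step of tracing a witnessing coordinate through the blocks is a sound and welcome elaboration of that assertion; just note, as you implicitly do, that "non-adjacency preserving" means non-edges map to non-edges (equivalently, a homomorphism of complements), so the images of the two restrictions are both distinct \emph{and} non-adjacent, which is what licenses the existence of a disagreeing non-adjacent coordinate in the $j$-th block of $H^{nm}$.
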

\begin{proof}
Given any non-adjacency preserving mapping $f\suchthat V(G^k) \to V(H^n)$ for the source-channel pair $(G,H)$, the mapping $(g_1,\ldots,g_m) \mapsto (f(g_1),\ldots,f(g_m))$ is again non-adjacency preserving. This leads to a $(km,nm)$ code for the source-channel pair $(G,H)$.
\end{proof}

Next, we show how to construct a code for the pair $(G,H)$ by concatenating two codes for the pairs $(G,F)$ and $(F,H)$.
\begin{lemma}[Concatenation Scheme]
\label{lem:concat}
For any graphs $G,H,F$,
\[
 \Ir(H/G) \ge \Ir(F/G)\Ir(H/F)
\]
\end{lemma}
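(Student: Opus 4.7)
The plan is to formalize a concatenation argument: use the first code to map source blocks into an intermediate graph $F$, then feed the resulting $F$-sequences into the second code. Since codes are just non-adjacency preserving mappings, and the composition of two such mappings is again non-adjacency preserving, the only real issue is matching the block lengths so that the output alphabet $F^{n_1}$ of the first code lines up with the input alphabet $F^{k_2}$ of the second code. This is exactly what Lemma~\ref{lem:m-ext} lets us do.

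More concretely, suppose there is a $(k_1,n_1)$ code for $(G,F)$, \ie a non-adjacency preserving map $\phi_1\colon V(G^{k_1})\to V(F^{n_1})$, and a $(k_2,n_2)$ code for $(F,H)$, \ie a non-adjacency preserving map $\phi_2\colon V(F^{k_2})\to V(H^{n_2})$. First I would apply Lemma~\ref{lem:m-ext} with $m=k_2$ to $\phi_1$, obtaining a $(k_1 k_2,\,n_1 k_2)$ code
\[
 \Phi_1\colon V(G^{k_1 k_2})\to V(F^{n_1 k_2}),
\]
and with $m=n_1$ to $\phi_2$, obtaining an $(n_1 k_2,\,n_1 n_2)$ code
\[
 \Phi_2\colon V(F^{n_1 k_2})\to V(H^{n_1 n_2}).
\]
The composition $\Phi_2\circ\Phi_1\colon V(G^{k_1 k_2})\to V(H^{n_1 n_2})$ is non-adjacency preserving: if $g\not\sim g'$ in $G^{k_1 k_2}$, then $\Phi_1(g)\not\sim \Phi_1(g')$ in $F^{n_1 k_2}$, and in turn $\Phi_2(\Phi_1(g))\not\sim \Phi_2(\Phi_1(g'))$ in $H^{n_1 n_2}$. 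This exhibits a $(k_1 k_2,\,n_1 n_2)$ code for $(G,H)$, so
\[
 \Ir(H/G) \ge \frac{k_1 k_2}{n_1 n_2} = \frac{k_1}{n_1}\cdot\frac{k_2}{n_2}.
\]
Taking the supremum over admissible $(k_1,n_1)$ for $(G,F)$ and over admissible $(k_2,n_2)$ for $(F,H)$ gives $\Ir(H/G)\ge \Ir(F/G)\,\Ir(H/F)$, as desired.

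There is essentially no obstacle here beyond the block-length matching; the key insight is simply that non-adjacency preservation is closed under composition, and that Lemma~\ref{lem:m-ext} upgrades any code to one with arbitrarily scaled dimensions without changing its rate. One minor caveat is that the suprema defining $\Ir(F/G)$ and $\Ir(H/F)$ need not be attained, but this is harmless: pick sequences $(k_1^{(i)},n_1^{(i)})$ and $(k_2^{(j)},n_2^{(j)})$ with $k_1^{(i)}/n_1^{(i)}\to \Ir(F/G)$ and $k_2^{(j)}/n_2^{(j)}\to \Ir(H/F)$, apply the above construction to each pair, and pass to the limit.
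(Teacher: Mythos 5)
Your proof is correct and follows the same approach as the paper: use Lemma~\ref{lem:m-ext} to rescale both codes to a common intermediate block length $n_1 k_2$, compose the resulting non-adjacency preserving maps, and pass to the supremum. The only differences are minor elaborations (spelling out why composition preserves non-adjacency, and the remark about non-attained suprema), which the paper leaves implicit.
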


\begin{proof}
Applying Lemma~\ref{lem:m-ext}, given a $(k_1,n_1)$ code for the pair $(G,F)$ and a $(k_2,n_2)$ code for the pair $(F,H)$, we can construct a $(k_1k_2,n_1k_2)$ code for the pair $(G,F)$, and an $(n_1k_2,n_1n_2)$ code for the pair $(F,H)$. Denote by $f\suchthat V(G^{k_1k_2}) \to V(F^{n_1k_2})$ and $h\suchthat V(F^{n_1k_2}) \to V(H^{n_1n_2})$ the non-adjacency preserving mappings for the pairs $(G,F)$ and $(F,H)$ respectively. Then, the composition code $h\circ f\suchthat V(G^{k_1k_2}) \to V(H^{n_1n_2})$ is non-adjacency preserving for the pair $(G,H)$. Thus, $\Ir(H/G) \ge \frac{k_1k_2}{n_1n_2}$. Maximizing over all codes for the pairs $(G,F)$ and $(F,H)$ completes the proof.
\end{proof}

A special case of the concatenation scheme leads to the following lower bound on the information ratio.

\begin{theorem}[Separation Scheme] 
\label{thm:separation}
For a source-channel pair $(G,H)$, 
\[
  \Ir(H/G) \ge \frac{\log\Theta(H)}{\log \chif(G)}.
\]
\end{theorem}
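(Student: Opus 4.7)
The plan is simply to invoke the concatenation scheme (Lemma~\ref{lem:concat}) with the intermediate graph chosen to be the two-vertex empty graph $\overline{K_2}$, and then substitute the two extremal cases already computed in Propositions~\ref{prop:cap} and~\ref{prop:chif}. Conceptually, this realises the classical separation architecture: first compress the $G$-source into a binary string (an $\overline{K_2}$-valued sequence), then reliably transmit that binary string over the channel with confusion graph $H$.

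Concretely, I would argue as follows. By Lemma~\ref{lem:concat} applied with $F = \overline{K_2}$,
\[
\Ir(H/G) \ \ge\ \Ir(\overline{K_2}/G)\cdot \Ir(H/\overline{K_2}).
\]
Proposition~\ref{prop:chif} identifies the first factor as $1/\log\chif(G)$, corresponding to the optimal zero-error graph compression rate, while Proposition~\ref{prop:cap} identifies the second factor as $\log\Theta(H)$, the Shannon zero-error capacity. Multiplying these two quantities gives the claimed bound
\[
\Ir(H/G) \ \ge\ \frac{\log\Theta(H)}{\log\chif(G)}.
\]

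There is essentially no obstacle here: every ingredient has been established. The only mild subtlety is that $\log\Theta(H)$ and $1/\log\chif(G)$ are each attained as suprema/limits over increasing block lengths, so formally I would phrase the argument by picking, for every $\varepsilon>0$, a $(k_1,n_1)$ code for $(G,\overline{K_2})$ with $k_1/n_1 \ge 1/\log\chif(G) - \varepsilon$ and a $(k_2,n_2)$ code for $(\overline{K_2},H)$ with $k_2/n_2 \ge \log\Theta(H)-\varepsilon$, invoke Lemma~\ref{lem:concat} on these two codes, and let $\varepsilon\to 0$. The monotonicity of the information ratio (as a supremum over all achievable ratios) then yields the desired inequality.
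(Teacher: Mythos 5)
Your proof is correct and is essentially identical to the paper's: both instantiate Lemma~\ref{lem:concat} with $F=\overline{K_2}$ and then plug in Propositions~\ref{prop:cap} and~\ref{prop:chif}. The $\varepsilon$-approximation you add is a harmless formalization of the paper's standing convention of treating the suprema as attained.
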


\begin{proof}
Combine Propositions~\ref{prop:cap},~\ref{prop:chif}, and~\ref{lem:concat} with $F = \overline{K_2}$ in Lemma~\ref{lem:concat}.
\end{proof}

\begin{remark}
 In the classical probabilistic joint source-channel coding, the separation scheme that compresses the source using the optimal source code and transmits it over the noisy channel using the optimal channel code turns out to be the best possible~\cite{Shannon1959}. However, for our problem, which can be viewed as a zero-error joint source-channel coding problem, the separation scheme can be strictly suboptimal. We illustrate this in the following example. 
\end{remark}

\begin{example}[Separation can be strictly suboptimal]
 Let $G = H = C_5$ be the pentagon (a cycle with five vertices). Clearly, the ``uncoded'' (identity) mapping from $G$ to $H$ is non-adjacency preserving. Thus $\Ir(H/G) \ge 1$. However, the separation scheme only achieves $\frac{\log\Theta(C_5)}{\log \chif(C_5)} = \frac{\log \sqrt{5}}{\log 2.5} = 0.878 < 1$.
\end{example}

We now provide lower bounds for the case where either the source graph or the channel graph is a strong product or a disjoint union. 

\begin{remark}
  For brevity of exposition, we assume throughout our proofs that the information ratios are achieved for finite $(k,n)$. This is of course not necessarily the case, but can be easily dealt with by taking appropriate limits in a trivial way. Furthermore, many of our statements should be understood for $n,k$ large enough, as will be clear from the context. 
\end{remark}

\begin{theorem}[{\bf Information ratio product inequality}]
\label{thm:Ir-prod}
Let $G,H,F$ be graphs. Then
\begin{equation}
 \label{eqn:Ir-prod}
 \Ir(G\boxtimes H/F) \ge \Ir(G/F) + \Ir(H/F).
\end{equation}
\end{theorem}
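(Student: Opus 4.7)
The plan is to build a combined code for $(F, G \boxtimes H)$ by running two sub-codes ``in parallel'' across the channel uses, one targeting the $G$-marginal and one targeting the $H$-marginal, after first padding them to a common block length via Lemma~\ref{lem:m-ext}. Concretely, start with an optimal $(k_1,n_1)$ code $f\suchthat V(F^{k_1})\to V(G^{n_1})$ and an optimal $(k_2,n_2)$ code $h\suchthat V(F^{k_2})\to V(H^{n_2})$. Apply Lemma~\ref{lem:m-ext} with $m=n_2$ and $m=n_1$ respectively to obtain a $(k_1 n_2, n_1 n_2)$ code $\tilde f$ and a $(k_2 n_1, n_1 n_2)$ code $\tilde h$, both non-adjacency preserving and, crucially, with the \emph{same} target length $N \defeq n_1 n_2$.

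Now use the canonical identifications $F^{k_1 n_2 + k_2 n_1} \cong F^{k_1 n_2}\boxtimes F^{k_2 n_1}$ and $(G\boxtimes H)^{N} \cong G^{N}\boxtimes H^{N}$ (which follow from the associativity and commutativity of the strong product), and define
\[
 \phi\suchthat V(F^{k_1 n_2 + k_2 n_1})\to V((G\boxtimes H)^{N}),\qquad \phi(v_1,v_2)\eqdef (\tilde f(v_1),\,\tilde h(v_2)).
\]
To verify that $\phi$ is non-adjacency preserving, suppose $(v_1,v_2)$ and $(v_1',v_2')$ are distinct vertices with $(v_1,v_2)\not\sim(v_1',v_2')$ in the source product. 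By the definition of the strong product, there must exist a coordinate where the two tuples are simultaneously unequal and non-adjacent in $F$; this coordinate lies either in the first block or the second. In the first case, $v_1\neq v_1'$ and $v_1\not\sim v_1'$ in $F^{k_1 n_2}$, so $\tilde f(v_1)\neq \tilde f(v_1')$ and $\tilde f(v_1)\not\sim \tilde f(v_1')$ in $G^N$; hence $\phi(v_1,v_2)$ and $\phi(v_1',v_2')$ differ and are non-adjacent in the first factor of $G^N\boxtimes H^N$, which is exactly what is needed. The second case is symmetric.

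This construction yields a code of rate
\[
 \frac{k_1 n_2 + k_2 n_1}{n_1 n_2} \;=\; \frac{k_1}{n_1} + \frac{k_2}{n_2},
\]
so $\Ir(G\boxtimes H/F)\ge \frac{k_1}{n_1} + \frac{k_2}{n_2}$. Taking the suprema over all codes realizing $\Ir(G/F)$ and $\Ir(H/F)$ (and invoking the remark on achievability in the limit if the suprema are not attained) gives the claimed inequality. The only subtlety in the whole argument, and the step that deserves to be stated carefully, is the case analysis in the non-adjacency check, which hinges on correctly using the ``coordinate-wise equal-or-adjacent'' description of the strong product; once the two sub-codes are aligned to a common block length, everything else is bookkeeping.
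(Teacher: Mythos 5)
Your proof is correct and takes essentially the same approach as the paper's: both run an optimal $(F,G)$-code and an optimal $(F,H)$-code ``in parallel'' to produce a code for $(F,G\boxtimes H)$, and the rate computation is identical. The only difference is that you are more explicit about the bookkeeping — invoking Lemma~\ref{lem:m-ext} to align the two sub-codes to a common block length $N=n_1n_2$, and verifying the non-adjacency preservation coordinate-by-coordinate — whereas the paper simply sets a common denominator $n$ and leaves the verification implicit.
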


\begin{remark}
 Recall the standard definition of graph capacity $C(H) = \log\Theta(H)$. By Proposition~\ref{prop:cap}, when $F = \overline{K_2}$, Theorem~\ref{thm:Ir-prod} recovers Shannon's lower bound on capacity of the strong product of two graphs
\begin{equation}
\label{eqn:Shannon-prod}
  C(G\boxtimes H) \ge C(G) +  C(H).
\end{equation}
Shannon conjectured that equality in~\eqref{eqn:Shannon-prod} holds~\cite{Shannon1956}, which was then disproved by Alon~\cite{Alon1998}. There are graphs for which the inequality can be strict in~\eqref{eqn:Shannon-prod}, and hence also in~\eqref{eqn:Ir-prod}. In Theorem~\ref{thm:gh-prod} and Theorem~\ref{thm:sum-ratio-tight}, we discuss conditions under which equality in~\eqref{eqn:Ir-prod} holds.
\end{remark}

\begin{proof}
We construct a mapping from $F^{k_1}\boxtimes F^{k_2}$ to $G^n \boxtimes H^n$ as follows. Let $k_1/n = \Ir(G/F)$ and map $F^{k_1}$ to $G^n$ using an optimal code for the pair $(F,G)$. Similarly let $k_2/n = \Ir(H/F)$ and map $F^{k_2}$ to $H^n$ using an optimal code for the pair $(F,H)$. Thus, the ratio $(k_1+k_2)/n = \Ir(G/F)+\Ir(H/F)$ is achievable for the pair $(F,G\boxtimes H)$.
\end{proof}

\begin{theorem}[{\bf Information ratio reverse product inequality}]
 \label{thm:Ir-invprod}
 Let $G,H,F$ be graphs. Then,
 \begin{equation}
  \label{eqn:Ir-invprod}
  \Ir(F/G\boxtimes H) \ge \frac{\Ir(F/G)\Ir(F/H)}{\Ir(F/G)+\Ir(F/H)}.
 \end{equation}
\end{theorem}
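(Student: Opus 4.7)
The plan is to dualize the proof of Theorem~\ref{thm:Ir-prod}. In that theorem, the two factors of a product \emph{channel} $G \boxtimes H$ were filled in parallel using disjoint portions of a shared source $F$. Here I will exploit the two factors of a product \emph{source} $G \boxtimes H$: I encode each coordinate separately into $F$ via its own code, and then concatenate the two $F$-side outputs into a single longer $F$-sequence.

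To carry this out, let $r_1 = \Ir(F/G)$ and $r_2 = \Ir(F/H)$. First, I would take (nearly) optimal codes $\phi_G : V(G^{k_1}) \to V(F^{n_1})$ and $\phi_H : V(H^{k_2}) \to V(F^{n_2})$ with $k_i/n_i \approx r_i$, treating the suprema as attained per the remark preceding this theorem. Next, I would synchronize source lengths via Lemma~\ref{lem:m-ext} to produce $\phi_G' : V(G^k) \to V(F^{n_1 k_2})$ and $\phi_H' : V(H^k) \to V(F^{n_2 k_1})$ with common source length $k := k_1 k_2$. Using the identifications $(G\boxtimes H)^k = G^k \boxtimes H^k$ and $F^{a} \boxtimes F^{b} = F^{a+b}$, I would then define
\begin{equation*}
\psi(g,h) \;\eqdef\; \bigl(\phi_G'(g),\, \phi_H'(h)\bigr) \;\in\; V\bigl(F^{n_1 k_2 + n_2 k_1}\bigr).
\end{equation*}

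The key verification, and the only step with any real content, is that $\psi$ preserves non-adjacency. If $(g_1,h_1)$ and $(g_2,h_2)$ are non-adjacent in $G^k \boxtimes H^k$, then some factor witnesses this, say $g_1 \not\sim g_2$ in $G^k$; since $\phi_G'$ is non-adjacency preserving, $\phi_G'(g_1) \not\sim \phi_G'(g_2)$ in $F^{n_1 k_2}$, and a single non-adjacent coordinate in a strong product already forces non-adjacency in $F^{n_1 k_2 + n_2 k_1}$. Hence $\psi$ is a valid $(k,\, n_1 k_2 + n_2 k_1)$ code for $(G\boxtimes H, F)$, achieving the ratio
\begin{equation*}
\frac{k_1 k_2}{n_1 k_2 + n_2 k_1} \;=\; \frac{1}{n_1/k_1 + n_2/k_2}.
\end{equation*}
Optimizing over the two input codes, this tends to $r_1 r_2/(r_1+r_2)$, which gives the desired bound. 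I do not foresee any substantive obstacle: the construction is structural and mirrors, in its logic, the combining argument used in Theorem~\ref{thm:Ir-prod}; no deeper tools from later in the paper are needed.
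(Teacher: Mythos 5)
Your proof is correct and follows essentially the same approach as the paper: both construct a code for $(G\boxtimes H, F)$ by mapping the $G$-factor and $H$-factor independently into two separate blocks of $F$-coordinates, and both verify achievability of $\frac{k}{n_1+n_2}$. Your version is merely a bit more explicit in invoking Lemma~\ref{lem:m-ext} to synchronize the two source lengths, which the paper handles implicitly by choosing a common $k$ in the first place.
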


\begin{remark}
 By Proposition~\ref{prop:chif}, when $F = \overline{K_2}$, Theorem~\ref{thm:Ir-invprod} holds with equality. This is the well-known fact that $\chif(G\boxtimes H) = \chif(G)\chif(H)$. In Theorem~\ref{thm:gh-prod} and Theorem~\ref{thm:inv-ratio-tight}, we discuss other conditions under which equality in~\eqref{eqn:Ir-invprod} holds. 
\end{remark}

\begin{proof}
We construct a mapping from $G^{k}\boxtimes H^k$ to $F^{n_1} \boxtimes F^{n_2}$ as follows. Let $k/n_1 = \Ir(F/G)$ and map $G^{k}$ to $F^{n_1}$ using an optimal code for the pair $(G,F)$. Similarly, let $k/n_2 = \Ir(F/H)$ and map $H^k$ to $F^{n_2}$ using an optimal code for the pair $(H,F)$. Thus, the ratio $\frac{k}{n_1+n_2} = \frac{\Ir(F/G)\Ir(F/H)}{\Ir(F/G)+\Ir(F/H)}$ is achievable for the pair $(G\boxtimes H,F)$.
\end{proof}

\begin{theorem}[{\bf Information ratio power inequality}]
 \label{thm:union}
 Let $G,H,F$ be graphs. Then,
\begin{equation}
\label{eqn:sum-ratio}
 \chif(F)^{\Ir(G+H/F)} \ge \chif(F)^{\Ir(G/F)} + \chif(F)^{\Ir(H/F)}.
\end{equation}
\end{theorem}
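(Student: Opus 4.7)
The plan is to build, for any $\epsilon>0$, an explicit $(K,N)$-code whose rate $K/N$ is within $\epsilon$ of $\log_{\chif(F)}(\chif(F)^\alpha+\chif(F)^\beta)$, where $\alpha:=\Ir(G/F)$ and $\beta:=\Ir(H/F)$; this will give~\eqref{eqn:sum-ratio}. First I would fix three ingredients: near-optimal codes $\phi_G\colon V(F^{k_1})\to V(G^{n_1})$ and $\phi_H\colon V(F^{k_2})\to V(H^{n_2})$ with $k_1/n_1$ and $k_2/n_2$ within $\epsilon$ of $\alpha,\beta$, and a near-optimal source-compression homomorphism $\psi\colon\overline{F^{k_3}}\to K_L$ (equivalently, a proper $L$-coloring of $\overline{F^{k_3}}$) with $k_3\sim \log_{\chif(F)}L$, which is guaranteed by Proposition~\ref{prop:chif} combined with the limit $\chi(\overline{F}^{\vee k})^{1/k}\to\chif(F)$.

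Next, for large integers $a,b$ to be optimized, I would set $N=an_1+bn_2$, $j=an_1$, $L=\binom{N}{j}$, and $K=ak_1+bk_2+k_3$, and fix a bijection between $\{1,\dots,L\}$ and the set of type arrangements $\mathbf{t}\in\{G,H\}^N$ with $|\{i\colon t_i=G\}|=j$. Partitioning a source $\mathbf{f}\in V(F^K)$ as $\mathbf{f}=(\mathbf{f}_G,\mathbf{f}_H,\mathbf{f}_\psi)$ of respective lengths $ak_1,bk_2,k_3$, the encoder outputs the sequence in $V((G+H)^N)$ whose arrangement is determined by $\psi(\mathbf{f}_\psi)$, with $\phi_G^a(\mathbf{f}_G)\in V(G^{an_1})$ placed in the $j$ $G$-slots and $\phi_H^b(\mathbf{f}_H)\in V(H^{bn_2})$ placed in the $N-j$ $H$-slots.

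To verify non-adjacency preservation, I would case on where the distinguishing coordinate lies for a pair $\mathbf{f}\ne\mathbf{f}'$ that is non-adjacent in $F^K$. In the $\mathbf{f}_G$ case, $\phi_G^a$ produces distinct non-adjacent sequences in $G^{an_1}$, which placed in the $G$-slots yield non-adjacency in $(G+H)^N$: either directly when the two arrangements coincide (the $G$-slot positions agree and $G$-non-adjacency carries over into $G+H$ within $V(G)$), or via the fact that any two coordinates of $(G+H)^N$ with differing types lie in the disjoint sets $V(G),V(H)$ and are hence non-adjacent in $G+H$. The $\mathbf{f}_H$ case is symmetric, and in the $\mathbf{f}_\psi$ case $\psi$ outputs distinct labels (as a hom into $K_L$), so the two arrangements differ and the argument via type mismatch applies.

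Finally, I would let $a,b\to\infty$ with $a/(a+b)\to\lambda\in[0,1]$, so that $s:=j/N\to \lambda n_1/(\lambda n_1+(1-\lambda)n_2)$ sweeps $[0,1]$; Stirling gives $k_3/N\to h(s)/\log_2\chif(F)$ (with $h$ the binary entropy in bits), and hence
\[
\frac{K}{N}\;\longrightarrow\;\alpha s+\beta(1-s)+\frac{h(s)}{\log_2\chif(F)}.
\]
Optimizing in $s\in[0,1]$ via $h'(s)=\log_2((1-s)/s)$ yields the maximizer $s^\ast=\chif(F)^\alpha/(\chif(F)^\alpha+\chif(F)^\beta)$, at which the expression simplifies to $\log_{\chif(F)}(\chif(F)^\alpha+\chif(F)^\beta)$ by direct substitution. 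The main subtle point will be the integer approximation $k_3\sim\log_{\chif(F)}\binom{N}{j}$, resolved by the limit definition of $\chif$; the rest is bookkeeping.
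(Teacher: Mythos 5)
Your construction is correct and is essentially the paper's own proof: you restrict to the $\binom{N}{j}G^{j}\boxtimes H^{N-j}$ component of $(G+H)^{N}$, spend $k_3\approx \log_{\chif(F)}\binom{N}{j}$ source symbols on selecting the copy via a clique cover of $F^{k_3}$, use optimal $(F,G)$ and $(F,H)$ codes on the two blocks, and optimize the resulting rate $\alpha s+\beta(1-s)+h(s)/\log\chif(F)$ over $s$, arriving at the same $s^\ast$. Your explicit case analysis for non-adjacency preservation (same versus differing type arrangements) is a correct fleshing-out of a step the paper leaves implicit, but the argument is the same.
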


\begin{remark}
 Recall the standard definition of graph capacity $C(H) = \log\Theta(H)$. By Proposition~\ref{prop:chif}, when $F = \overline{K_2}$, Theorem~\ref{thm:union} recovers Shannon's lower bound on the capacity of disjoint union of two graphs
 \begin{equation}
 \label{eqn:sum-cap}
  2^{C(G+H)} \ge 2^{C(G)} + 2^{C(H)}.
 \end{equation}
Shannon conjectured that equality in~\eqref{eqn:sum-cap} holds~\cite{Shannon1956}, which was then disproved by Alon~\cite{Alon1998}. There are graphs for which the inequality in~\eqref{eqn:sum-cap} can be strict, hence also in~\eqref{eqn:sum-ratio}. 
\end{remark}

\begin{proof}
Note that $(G+H)^n \cong \sum_{i=0}^n \binom{n}{i} G^i\boxtimes H^{n-i}$. Let $i = \a n$. We consider a mapping from $F^{k_1}\boxtimes F^{k_2}\boxtimes F^{k_3}$ to $\binom{n}{\a n} G^{\a n}\boxtimes H^{(1-\a)n}$. We map $F^{k_1}$ to an empty graph with $\binom{n}{\a n}$ vertices. This can be done if $k_1\log\chif(G) \leq   \log\binom{n}{\a n}\approx nh(\a)$, where $h(\a) = -\a\log\a-(1-\a)\log(1-\a)$ is the binary entropy function. We set $\frac{k_2}{\a n} = \Ir(G/F)$ and map $F^{k_2}$ to $G^{\a n}$ with the optimal ratio. We set $\frac{k_3}{(1-\a)n} = \Ir(H/F)$ and map $F^{k_3}$ to $H^{(1-\a)n}$ with the optimal ratio. In summary, we have 
\[
\Ir(G+H/F) \ge \frac{k_1+k_2+k_3}{n} = \frac{h(\a)}{\log\chif(F)} + \a \Ir(G/F) + (1-\a)\Ir(H/F)
\]
for any $\a \in [0,1]$. Taking derivative with respect to $\alpha$, we obtain the $\alpha$ that maximizes the lower bound:
\[
 \a^\ast = \frac{\chif(F)^{\Ir(G/F)}}{\chif(F)^{\Ir(G/F)} + \chif(F)^{\Ir(H/F)}}.
\]
Plugging it in, we have
\begin{align*}
\frac{h(\a^\ast)}{\log\chif(F)}&= \frac{-\a^\ast \log (\a^\ast) - (1-\a^\ast)\log (1-\a^\ast)}{\log\chif(F)} \\
&\stackrel{(a)}{=} -\a^\ast \log_{\chif(F)} (\a^{\ast}) - (1-\a^\ast) \log_{\chif(F)} (1-\a^\ast)\\
&= -\a^\ast \log_{\chif(F)}\left(\frac{\chif(F)^{\Ir(G/F)}}{\chif(F)^{\Ir(G/F)} + \chif(F)^{\Ir(H/F)}}\right)\\
&\hspace{1.2em}-(1-\a^\ast) \log_{\chif(F)} \left(\frac{\chif(F)^{\Ir(H/F)}}{\chif(F)^{\Ir(G/F)} + \chif(F)^{\Ir(H/F)}}\right)\\
&= -\a^\ast \Ir(G/F) -(1-\a^\ast)\Ir(H/F)+\log_{\chif(F)} \left(\chif(F)^{\Ir(G/F)} + \chif(F)^{\Ir(H/F)}\right),
\end{align*}
where in $(a)$, we change the base of the logarithm from $2$ to $\chif(F)$. Thus,
\begin{align*}
 \Ir(G+H/F)  &\ge \frac{h(\a^\ast)}{\log\chif(F)} + \a^\ast \Ir(G/F) + (1-\a^\ast)\Ir(H/F)\\
 &= \log_{\chif(F)} \left(\chif(F)^{\Ir(G/F)} + \chif(F)^{\Ir(H/F)}\right).
\end{align*}
Finally, noting that $\chif(F) \ge 1$ completes the proof. 
\end{proof}

\section{Identities}
In this short section, we provide a few information ratio identities. 
\begin{theorem}
\label{thm:gh-prod}
For any graphs $G$ and $H$,
\begin{align}
 \Ir(G \boxtimes H/G) &= 1+\Ir(H/G), \label{eqn:prod1}\\
 \Ir(G/G\boxtimes H) &= \frac{\Ir(G/H)}{1+\Ir(G/H)}. \label{eqn:prod2}
\end{align}
\end{theorem}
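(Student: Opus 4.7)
Both identities split into matching lower and upper bounds. The lower bounds are direct applications of the two product inequalities with $F=G$. For~\eqref{eqn:prod1}, apply Theorem~\ref{thm:Ir-prod} with $F=G$ and use the trivial estimate $\Ir(G/G)\ge 1$ (the identity map is non-adjacency preserving), obtaining
\[
\Ir(G\boxtimes H/G)\ \ge\ \Ir(G/G)+\Ir(H/G)\ \ge\ 1+\Ir(H/G).
\]
For~\eqref{eqn:prod2}, apply Theorem~\ref{thm:Ir-invprod} with $F=G$. The function $a\mapsto ax/(a+x)$ is monotonically increasing in $a>0$ for fixed $x>0$, so substituting $\Ir(G/G)\ge 1$ yields
\[
\Ir(G/G\boxtimes H)\ \ge\ \frac{\Ir(G/G)\,\Ir(G/H)}{\Ir(G/G)+\Ir(G/H)}\ \ge\ \frac{\Ir(G/H)}{1+\Ir(G/H)}.
\]

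For the upper bounds, the key idea is to insert $G\boxtimes H$ as the intermediate graph in Lemma~\ref{lem:concat}, and invoke the other product inequality on the resulting factor. For~\eqref{eqn:prod1}, factoring the transition $G\to H$ through $G\boxtimes H$ gives
\[
\Ir(H/G)\ \ge\ \Ir(G\boxtimes H/G)\cdot \Ir(H/G\boxtimes H).
\]
Bounding the second factor by Theorem~\ref{thm:Ir-invprod} with $F=H$ and $\Ir(H/H)\ge 1$ (again via monotonicity in the numerator) yields $\Ir(H/G\boxtimes H)\ge \Ir(H/G)/(1+\Ir(H/G))$; substituting and cancelling a factor of $\Ir(H/G)$ produces $\Ir(G\boxtimes H/G)\le 1+\Ir(H/G)$. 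The proof of~\eqref{eqn:prod2} is dual: factor $H\to G$ through $G\boxtimes H$,
\[
\Ir(G/H)\ \ge\ \Ir(G\boxtimes H/H)\cdot \Ir(G/G\boxtimes H),
\]
and bound the first factor by Theorem~\ref{thm:Ir-prod} with $F=H$ and $\Ir(H/H)\ge 1$ to obtain $\Ir(G\boxtimes H/H)\ge \Ir(G/H)+1$. Dividing gives $\Ir(G/G\boxtimes H)\le \Ir(G/H)/(1+\Ir(G/H))$.

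The only subtlety is that the cancellation step for~\eqref{eqn:prod1} tacitly requires $\Ir(H/G)>0$. The degenerate case $\Ir(H/G)=0$ forces $H$ to be essentially complete (so that $\chif(H)=1$), in which case $G\boxtimes H$ is a clique blow-up of $G$ and both sides of~\eqref{eqn:prod1} equal $1$, as can be verified by a direct matching construction. No analogous issue arises for~\eqref{eqn:prod2}, because $\Ir(G\boxtimes H/H)\ge \Ir(H/H)\ge 1>0$ is automatic; this is the only place where I expect even a minor complication.
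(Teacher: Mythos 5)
Your proof is correct, and it is built from the same ingredients as the paper's (Lemma~\ref{lem:concat} plus the two product inequalities with $F$ chosen from $\{G,H\}$), but the organization differs in a worthwhile way. The paper proves only two one-sided bounds, $\Ir(G\boxtimes H/G)\ge 1+\Ir(H/G)$ and $\Ir(H/G\boxtimes H)\ge\Ir(H/G)/(1+\Ir(H/G))$ (the latter by a direct construction, which is really a special case of Theorem~\ref{thm:Ir-invprod} with $F=H$), and then closes the sandwich in a single stroke: multiplying the two lower bounds gives at least $\Ir(H/G)$, while Lemma~\ref{lem:concat} through $G\boxtimes H$ gives at most $\Ir(H/G)$, forcing both bounds to be tight; \eqref{eqn:prod2} is then read off by swapping $G$ and $H$. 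Your version instead establishes the matching upper bounds for each identity separately, using Lemma~\ref{lem:concat} followed by a lower bound on the remaining factor and a cancellation. The cost is that you effectively run two parallel derivations and must divide by $\Ir(H/G)$, whereas the paper's sandwich never divides; the gain is that both identities are proved symmetrically and directly as stated, without needing the $G\leftrightarrow H$ relabeling. Your explicit remark about the degenerate case $\Ir(H/G)=0$ (i.e.\ $H$ complete, $G$ not, so $\overline{G\boxtimes H}\leftrightarrow\overline{G}$ and both sides of~\eqref{eqn:prod1} are $1$) is in fact a gap that the paper's own argument also does not formally cover — the sandwich then only shows $\Ir(G\boxtimes H/G)\cdot\Ir(H/G\boxtimes H)=0$ without pinning down the first factor — so flagging and patching it is a genuine improvement in rigor.
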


\begin{corollary}
\label{cor:gh-prod}
For any graphs $G$ and $H$,
 \[
  \Ir(H/G) = \Ir(G\boxtimes H/G) \Ir(H/G\boxtimes H).
 \]
\end{corollary}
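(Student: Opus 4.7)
The plan is to derive the identity directly from Theorem~\ref{thm:gh-prod} by reading off the two factors on the right-hand side as instances of \eqref{eqn:prod1} and \eqref{eqn:prod2}.

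First, I would apply \eqref{eqn:prod1} directly to evaluate the first factor as $\Ir(G\boxtimes H/G) = 1+\Ir(H/G)$. For the second factor $\Ir(H/G\boxtimes H)$, the relevant statement is \eqref{eqn:prod2} but with the roles of $G$ and $H$ interchanged: specifically, substituting $G\leftrightarrow H$ in \eqref{eqn:prod2} yields $\Ir(H/H\boxtimes G) = \Ir(H/G)/(1+\Ir(H/G))$. Since the strong product is commutative (up to graph isomorphism), $H\boxtimes G \cong G\boxtimes H$, and the information ratio is invariant under graph isomorphism of either argument, so $\Ir(H/G\boxtimes H) = \Ir(H/G)/(1+\Ir(H/G))$.

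Multiplying the two expressions then gives
\begin{equation*}
\Ir(G\boxtimes H/G)\,\Ir(H/G\boxtimes H) = \bigl(1+\Ir(H/G)\bigr)\cdot\frac{\Ir(H/G)}{1+\Ir(H/G)} = \Ir(H/G),
\end{equation*}
which is the desired identity. There is essentially no obstacle here; the only point worth being explicit about is the symmetry argument used to pass from \eqref{eqn:prod2} to the formula for $\Ir(H/G\boxtimes H)$, which is justified by commutativity of $\boxtimes$. One may also note the conceptual reading of the identity: the concatenation scheme of Lemma~\ref{lem:concat} applied via the intermediate graph $F = G\boxtimes H$ yields $\Ir(H/G)\ge \Ir(G\boxtimes H/G)\Ir(H/G\boxtimes H)$, and Theorem~\ref{thm:gh-prod} shows that this concatenation bound is in fact tight.
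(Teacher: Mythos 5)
Your proof is correct and hinges on the same algebraic cancellation the paper uses; the only structural difference is that you treat Theorem~\ref{thm:gh-prod} as already established and derive the corollary by direct substitution (using commutativity of $\boxtimes$ to relabel), whereas the paper proves the theorem and corollary jointly, letting the corollary fall out of the sandwiching chain $\Ir(H/G)\ge\Ir(G\boxtimes H/G)\Ir(H/G\boxtimes H)\ge\Ir(H/G)$. Your closing remark about Lemma~\ref{lem:concat} with $F=G\boxtimes H$ being tight is in fact exactly the paper's line of argument, so the two proofs are essentially the same.
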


\begin{proof}
Taking $F = G$ in~\eqref{eqn:Ir-prod} and noting that the identity mapping from $G$ to $G$ is non-adjacency preserving, we have
\begin{equation}
\label{eqn:prod1-ge}
 \Ir(G\boxtimes H/G) \ge \Ir(G/G) + \Ir(H/G) \ge 1 + \Ir(H/G).
\end{equation}

Now, let us show that
\begin{equation}
 \label{eqn:prod2-ge}
 \Ir(H/G\boxtimes H) \ge \frac{\Ir(H/G)}{1+\Ir(H/G)}.
\end{equation}
Consider a mapping from $G^k \boxtimes H^k$ to $H^{n_1} \boxtimes H^{n_2}$. We set $k/{n_1} = \Ir(H/G)$ and map $G^k$ to $H^{n_1}$ using an optimal code for the pair $(G,H)$. We set $k = n_2$ and map $H^k$ to $H^{n_2}$ with the identity mapping. Thus the ratio $k/(n_1+n_2) = \frac{\Ir(H/G)}{1+\Ir(H/G)}$ is achievable for the source channel pair $(G\boxtimes H, H)$.

Finally, combining~\eqref{eqn:prod1-ge} and~\eqref{eqn:prod2-ge}, we obtain
\begin{align*}
 \Ir(H/G) &\stackrel{(a)}{\ge} \Ir(G\boxtimes H/G) \Ir(H/G\boxtimes H)\\
 &\ge (1 + \Ir(H/G))\left(\frac{\Ir(H/G)}{1+\Ir(H/G)}\right)\\
 &= \Ir(H/G),
\end{align*}
where $(a)$ follows by virtue of Lemma~\ref{lem:concat} with $F = G\boxtimes H$. Clearly, it must be that both inequalities above hold with equality. In particular,~\eqref{eqn:prod1-ge} holds with equality, which is exactly~\eqref{eqn:prod1}. Also,~\eqref{eqn:prod2-ge} holds with equality, which yields~\eqref{eqn:prod2}. Finally, the equality in $(a)$ establishes Corollary~\ref{cor:gh-prod}.
\end{proof}

\section{Upper Bounds}
\label{sec:upper-bd}

In this section, we study upper bounds on the information ratio. To that end, we reformulate the information ratio problem in the language of graph homomorphisms.  
 
\subsection{General Graph Pairs}

Let $G = (V(G),E(G))$ and $H = (V(H),E(H))$ be two graphs. A \emph{graph homomorphism} from $G$ to $H$, written as $G \to H$, is a mapping $f\suchthat V(G) \to V(H)$ such that $f(g_1)\sim f(g_2)$ in $H$ whenever $g_1 \sim g_2$ in $G$. Recall that $G_1 \vee G_2$ denotes the OR product of $G_1$ and $G_2$, i.e., two vertices $(g_1,g_2)\sim (g_1',g_2')$ are connected if $g_1 \sim g_1'$ or $g_2 \sim g_2'$. We denote by $G^{\vee k}$ the $k$-fold OR product of $G$. By definition, we have $\overline{G\boxtimes H} = \overline{G}\vee \overline{H}$. Since graph homomorphism is the complementary notion of a non-adjacency preserving mapping, the information ratio problem can be trivially reformulated as a graph homomorphism existence questions, as spelled out in the following simple lemma.  
\begin{lemma}
\label{lem:graph-hom}
 A $(k,n)$ code for the pair $(G,H)$ exists if and only if there exists a graph homomorphism $\overline{G^k} \to \overline{H^n}$, or equivalently $\overline{G}^{\vee k} \to \overline{H}^{\vee n}$. 
\end{lemma}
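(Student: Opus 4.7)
The plan is to verify both equivalences by straightforward unwinding of definitions, relying on the duality between the strong and OR products under complementation.

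First I would establish $\overline{G^k} = \overline{G}^{\vee k}$ by iterating the basic identity $\overline{G \boxtimes H} = \overline{G} \vee \overline{H}$ stated just before the lemma. To see that identity directly from first principles, note that two distinct vertices $(g_1, g_2)$ and $(g_1',g_2')$ are adjacent in $G \boxtimes H$ iff for each coordinate either equality or adjacency holds; taking the complement, distinct vertices are adjacent in $\overline{G \boxtimes H}$ iff there is at least one coordinate in which they are distinct and non-adjacent in the base graph, which is precisely the OR-product adjacency condition in $\overline{G} \vee \overline{H}$. Iterating $k$ times gives the desired identity, so the two conditions in the lemma are equivalent.

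Second, I would show that non-adjacency preserving mappings and complement-graph homomorphisms are the same object. A mapping $f \suchthat V(G^k) \to V(H^n)$ is a $(k,n)$ code iff for every pair of distinct vertices $g, g'$ with $g \not\sim g'$ in $G^k$, the images satisfy $f(g) \neq f(g')$ and $f(g) \not\sim f(g')$ in $H^n$ — the distinctness being implicit in the operational requirement that the decoder be able to recover a list of sources all confusable with the true one. Rephrased in complement language, $g \not\sim g'$ in $G^k$ with $g \neq g'$ is exactly $g \sim g'$ in $\overline{G^k}$, and $f(g) \neq f(g')$ with $f(g) \not\sim f(g')$ in $H^n$ is exactly $f(g) \sim f(g')$ in $\overline{H^n}$. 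So the defining condition of a $(k,n)$ code is identical to the defining condition of a graph homomorphism $\overline{G^k} \to \overline{H^n}$.

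Combining the two steps yields both equivalent formulations in the statement. There is no real obstacle: the lemma is a bookkeeping exercise, and the only mild subtlety is making explicit the distinctness requirement built into the operational definition of a code, which matches the vertex-injectivity required of a graph homomorphism between adjacent pairs.
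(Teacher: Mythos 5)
Your argument follows essentially the same route as the paper's: unwind the definition of a non-adjacency preserving map into that of a homomorphism between the complement graphs, and invoke the identity $\overline{G\boxtimes H}=\overline{G}\vee\overline{H}$ for the second formulation. You are slightly more careful than the paper in flagging the distinctness requirement $\phi(g)\neq\phi(g')$ for distinct non-adjacent $g,g'$: the paper's literal phrasing of ``non-adjacency preserving'' does not state this outright, but it is forced by the operational list-decoding requirement and is precisely what the condition $\phi(g)\sim\phi(g')$ in $\overline{H^n}$ (a loop-free graph) encodes.
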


\begin{proof}
 A $(k,n)$ code exists if and only if there is a mapping $f\suchthat V(G^k) \to V(H^n)$ such that $f(x^k) \nsim f(y^k)$ in $H^n$ whenever $x^k \nsim y^k$ in $G^k$. This is equivalent as $f(x^k) \sim f(y^k)$ in $\overline{H^n}$ whenever $x^k \sim y^k$ in $\overline{G^k}$, which is definition of a graph homomorphism $\overline{G^k} \to \overline{H^n}$.
\end{proof}

Next, we recall several well-known necessary conditions for the existence of graph homomorphisms. 

\begin{lemma}[Hom-monotone functions]
\label{lem:hom-mono}
 If there exists a graph homomorphism $G \to H$, then
 \begin{enumerate}
  \item the independence numbers satisfy $\a(\overline{G}) \le \a(\overline{H})$~\cite{Godsil--Royle2001};
  \item the Lov{\'a}sz theta functions~\cite{Lovasz1979} satisfy $\vartheta(\overline{G}) \le \vartheta(\overline{H})$~\cite[Section 4]{de-Carli-Silva--Tuncel2013};
  \item the chromatic numbers satisfy $\chi(G) \le \chi(H)$~\cite{Godsil--Royle2001}.
 \end{enumerate}
\end{lemma}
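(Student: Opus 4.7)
These three facts are classical, with the cited references providing the full statements. The plan is to give brief, self-contained arguments: direct structural arguments for (1) and (3), and an orthonormal-representation pull-back for (2).

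For (1), I would observe that a homomorphism $\phi\suchthat G \to H$ maps any clique injectively to a clique, since adjacent distinct vertices in $G$ must map to adjacent (hence distinct, as $H$ has no self-loops) vertices in $H$. Applied to a maximum clique in $G$, this yields $\omega(G) \le \omega(H)$, which is precisely $\alpha(\overline{G}) \le \alpha(\overline{H})$.

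For (3), I would compose a proper coloring $c\suchthat V(H) \to [\chi(H)]$ with $\phi$ to obtain a proper coloring $c \circ \phi$ of $G$: if $g_1 \sim g_2$ in $G$, then $\phi(g_1) \sim \phi(g_2)$ in $H$ and so $c(\phi(g_1)) \ne c(\phi(g_2))$. This immediately gives $\chi(G) \le \chi(H)$.

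For (2), I would use Lov{\'a}sz's min characterization
\[
  \vartheta(\overline{K}) \;=\; \min_{\{u_k\},\, c}\; \max_{k \in V(K)} \frac{1}{(c^T u_k)^2},
\]
where the minimum is over unit $c$ and orthonormal representations $\{u_k\}$ of $\overline{K}$, i.e., families of unit vectors satisfying $u_k \perp u_{k'}$ whenever $k \sim k'$ in $K$ and $k \ne k'$. Given any such pair $(\{u_h\}, c)$ realizing $\vartheta(\overline{H})$, I would pull back via $v_g := u_{\phi(g)}$. This is an orth rep of $\overline{G}$: for $g \sim g'$ in $G$ with $g \ne g'$, the images $\phi(g)$ and $\phi(g')$ are adjacent and distinct in $H$, forcing $u_{\phi(g)} \perp u_{\phi(g')}$. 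Since $\max_g (c^T u_{\phi(g)})^{-2} \le \max_h (c^T u_h)^{-2}$, minimizing over the $H$-level data yields $\vartheta(\overline{G}) \le \vartheta(\overline{H})$. The step that requires care here, and is the main obstacle to a naive proof, is the choice of SDP formulation: the more common primal (max) SDP of $\vartheta$ does not pull back cleanly through $\phi$, because $\phi$ preserves edges but not non-edges and the primal constraints sit on the non-edges. The min/orth-rep formulation sidesteps this by placing the constraints exclusively on edges, which is exactly the structure that graph homomorphisms respect.
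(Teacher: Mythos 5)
The paper does not prove this lemma; it simply cites \cite{Godsil--Royle2001}, \cite{Lovasz1979}, and \cite{de-Carli-Silva--Tuncel2013} for the three parts, so there is no in-paper argument to compare against. Your proofs are correct and are the standard ones. For (1), noting $\alpha(\overline{G})=\omega(G)$ and that a homomorphism restricted to a clique is injective (since adjacent vertices map to adjacent, hence distinct, vertices in a loopless $H$) gives $\omega(G)\le\omega(H)$. For (3), the pull-back of a proper coloring along $\phi$ is indeed proper on $G$. For (2), the pull-back $v_g := u_{\phi(g)}$ of an orthonormal representation from the Lov\'asz minimum (``umbrella'') characterization is sound: the orthogonality constraints in that formulation sit on edge pairs, which is exactly what $\phi$ preserves, and the value can only decrease since $\{\phi(g)\}\subseteq V(H)$. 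Your observation about why the primal max-SDP does not pull back directly (its constraints sit on non-edges, which a homomorphism need not preserve) is correct and is precisely the point of the cited reference \cite[Section 4]{de-Carli-Silva--Tuncel2013}; picking the edge-constrained formulation is the key idea, and you have identified it.
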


We now use the above relations to obtain upper bounds on the information ratio. Recall the notation $\chif(H) \eqdef \chi_f(\overline{H})$.
\begin{theorem}[Hom-monotone upper bounds]
For a pair $(G,H)$, 
\label{thm:meta-upb}
\[
 \Ir(H/G) \le \min\left\{\frac{\log\Theta(H)}{\log\Theta(G)},\, \frac{\log \vartheta(H)}{\log \vartheta(G)},\, \frac{\log \chif(H)}{\log \chif(G)}\right\}.
\]
\end{theorem}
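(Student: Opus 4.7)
The plan is to translate the existence of a $(k,n)$ code into the existence of a graph homomorphism via Lemma~\ref{lem:graph-hom}, and then push each of the three hom-monotone functionals listed in Lemma~\ref{lem:hom-mono} through an appropriate multiplicativity identity for the strong product. Concretely, any achievable ratio $k/n$ for the pair $(G,H)$ yields a homomorphism $\overline{G^k}\to\overline{H^n}$, and combining with Lemma~\ref{lem:m-ext} gives, for every integer $m\ge 1$, a homomorphism $\overline{G^{km}}\to\overline{H^{nm}}$ as well.

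Feeding these powered homomorphisms into the three items of Lemma~\ref{lem:hom-mono} produces three parallel chains. Item~(1) gives $\alpha(G^{km})\le\alpha(H^{nm})$ for every $m$; raising to the $1/m$ power and sending $m\to\infty$ yields $\Theta(G)^k\le\Theta(H)^n$, using $\alpha(G^{km})^{1/m}\to\Theta(G)^k$ from the definition of the Shannon capacity. Item~(3) gives $\chi(\overline{G^{km}})\le\chi(\overline{H^{nm}})$; taking $m$-th roots and letting $m\to\infty$ yields $\chif(G^k)\le\chif(H^n)$, and the multiplicativity of $\chi_f$ under the OR product (equivalently, of $\chif$ under the strong product) promotes this to $\chif(G)^k\le\chif(H)^n$. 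Item~(2) applied to the single homomorphism $\overline{G^k}\to\overline{H^n}$ gives $\vartheta(G^k)\le\vartheta(H^n)$ directly, and the exact multiplicativity of $\vartheta$ under the strong product then yields $\vartheta(G)^k\le\vartheta(H)^n$ without any limiting step.

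Taking logarithms in each of the three resulting inequalities produces $k/n\le\log\phi(H)/\log\phi(G)$ for $\phi\in\{\Theta,\vartheta,\chif\}$; the degenerate case $\phi(G)=1$ (which happens only when $G$ is complete) renders the right-hand side $+\infty$ and the bound is vacuous. Supremizing over achievable ratios $k/n$ and taking the minimum over the three functionals yields the claim. The whole argument is essentially a routine application of the hom-monotone framework once one exploits the duality $\overline{G^k}=\overline{G}^{\vee k}$; the only bookkeeping point to watch is making sure the multiplicativity statement for each of $\vartheta$ and $\chif$ is invoked on the correct side of this duality, and I do not anticipate any genuine obstacle.
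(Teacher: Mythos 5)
Your argument matches the paper's proof essentially step for step: translate the $(k,n)$ code to a homomorphism $\overline{G^{km}}\to\overline{H^{nm}}$ via Lemmas~\ref{lem:graph-hom} and~\ref{lem:m-ext}, apply the three hom-monotone functionals of Lemma~\ref{lem:hom-mono}, and pass to the limit in $m$ (or invoke exact multiplicativity for $\vartheta$). The minor reorganization you use for $\chif$ (first obtaining $\chif(G^k)\le\chif(H^n)$ and then peeling off the power via multiplicativity) is a cosmetic difference from the paper's direct normalization by $km$, not a distinct approach.
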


Applying any one of the upper bounds twice, we get the following corollary, which will be useful later.

\begin{corollary}
 \label{cor:meta-upb}
 For any graphs $G$ and $H$, 
\begin{equation}
\label{eqn:reciprocal}
\Ir(H/G)\Ir(G/H) \le 1.
\end{equation}
\end{corollary}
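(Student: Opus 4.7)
The proof plan is essentially a direct application of Theorem~\ref{thm:meta-upb}, used once in each direction. Pick any of the three hom-monotone quantities listed in that theorem, say $\Theta$, and apply the bound to the ordered source-channel pair $(G,H)$ to obtain
\[
  \Ir(H/G) \le \frac{\log\Theta(H)}{\log\Theta(G)}.
\]
Then apply the identical bound with the roles of source and channel swapped, namely to the pair $(H,G)$:
\[
  \Ir(G/H) \le \frac{\log\Theta(G)}{\log\Theta(H)}.
\]
Multiplying these two inequalities, the right-hand sides cancel and yield the desired bound $\Ir(H/G)\Ir(G/H)\le 1$. The same proof works verbatim with $\vartheta(\cdot)$ or $\chif(\cdot)$ in place of $\Theta(\cdot)$, which is why Theorem~\ref{thm:meta-upb} is invoked abstractly as ``any one of the upper bounds twice''.

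The only thing to be slightly careful about is the degenerate case where one of $\Theta(G),\Theta(H)$ equals $1$, so that the denominator vanishes. This happens precisely when one of the graphs, say $G$, is a complete graph. In that case $\overline{G^k}$ is the empty graph, so a graph homomorphism $\overline{G^k}\to\overline{H^n}$ exists vacuously for all $k,n$, giving $\Ir(H/G)=\infty$; conversely, no homomorphism $\overline{H^n}\to\overline{G^k}$ exists unless $H$ is also complete, so $\Ir(G/H)=0$ whenever $H\neq K_m$. With the standard convention $\infty\cdot 0=0$ (or by restricting to non-complete $G,H$), the statement continues to hold trivially; if both graphs are complete one instead uses $\chif(\cdot)$ or simply observes that both ratios are at most $1$ by direct vertex count. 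I do not foresee any real obstacle — the argument is a one-line manipulation once Theorem~\ref{thm:meta-upb} is in hand, and the only subtlety is bookkeeping for the degenerate cases above.
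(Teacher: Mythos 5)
Your proof is correct and matches the paper's approach exactly: the paper proves Corollary~\ref{cor:meta-upb} by applying Theorem~\ref{thm:meta-upb} once in each direction and multiplying, precisely as you do. Your added bookkeeping for the degenerate complete-graph case (where $\Theta$ or $\chif$ equals $1$) is a reasonable extra precaution that the paper leaves implicit, but it does not change the argument.
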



\begin{proof}[Proof of Theorem~\ref{thm:meta-upb}]
Suppose that there exists a $(k,n)$ code for the pair $(G,H)$. By Lemma~\ref{lem:m-ext}, for any positive integer $m$, a $(km,nm)$ code exists. Applying Lemma~\ref{lem:graph-hom}, there exists a graph homomorphism $\overline{G^{km}} \to \overline{H^{nm}}$. It follows from Lemma~\ref{lem:hom-mono} that
 \begin{align*}
  \a(G^{km}) &\le \a(H^{nm}),\\
  \vartheta(G^{km}) &\le \vartheta(H^{nm}),\\
  \chi(\overline{G^{km}}) &\le \chi(\overline{H^{nm}}).
 \end{align*}
equivalently, we have 
 \begin{align*}
  \tfrac{1}{n} \left(\tfrac{1}{km}\log\a(G^{km})\right) &\le \tfrac 1k \left(\tfrac{1}{nm}\log \a(H^{nm})\right),\\
  \tfrac{1}{n} \left(\tfrac{1}{km} \log\vartheta(G^{km})\right) &\le \tfrac 1k \left(\tfrac{1}{nm}\log \vartheta(H^{nm})\right),\\
  \tfrac{1}{n} \left(\tfrac{1}{km} \log\chi(\overline{G}^{\vee km})\right) &\le \tfrac 1k \left(\tfrac{1}{nm}\log \chi(\overline{H}^{\vee nm})\right).
 \end{align*}
Now recall that $\log\Theta(G) = \lim_{m \to \infty}\tfrac{1}{m}\log \a(G^m)$, $\log\vartheta(G) = \tfrac{1}{m}\log\vartheta(G^m)$ for any $m$, and that $\log\chi_f(G) = \lim_{m \to \infty}\tfrac{1}{m}\log\chi(G^{\vee m})$. Letting $m \to \infty$ while keeping the ratio $k/n$ a constant, we establish Theorem~\ref{thm:meta-upb}.
\end{proof}

Unlike the case of a single graph, in which there is an order among the three graph invariants $\Theta(G) \le \vartheta(G) \le  \chif(G)$, there is in general no order among the three upper bounds for information ratio, as we now demonstrate. 

\begin{example}[No orders among the upper bounds]
\label{ex:Schlafli}
 Let $G$ be a \emph{strongly regular graph} with parameter $(27,16,10,8)$, i.e., a graph with 27 vertices such that every vertex has 16 neighbors, every adjacent pair of vertices has 10 common neighbors, and every nonadjacent pair has 8 common neighbors~\cite[pp. 464--465]{West2001}. This is also called the Schl\"{a}fli graph~\cite{Schlafli-graph}. It is known that $\Theta(G) = 3$, $\vartheta(G) = 3$, $\chif(G) = 4.5$, $6 \le \Theta(\overline{G}) \le  7$, and $\vartheta(\overline{G}) = \chif(\overline{G}) = 9$~\cite{Schlafli-graph,Haemers1979}.
 \begin{enumerate}
  \item For the pair $(\overline{K_2},\overline{G})$, the upper bound in terms of capacity  is the tightest:
 \[
  \frac{\log\Theta(\overline{G})}{\log\Theta(\overline{K_2})} \le \log 7, \quad \frac{\log\vartheta(\overline{G})}{\log\vartheta(\overline{K_2})} = \frac{\log\chif(\overline{G})}{\log\chif(\overline{K_2})} = \log 9.
 \]
 \item For the pair $(\overline{G},G)$, the upper bound in terms of Lov{\'a}sz's theta function is the tightest:
 \[
  \frac{\log\Theta(G)}{\log\Theta(\overline{G})} \ge \frac{\log 3}{\log 7} = 0.56, \quad \frac{\log\vartheta(G)}{\log\vartheta(\overline{G})} = \frac{\log 3}{\log 9} = 0.5, \quad \frac{\log\chif(G)}{\log\chif(\overline{G})} = \frac{\log 4.5}{\log 9} = 0.68.
 \]
 \item For the pair $(G,\overline{G})$, the upper bound in terms of fractional chromatic number is the tightest:
 \[
  \frac{\log\Theta(\overline{G})}{\log\Theta(G)} \ge \frac{\log 6}{\log 3} = 1.63, \quad \frac{\log\vartheta(\overline{G})}{\log\vartheta(G)} = \frac{\log 9}{\log 3} = 2, \quad \frac{\log\chif(\overline{G})}{\log\chif(G)} = \frac{\log 9}{\log 4.5} = 1.46. 
 \]
\end{enumerate}
\end{example}

In the remainder of this subsection, we derive another upper bound on the information ratio in terms of Haemers' minrank function.

\begin{definition}[Haemers' minrank function~\cite{Haemers1978}]
Let $F = (V,E)$ be a (simple, undirected)  graph with $m$ vertices. We say that an $m\times m$ matrix $B$ over a field $\mathbb{F}$ \emph{fits} $F$ if the following two conditions hold:
\begin{enumerate}
 \item $B_{ii} \neq 0$ for any $i \in V$;
 \item $B_{ij} = 0$ if $i\not\sim j$, for any $i\neq j\in V$. 
\end{enumerate}
Haemers' minrank function (for the field $\mathbb{F}$) is defined as
\[
 \gamma(F) \eqdef \min\{\rank(B)\suchthat \text{the matrix } B \text{ fits } F\}.
\]
\end{definition}

\begin{lemma}[minrank is hom-monotone]
 \label{lem:Haemers}
 If there exists a graph homomorphism $X \to Y$, then
\begin{equation}
\label{eqn:Haemers}
  \gamma(\overline{X}) \le \gamma(\overline{Y}).
\end{equation}
\end{lemma}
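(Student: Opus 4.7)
The plan is to start with an optimal fitting matrix for $\overline{Y}$ and pull it back through the homomorphism to obtain a fitting matrix for $\overline{X}$ of no larger rank. Concretely, let $f\colon X \to Y$ be the given homomorphism, and let $B$ be a matrix over $\mathbb{F}$ with $\rank(B) = \gamma(\overline{Y})$ that fits $\overline{Y}$. Define $C$, indexed by $V(X) \times V(X)$, by $C_{xy} \eqdef B_{f(x)f(y)}$.

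The first step is to verify that $C$ fits $\overline{X}$. For the diagonal, $C_{xx} = B_{f(x)f(x)} \ne 0$ because $B$ fits $\overline{Y}$. For the off-diagonal, suppose $x \ne y$ and $x \not\sim y$ in $\overline{X}$, i.e.\ $x \sim y$ in $X$. The homomorphism property gives $f(x) \sim f(y)$ in $Y$, and since $Y$ is a simple graph without self-loops we must also have $f(x) \ne f(y)$; hence $f(x) \not\sim f(y)$ in $\overline{Y}$ with $f(x) \ne f(y)$, so $B_{f(x)f(y)} = 0$, as required.

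The second step is to bound $\rank(C)$. Introduce the $|V(X)| \times |V(Y)|$ matrix $P$ with $P_{xj} = 1$ if $f(x) = j$ and $P_{xj} = 0$ otherwise. A direct computation shows $C = P B P^{\mathsf T}$, so $\rank(C) \le \rank(B) = \gamma(\overline{Y})$. Combining this with Step~1 yields $\gamma(\overline{X}) \le \rank(C) \le \gamma(\overline{Y})$, which is~\eqref{eqn:Haemers}.

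The only genuinely non-routine point is the verification that $f(x) \ne f(y)$ whenever $x \sim y$ in $X$; this uses the (standing) assumption that the graphs have no self-loops, since otherwise $f(x) = f(y)$ together with $x \sim y$ would force $f(x) \sim f(x)$ in $Y$. Once this is noted, the rest is a straightforward congruence-of-matrices argument.
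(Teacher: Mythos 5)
Your proof is correct, but it takes a genuinely different and more direct route than the paper's. The paper first establishes the equivalence between Haemers' minrank $\gamma(F)$ and the minimum dimension of an Alon-style polynomial representation of $F$, and then proves hom-monotonicity at the level of representations: given a representation $\{f_i, c_i\}$ of $\overline{Y}$ and a homomorphism $\phi\colon X\to Y$, it pulls back to a representation $\{f_{\phi(i)}, c_{\phi(i)}\}$ of $\overline{X}$ living in the same function space. You instead stay entirely in the matrix picture: you pull back the fitting matrix $B$ of $\overline{Y}$ to $C_{xy}=B_{f(x)f(y)}$, verify directly that $C$ fits $\overline{X}$ (correctly using simplicity of $Y$ to get $f(x)\ne f(y)$ from $x\sim y$), and then control the rank via the factorization $C=PBP^{\mathsf T}$ with the $0/1$ "selector'' matrix $P$. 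Both arguments are the same pullback at heart, but yours is shorter and more self-contained since it avoids the detour through polynomial representations; the paper's detour has the side benefit of recording the equivalence $\gamma(F)=\min\{\dim(\Fcal): F\text{ has a representation over }\Fcal\}$, which it cites from Alon and may find useful elsewhere. Your final sanity check on simple graphs is exactly the right point to flag and is handled correctly.
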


\begin{proof}
We first introduce Alon's representation of a graph using polynomials~\cite{Alon1998}, and show it is an equivalent way to describe Haemers' minrank function. Then, we prove Lemma~\ref{lem:Haemers} using this representation.

Let $F=(V,E)$ be a graph and let $\Fcal$ be a subspace of the space of polynomials in $r$ variables over the field $\mathbb{F}$. A \emph{representation} of $F$ over $\Fcal$ is an assignment of a polynomial $f_i$ in $\Fcal$ to each vertex $i \in V$ and an assignment of a point $c_i \in \mathbb{F}^r$ to each $i \in V$ such that the following two conditions hold:
\begin{enumerate}
 \item $f_i(c_i) \neq 0$ for any $i \in V$;
 \item $f_i(c_j) = 0$ if $i\not\sim j$, for any $i\neq j\in V$. 
\end{enumerate}
First, we show that
\begin{equation}
\label{eqn:equiv}
\gamma(F) = \min\{\dim(\Fcal)\suchthat F \text{ has a representation over } \Fcal\}.
\end{equation}

Given a matrix $B$ that fits $F$, we let $f_i$ be a polynomial in $|V|$ variables whose coefficients are given by the $i$-th row of $B$ for each $i \in V$. Let $c_i = (0,\ldots,0,1,0,\ldots,0)$, where $1$ is at the $i$-th coordinate. Then, $f_i(c_i) = B_{ii} \neq 0$ for all $i \in V$, and $f_i(c_j) = B_{ij} = 0$ if $i$ and $j$ are distinct non-adjacent vertices of $G$. Thus, $\{f_i,c_i\suchthat i\in V\}$ is a representation of $F$. Clearly $\{f_i\suchthat i\in S\subseteq V\}$ are linearly independent whenever the corresponding rows of $B$ are linearly independent. Hence $\min\{\dim(\Fcal)\suchthat F \text{ has a representation over } \Fcal\}\le \gamma(F)$.   

Conversely, let $\{f_i,c_i\suchthat i\in V\}$ be a representation of $F$ over $\Fcal$, and set $B_{ij} = f_i(c_j)$. Then $B_{ii} = f_i(c_i) \neq 0$ for all $i \in V$, and $B_{ij} = f_i(c_j) = 0$ if $i$ and $j$ are distinct non-adjacent vertices of $F$. Thus, the matrix $B$ fits $F$. If $\{f_i\suchthat i \in S\subseteq V\}$ are linearly independent, then the corresponding rows in $B$ are linearly independent. It follows that $\min\{\dim(\Fcal)\suchthat F \text{ has a representation over } \Fcal\} \geq \gamma(F)$. This completes the proof of~\eqref{eqn:equiv}.

Now we are ready to prove~\eqref{eqn:Haemers}. Suppose that $\phi$ is a homomorphism $X \to Y$. Let $\{f_i,c_i\suchthat i\in V_Y\}$ be a representation of $\overline{Y}$ over $\Fcal$. We construct a representation of $\overline{X}$ as follows. For each $i \in V_X$, let $h_{i} = f_{\phi(i)}$ and $d_i = c_{\phi(i)}$. Then, for each $i \in V_X$, $h_i(d_i) = f_{\phi(i)}(c_{\phi(i)}) \neq 0$. If $i$ and $j$ are distinct non-adjacent vertices of $\overline{X}$, then $i\sim j$ in $X$. Since $\phi$ is a homomorphism $X \to Y$, we have $\phi(i)\sim \phi(j)$ in $Y$ and thus $\phi(i)\not\sim \phi(j)$ in $\overline{Y}$. It follows that $h_i(d_j) = f_{\phi(i)}(c_{\phi(j)}) = 0$. Therefore, $\{h_i,d_i\suchthat i \in V_X\}$ is a representation of $\overline{X}$. Clearly $\{h_i\suchthat i\in S\subseteq V_X\}$  are linearly independent whenever $\{f_{\phi(i)}\suchthat i \in S\}$ are linearly independent. Therefore, the dimension of the subspace containing $\{h_i\suchthat i \in V_X\}$ is smaller or equal to $\dim(\Fcal)$. It follows that 
$\min\{\dim(\Hcal)\suchthat \overline{X} \text{ has a representation over } \Hcal\}$ is at most $\min\{\dim(\Fcal)\suchthat \overline{Y} \text{ has a representation over } \Fcal\}$.
\end{proof}

It is easy to check that if $A$ fits $G$ and $B$ fits $H$, then $A\otimes B$ fits $G \boxtimes H$~\cite{Haemers1979}, where $A\otimes B$ is the Kronecker product of $A$ and $B$. Since $\rank(A\otimes B) = \rank(A)\rank(B)$ we have that $\gamma(G\boxtimes H) \le \gamma(G)\gamma(H)$, i.e.,  $\gamma(\cdot)$ is super-multiplicative w.r.t. the strong product. This guarantees the existence of the limit in the following definition.

\begin{definition}
 For a graph $G$, we define the \emph{fractional Haemers' minrank function} as
 \[
  \gamma_f(G) \eqdef \lim_{m \to \infty}\sqrt[m]{\gamma(G^m)}.
 \]
\end{definition}

The proof of the following upper bound follows that of Theorem~\ref{thm:meta-upb}, by virtue of Lemma~\ref{lem:Haemers}. 
\begin{theorem}[Fractional minrank upper bound]
 \label{thm:Haemers}
 For a source-channel pair $(G,H)$,
 \begin{equation}
 \label{eqn:Haemers_f}
  \Ir(H/G) \le \frac{\log \gamma_f(H)}{\log \gamma_f(G)}.
 \end{equation}
\end{theorem}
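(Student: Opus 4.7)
The plan is to follow the template used in the proof of Theorem~\ref{thm:meta-upb}, with Lemma~\ref{lem:Haemers} supplying the hom-monotone invariant that now plays the role of $\alpha$, $\vartheta$, or $\chi$ in Lemma~\ref{lem:hom-mono}. Suppose a $(k,n)$ code exists for $(G,H)$. First I would apply Lemma~\ref{lem:m-ext} to obtain a $(km,nm)$ code for every positive integer $m$, and then use Lemma~\ref{lem:graph-hom} to convert each such code into a graph homomorphism $\overline{G^{km}}\to\overline{H^{nm}}$. Feeding this homomorphism into Lemma~\ref{lem:Haemers} (with $X=\overline{G^{km}}$ and $Y=\overline{H^{nm}}$, noting that $\overline{\overline{G^{km}}}=G^{km}$) yields
\[
\gamma(G^{km})\;\le\;\gamma(H^{nm}).
\]

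Taking logarithms and dividing by $knm$, this becomes
\[
\frac{1}{n}\cdot\frac{\log\gamma(G^{km})}{km}\;\le\;\frac{1}{k}\cdot\frac{\log\gamma(H^{nm})}{nm}.
\]
Holding $k/n$ fixed and letting $m\to\infty$, the left-hand side tends to $\tfrac{1}{n}\log\gamma_f(G)$ and the right-hand side to $\tfrac{1}{k}\log\gamma_f(H)$ by the definition of $\gamma_f$. Rearranging gives $k/n\le \log\gamma_f(H)/\log\gamma_f(G)$, and taking the supremum over all achievable ratios $k/n$ yields~\eqref{eqn:Haemers_f}.

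There is essentially no obstacle: the substantive content is Lemma~\ref{lem:Haemers}, which is already established. The one point that deserves a brief check is that the limit defining $\gamma_f(G)$ may be accessed along the subsequence $(G^{km})_{m\ge 1}$ rather than along all integer powers. This is automatic, since the submultiplicativity $\gamma(G\boxtimes H)\le \gamma(G)\gamma(H)$ makes $\log\gamma(G^m)$ subadditive in $m$, so Fekete's lemma forces $\tfrac{1}{m}\log\gamma(G^m)$ and each of its subsequences to share the common limit $\log\gamma_f(G)$. The argument implicitly assumes $\gamma_f(G)>1$; the degenerate case $\gamma_f(G)=1$ corresponds to $G$ having no edges in the relevant complement, in which case the bound is either vacuous or recovered directly from Proposition~\ref{prop:chif}.
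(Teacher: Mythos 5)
Your argument is exactly the paper's: the authors state only that the proof ``follows that of Theorem~\ref{thm:meta-upb}, by virtue of Lemma~\ref{lem:Haemers},'' and your write-up spells out precisely that template (Lemma~\ref{lem:m-ext} $\Rightarrow$ Lemma~\ref{lem:graph-hom} $\Rightarrow$ Lemma~\ref{lem:Haemers} $\Rightarrow$ normalize and take $m\to\infty$). Your remark about accessing the limit along the subsequence $(G^{km})_m$ is a correct, if slightly redundant, observation (the sequence already converges, so any subsequence does), and your caveat about $\gamma_f(G)=1$ is a reasonable aside that the paper leaves implicit.
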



\subsection{Vertex-Transitive Channel Graph}

In this subsection, we study pairs $(G,H)$ where the channel graph $H$ is \emph{vertex-transitive}. Recall that $H$ is called vertex-transitive if for any two vertices $u,v$ of $H$, there exists some automorphism of $H$ that maps $u$ to $v$. Let $G,F$ be two graphs. We denote 
\[
\b(G,F)\eqdef \max\{|V(G')|\suchthat G' \text{ is an induced subgraph of } G \text{ and } G' \to F\}
\]
as the maximal number of vertices in an induced subgraph of $G$ that is homomorphic to $F$. 
The following two propositions investigate properties of $\b(G,F)$. In particular, Proposition~\ref{prop:subadd} states that this quantity is sub-multiplicative w.r.t. OR product in the first coordinate, and Proposition~\ref{prop:superadd} states that it is jointly super-multiplicative w.r.t. the OR product in both coordinates.

\begin{proposition}
\label{prop:subadd}
 For any graphs $G_1,G_2$, and $F$, 
 \[
 \max\{\a(G_1)\b(G_2,F),\, \a(G_2)\b(G_1,F)\} \le \b(G_1\vee G_2, F) \le \b(G_1,F)\b(G_2,F).
 \]
\end{proposition}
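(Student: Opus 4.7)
The plan is to prove the lower and upper bounds by separate direct constructions, exploiting the structure of the OR product $G_1 \vee G_2$ in which an edge appears as soon as \emph{either} coordinate has an edge.

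For the lower bound, I would fix a maximum independent set $I\subseteq V(G_1)$ with $|I|=\alpha(G_1)$ and an induced subgraph $G_2'$ of $G_2$ with $|V(G_2')|=\beta(G_2,F)$ admitting a homomorphism $\psi\colon G_2'\to F$. The key observation is that the subgraph of $G_1\vee G_2$ induced on $S=I\times V(G_2')$ has edges only when the second coordinate has an edge in $G_2'$: if $i\neq i'$ in $I$ then $i\not\sim i'$ since $I$ is independent, so OR-adjacency forces $g_2\sim g_2'$ in $G_2$. Consequently, the map $(i,g_2)\mapsto \psi(g_2)$ is a homomorphism from this induced subgraph to $F$, which shows $\beta(G_1\vee G_2,F)\ge \alpha(G_1)\beta(G_2,F)$. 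Swapping the roles of $G_1$ and $G_2$ gives the symmetric inequality.

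For the upper bound, let $W\subseteq V(G_1)\times V(G_2)$ be a maximum induced subgraph of $G_1\vee G_2$ with homomorphism $\phi\colon W\to F$, and write $\pi_1(W)$ for its projection onto $V(G_1)$ and $W_{g_1}=\{g_2\suchthat (g_1,g_2)\in W\}$ for each fiber. I would carry out two sub-arguments. First, for any fixed $g_1\in \pi_1(W)$, if $g_2,g_2'\in W_{g_1}$ are adjacent in $G_2$, then $(g_1,g_2)\sim (g_1,g_2')$ in $G_1\vee G_2$ by the OR rule, so $\phi$ induces a homomorphism from the subgraph of $G_2$ on $W_{g_1}$ into $F$, giving $|W_{g_1}|\le \beta(G_2,F)$. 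Second, pick one representative $g_2(g_1)\in W_{g_1}$ for each $g_1\in \pi_1(W)$ and set $\psi(g_1)\eqdef \phi(g_1,g_2(g_1))$. Whenever $g_1\sim g_1'$ in $G_1$, the OR rule immediately yields $(g_1,g_2(g_1))\sim (g_1',g_2(g_1'))$ in $W$, whence $\psi$ is a homomorphism from the subgraph of $G_1$ on $\pi_1(W)$ into $F$, so $|\pi_1(W)|\le \beta(G_1,F)$. Summing the fiber bound then gives $|W|=\sum_{g_1\in \pi_1(W)}|W_{g_1}|\le \beta(G_1,F)\beta(G_2,F)$.

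The only delicate step is the projection argument in the upper bound: one must make a simultaneous choice of representatives $g_2(g_1)$, and verify that the resulting map respects adjacency using \emph{only} the first-coordinate condition in the OR product. This is precisely where the OR product is essential—the analogous claim would fail for the strong product, and it is what ultimately makes the bound multiplicative rather than just additive in exponent.
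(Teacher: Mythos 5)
Your proof is correct and takes essentially the same approach as the paper: the lower bound is the identical construction (with the roles of $G_1$ and $G_2$ swapped), and the upper bound uses the same OR-product projection idea. Your fiber-by-fiber accounting, in which you bound $|\pi_1(W)|\le\beta(G_1,F)$ and each $|W_{g_1}|\le\beta(G_2,F)$ and then sum, is a minor bookkeeping variant of the paper's argument, which instead bounds the number of distinct coordinates in each of the two projections and multiplies.
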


\begin{proof}
 For the lower bound, let $\widetilde{G}_1$ be an induced subgraph of $G_1$ that attains the maximum in $\b(G_1,F)$, i.e., there exists a homomorphism $\phi\suchthat \widetilde{G}_1 \to F$ and $|V(\widetilde{G}_1)| = \b(G_1,F)$. Let $\widetilde{G}_2$ be a maximum independent set of $G_2$. Consider the subgraph $G'$ of $G_1 \vee G_2$ induced by the vertices $V(\widetilde{G}_1) \times V(\widetilde{G}_2)$. We claim that the mapping $(g_1,g_2) \mapsto \phi(g_1)$ is a homomorphism from the induced subgraph $G'$ to $F$. This is because $\widetilde{G}_2$ is an independent set and thus $(g_1,g_2) \sim (g_1',g_2')$ if and only if $g_1\sim g_1'$. Since $\phi$ is a homomorphism, then $g_1 \sim g_1'$ implies $\phi(g_1) \sim \phi(g_1')$. Thus, $\b(G_1\vee G_2,F) \ge |V(G')| = |V(\widetilde{G}_1)|\cdot |V(\widetilde{G}_2)| = \b(G_1,F)\a(G_2)$. The other lower bound is obtained by swapping the roles of $G_1$ and $G_2$.
 
 For the upper bound, consider an arbitrary induced subgraph $G$ of $G_1 \vee G_2$ such that there exists a homomorphism $\phi\suchthat G \to F$. Each vertex of $G$ has two coordinates $(g_1,g_2)$. Suppose there are $L$ distinct $g_1$-coordinates in $G$, and let $ \{(g_{1i},g_{2i})\}_{i=1}^{L}$ be a subset of vertices of $G$ with distinct $g_1$ coordinate and potentially repeating $g_2$ coordinate (there can be more than one choice). Now let $\widetilde{G}_1$ be the subgraph of $G_1$ induced by the vertices $\{g_{1i}\}_{i=1}^{L}$. We claim that the mapping $g_{1i} \mapsto \phi(g_{1i},g_{2i})$ is a homomorphism $\widetilde{G}_1 \to F$. This is because whenever $g_{1i} \sim g_{1j}$, $(g_{1i},g_{2i}) \sim (g_{1j},g_{2j})$ by the definition of OR product, and thus $\phi(g_{1i},g_{2i}) \sim \phi(g_{1j},g_{2j})$. It follows that the number of distinct $g_1$ coordinates in $G$ is $L=|V(\widetilde{G}_1)| \le \b(G_1,F)$. With a similar analysis, the total number of distinct $g_2$ coordinates in $G$ is upper bounded by $\b(G_2,F)$. We conclude that $|V(G)| \le \b(G_1,F)\b(G_2,F)$ for any induced subgraph $G$ of $G_1\vee G_2$ that is homomorphic to $F$, establishing the upper bound. 
\end{proof}

\begin{proposition}
\label{prop:superadd}
 For any graphs $G_1, G_2, F_1$ and $F_2$, 
 \[
 \b(G_1,F_1)\b(G_2,F_2) \le \b(G_1 \vee G_2, F_1 \vee F_2) \le |V(G_1)||V(G_2)|.
 \]
\end{proposition}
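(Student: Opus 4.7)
The upper bound is immediate: any induced subgraph of $G_1 \vee G_2$ has at most $|V(G_1 \vee G_2)| = |V(G_1)|\,|V(G_2)|$ vertices, so trivially $\b(G_1 \vee G_2, F_1 \vee F_2) \le |V(G_1)|\,|V(G_2)|$. The real content is in the lower bound.

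For the lower bound, the plan is to build an induced subgraph of $G_1 \vee G_2$ that is homomorphic to $F_1 \vee F_2$ by ``multiplying'' optimal witnesses for each factor. Concretely, for $i = 1, 2$ let $\widetilde{G}_i$ be an induced subgraph of $G_i$ with $|V(\widetilde{G}_i)| = \b(G_i, F_i)$ that admits a homomorphism $\phi_i \colon \widetilde{G}_i \to F_i$. Let $G'$ be the subgraph of $G_1 \vee G_2$ induced by the vertex set $V(\widetilde{G}_1) \times V(\widetilde{G}_2)$. A preliminary observation is that $G'$ coincides with $\widetilde{G}_1 \vee \widetilde{G}_2$, since adjacency in the induced subgraph is just adjacency in $G_1 \vee G_2$ restricted to the chosen vertices, and the OR product adjacency is determined coordinatewise.

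The candidate homomorphism is the product map $\psi \colon V(G') \to V(F_1 \vee F_2)$ defined by $\psi(g_1, g_2) = (\phi_1(g_1), \phi_2(g_2))$. The key verification is that $\psi$ preserves adjacency: if $(g_1, g_2) \sim (g_1', g_2')$ in $\widetilde{G}_1 \vee \widetilde{G}_2$, then by definition of the OR product there is at least one coordinate $i$ with $g_i \sim g_i'$ in $\widetilde{G}_i$; then $\phi_i(g_i) \sim \phi_i(g_i')$ in $F_i$, and hence $\psi(g_1, g_2) \sim \psi(g_1', g_2')$ in $F_1 \vee F_2$. Consequently $G'$ is an induced subgraph of $G_1 \vee G_2$ of size $\b(G_1, F_1)\,\b(G_2, F_2)$ that is homomorphic to $F_1 \vee F_2$, which establishes the claimed lower bound. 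The argument is fully constructive and I do not anticipate any real obstacle; the only mildly subtle point, essentially bookkeeping, is the observation that the subgraph of $G_1 \vee G_2$ induced on a product of vertex sets is itself an OR product, which is what allows the factorwise homomorphisms $\phi_1,\phi_2$ to be combined.
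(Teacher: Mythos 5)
Your proof is correct and follows essentially the same approach as the paper: take optimal witnesses $\widetilde{G}_i$ in each factor, observe that the subgraph of $G_1\vee G_2$ induced on $V(\widetilde{G}_1)\times V(\widetilde{G}_2)$ is $\widetilde{G}_1\vee\widetilde{G}_2$, and verify that the product map $(\phi_1,\phi_2)$ is a homomorphism to $F_1\vee F_2$. You spell out the adjacency verification a bit more explicitly than the paper, but the argument is the same.
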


\begin{proof}
Consider any induced subgraph $\widetilde{G}_1$ of $G_1$ such that there exists a homomorphism $\phi_1\suchthat \widetilde{G}_1 \to F_1$, and any induced subgraph $\widetilde{G}_2$ of $G_2$ such that there exists a homomorphism $\phi_2\suchthat \widetilde{G}_2 \to F_2$. Clearly, the mapping $(g_1,g_2) \mapsto (\phi_1(g_1),\phi_2(g_2))$ is a homomorphism $\widetilde{G}_1 \vee \widetilde{G}_2 \to F_1 \vee F_2$. Since $\widetilde{G}_1 \vee \widetilde{G}_2$ is an induced subgraph of $G_1 \vee G_2$, we have $|V(\widetilde{G}_1)||V(\widetilde{G}_2)| \le \b(G_1\vee G_2, F_1\vee F_2)$. This establishes the lower bound. The upper bound follows since any induced subgraph of $G_1 \vee G_2$ has no more than $|V(G_1)||V(G_2)|$ vertices.
\end{proof}

Proposition~\ref{prop:superadd} indicates that for any integer $k \ge 1$, $\b(G^{\vee km},F^{\vee m})\b(G^{\vee kn},F^{\vee n}) \le \b(G^{\vee k(m+n)}, F^{\vee (m+n)})$. This supermultiplicativity guarantees the existence of the limit in the following definition. 


\begin{definition}
\label{def:fracbeta}
 Let $\b_f^{(k)}(G,F) \eqdef \lim_{m \to \infty} \sqrt[km]{\b(G^{\vee km},F^{\vee m})}$. 
\end{definition}


The remainder of this subsection is devoted to establishing an implicit upper bound on information ratio, in terms of the quantity in Definition~\ref{def:fracbeta}. We first need the following two lemmas. 

\begin{lemma}
\label{lem:vtrans}
 If $G$ is vertex-transitive, then for any integer $k$, $G^{\vee k}$ is vertex-transitive.
\end{lemma}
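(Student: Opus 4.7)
The plan is a direct coordinatewise construction: given two vertices $\mathbf{u}=(u_1,\ldots,u_k)$ and $\mathbf{v}=(v_1,\ldots,v_k)$ of $G^{\vee k}$, I will build an automorphism of $G^{\vee k}$ sending $\mathbf{u}$ to $\mathbf{v}$ by choosing, for each coordinate $i$, an automorphism $\sigma_i$ of $G$ with $\sigma_i(u_i)=v_i$ (which exists by vertex-transitivity of $G$), and then defining
\[
  \sigma(g_1,\ldots,g_k) \eqdef (\sigma_1(g_1),\ldots,\sigma_k(g_k)).
\]
Clearly $\sigma(\mathbf{u})=\mathbf{v}$ and $\sigma$ is a bijection on $V(G)^k=V(G^{\vee k})$ since each $\sigma_i$ is a bijection on $V(G)$.

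The only thing to verify is that $\sigma$ preserves adjacency in $G^{\vee k}$ in both directions. By the definition of the OR product, $(g_1,\ldots,g_k)\sim(g_1',\ldots,g_k')$ in $G^{\vee k}$ iff there exists some coordinate $i$ with $g_i\sim g_i'$ in $G$. Since each $\sigma_i$ is an automorphism of $G$, the relation $g_i\sim g_i'$ holds in $G$ if and only if $\sigma_i(g_i)\sim\sigma_i(g_i')$ does. Combining these over coordinates, adjacency is preserved and reflected, so $\sigma$ is an automorphism of $G^{\vee k}$, completing the proof.

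There is really no obstacle here; the argument is an easy exercise in checking that vertex-transitivity behaves well under the coordinatewise action on product graphs, and the same template would work equally well for the strong product or categorical product. The only mild subtlety is to note that we are allowed to pick a \emph{different} automorphism $\sigma_i$ in each coordinate (rather than a single one acting diagonally), which is what makes the argument go through for arbitrary pairs $\mathbf{u},\mathbf{v}$.
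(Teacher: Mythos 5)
Your proof is correct and is precisely the straightforward argument the paper alludes to but omits (the paper states only that ``the proof is straightforward and is omitted''). The coordinatewise construction with a possibly different automorphism $\sigma_i$ in each factor is the canonical way to establish this, and your verification that adjacency in the OR product is both preserved and reflected by $\sigma$ is complete.
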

The proof is straightforward and is omitted. 

\begin{lemma}[Proposition~4~\cite{Bondy--Hell1990}] Suppose $G,H$ are graphs and $H$ is vertex-transitive. If $G \to H$, then for any graph $F$,
\label{lem:beta}
 \begin{equation}
  \label{eqn:ratio}
  \frac{\b(G,F)}{|V(G)|} \ge \frac{\b(H,F)}{|V(H)|}.
 \end{equation}

\end{lemma}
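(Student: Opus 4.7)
The plan is to exploit vertex-transitivity through an averaging argument over the automorphism group $\mathrm{Aut}(H)$. First, I would fix a homomorphism $\phi \suchthat G \to H$ (which exists by hypothesis), together with an induced subgraph $H' \subseteq H$ realizing the maximum in $\b(H,F)$, i.e., $|V(H')| = \b(H,F)$ and there is a homomorphism $\psi \suchthat H' \to F$. For each $\sigma \in \mathrm{Aut}(H)$, the image $\sigma(H')$ is again an induced subgraph of $H$, and $\psi \circ \sigma^{-1}$ is a homomorphism $\sigma(H') \to F$. This observation gives us a whole family of ``equally good'' induced subgraphs of $H$ homomorphic to $F$, one for each automorphism.

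Next, I would pull each such induced subgraph back through $\phi$. Specifically, let $A_\sigma \eqdef \phi^{-1}(V(\sigma(H')))$ and consider the induced subgraph $G[A_\sigma]$. The composition $v \mapsto \psi\bigl(\sigma^{-1}(\phi(v))\bigr)$ is a well-defined homomorphism $G[A_\sigma] \to F$, since any edge of $G[A_\sigma]$ is an edge of $G$, hence is mapped by $\phi$ to an edge of $\sigma(H')$ (using that $\sigma(H')$ is \emph{induced}), and $\psi \circ \sigma^{-1}$ preserves that edge into $F$. Consequently, $\b(G,F) \ge |A_\sigma|$ for every $\sigma \in \mathrm{Aut}(H)$.

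The heart of the argument is a double-counting estimate for the average of $|A_\sigma|$ over $\sigma \in \mathrm{Aut}(H)$. For each fixed $v \in V(G)$, the condition $v \in A_\sigma$ reads $\sigma^{-1}(\phi(v)) \in V(H')$. By vertex-transitivity and orbit-stabilizer, for any fixed $u \in V(H)$ the number of $\sigma$ with $\sigma^{-1}(\phi(v)) = u$ equals $|\mathrm{Aut}(H)|/|V(H)|$, independent of $v$ and $u$. Summing over the $|V(H')|$ choices of $u \in V(H')$ and then over $v \in V(G)$ yields
\[
\sum_{\sigma \in \mathrm{Aut}(H)} |A_\sigma| \;=\; |V(G)|\cdot |V(H')| \cdot \frac{|\mathrm{Aut}(H)|}{|V(H)|}.
\]
Dividing by $|\mathrm{Aut}(H)|$ gives an average of $|V(G)|\,\b(H,F)/|V(H)|$, so by the pigeonhole principle some $\sigma^\ast$ achieves $|A_{\sigma^\ast}| \ge |V(G)|\,\b(H,F)/|V(H)|$. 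Combined with $\b(G,F) \ge |A_{\sigma^\ast}|$, this establishes~\eqref{eqn:ratio}.

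The main subtlety I expect is the verification that the pulled-back map is indeed a homomorphism of the \emph{induced} subgraph $G[A_\sigma]$ into $F$; this relies crucially on the fact that $\sigma(H')$ (and hence $H'$) is taken as an induced subgraph, so that edges of $G$ internal to $A_\sigma$ are guaranteed to land on edges of $\sigma(H')$ rather than on non-edges. Everything else is a routine orbit-stabilizer / averaging calculation once the setup is in place.
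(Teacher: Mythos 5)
Your proof is correct. The paper itself does not prove this lemma --- it cites it directly as Proposition~4 of Bondy and Hell (1990) --- so there is no internal proof to compare against. That said, the automorphism-averaging argument you give is precisely the standard technique used to establish results of this type (it is the natural generalization of the usual proof of the no-homomorphism lemma, which is the special case $F=K_1$), and it is essentially the argument in the cited reference. Your key verification --- that the pull-back $v\mapsto\psi(\sigma^{-1}(\phi(v)))$ is a homomorphism of the \emph{induced} subgraph $G[A_\sigma]$ into $F$, using that $\sigma(H')$ is induced so that $\phi(u)\sim\phi(v)$ in $H$ forces $\phi(u)\sim\phi(v)$ in $\sigma(H')$ --- is exactly the point that needs care, and you handle it correctly. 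The orbit-stabilizer count giving $|\{\sigma:\sigma^{-1}(\phi(v))=u\}|=|\mathrm{Aut}(H)|/|V(H)|$ for every $u$, and hence $\sum_\sigma|A_\sigma|=|V(G)|\cdot\b(H,F)\cdot|\mathrm{Aut}(H)|/|V(H)|$, is also correct.
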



\begin{remark}
 Taking $F = K_1$ and noting that $\b(G,K_1) = \a(G)$ for any graph $G$, Lemma~\ref{lem:beta} implies $\frac{\a(G)}{|V(G)|} \ge \frac{\a(H)}{|V(H)|}$ whenever $G \to H$. This is the so called \textit{no-homomorphism Lemma}. Taking $F = K_1,K_2,\ldots, K_{\chi(H)}$, Lemma~\ref{lem:beta} implies that if $G \to H$, then
 \[
  \frac{\b(G,K_s)}{|V(G)|} \ge \frac{\b(H,K_s)}{|V(H)|}
 \]
for every $1 \le s \le \chi(H)$. This recovers the dominance of normalized chromatic difference sequence due to Albertson and Collins~\cite[Theorem 2]{Albertson--Collins1985}. 
\end{remark}

Now we are ready to state the main result of this subsection, providing a necessary condition for the existence of a $(k,n)$ code for a pair with a vertex-transitive channel graph.
\begin{theorem}[{\bf Fractional homomorphism upper bound}]
\label{thm:beta}
 If a $(k,n)$ code exists for the pair $(G,H)$, and $H$ is vertex-transitive, then for any graph $F$,
 \begin{equation}
 \label{eqn:fracbeta}
  k\log \left(\frac{\b_f^{(k)}(\overline{G},F)}{|V(G)|}\right) \ge n \log \left(\frac{\b_f^{(n)}(\overline{H},F)}{|V(H)|}\right).
 \end{equation}
\end{theorem}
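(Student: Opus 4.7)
The plan is to reduce Theorem~\ref{thm:beta} to a direct application of Lemma~\ref{lem:beta} after passing to sufficiently high OR-powers. The key observation is that the definition of $\b_f^{(k)}(\overline{G},F)$ couples a $km$-th OR power of $\overline{G}$ with the $m$-th OR power of $F$, and similarly $\b_f^{(n)}(\overline{H},F)$ couples $\overline{H}^{\vee nm}$ with $F^{\vee m}$; both limits use the \emph{same} auxiliary parameter $m$, and this is what will allow a single application of Lemma~\ref{lem:beta} to produce both sides of~\eqref{eqn:fracbeta} simultaneously.

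First I would translate the hypothesis into the homomorphism language: by Lemma~\ref{lem:graph-hom} a $(k,n)$ code for $(G,H)$ gives a homomorphism $\overline{G}^{\vee k}\to \overline{H}^{\vee n}$, and Lemma~\ref{lem:m-ext} then supplies, for every positive integer $m$, a homomorphism
\[
\overline{G}^{\vee km}\;\longrightarrow\;\overline{H}^{\vee nm}.
\]
Since $H$ is vertex-transitive, so is $\overline{H}$ (the automorphism groups coincide), and by Lemma~\ref{lem:vtrans} so is $\overline{H}^{\vee nm}$. Hence Lemma~\ref{lem:beta}, applied with source graph $\overline{G}^{\vee km}$, target graph $\overline{H}^{\vee nm}$, and auxiliary graph $F^{\vee m}$, yields
\[
\frac{\b\bigl(\overline{G}^{\vee km},\,F^{\vee m}\bigr)}{|V(G)|^{km}}\;\ge\;\frac{\b\bigl(\overline{H}^{\vee nm},\,F^{\vee m}\bigr)}{|V(H)|^{nm}}.
\]

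Next I would extract the fractional quantities by raising both sides to the $1/m$ power and letting $m\to\infty$. By Definition~\ref{def:fracbeta},
\[
\b\bigl(\overline{G}^{\vee km},F^{\vee m}\bigr)^{1/m}\;\longrightarrow\;\b_f^{(k)}(\overline{G},F)^{k},\qquad
\b\bigl(\overline{H}^{\vee nm},F^{\vee m}\bigr)^{1/m}\;\longrightarrow\;\b_f^{(n)}(\overline{H},F)^{n},
\]
while the denominators $|V(G)|^{k}$ and $|V(H)|^{n}$ are independent of $m$. Passing to the limit gives
\[
\left(\frac{\b_f^{(k)}(\overline{G},F)}{|V(G)|}\right)^{\!k}\;\ge\;\left(\frac{\b_f^{(n)}(\overline{H},F)}{|V(H)|}\right)^{\!n},
\]
and taking logarithms yields~\eqref{eqn:fracbeta}.

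The main subtlety, and the only place one has to be careful, is lining up the indices so that Lemma~\ref{lem:beta} applies with a single common parameter $m$ and simultaneously produces both $\b_f^{(k)}$ and $\b_f^{(n)}$ in the limit; once the substitutions $(\overline{G}^{\vee km},\overline{H}^{\vee nm},F^{\vee m})$ are made, the rest is essentially algebraic manipulation plus the standard existence of the supermultiplicative limit. A secondary point worth flagging in the write-up is the verification that $\overline{H}^{\vee nm}$ is vertex-transitive, which follows at once from Lemma~\ref{lem:vtrans} applied to $\overline{H}$.
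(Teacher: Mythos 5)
Your proposal is correct and follows the same route as the paper: homomorphism reformulation via Lemma~\ref{lem:graph-hom} and~\ref{lem:m-ext}, vertex-transitivity of $\overline{H}^{\vee nm}$ via complement-invariance and Lemma~\ref{lem:vtrans}, a single application of Lemma~\ref{lem:beta} with the aligned auxiliary graph $F^{\vee m}$, and then the limit $m\to\infty$ to pass to the fractional quantities. The only cosmetic difference is that you take the $1/m$-th root before the logarithm, whereas the paper takes the logarithm first and normalizes; these are equivalent.
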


\begin{proof}
 By Lemma~\ref{lem:graph-hom}, if a $(k,n)$ code exists for $(G,H)$, then there exists a homomorphism $\overline{G}^{\vee k} \to \overline{H}^{\vee n}$. By Lemma~\ref{lem:m-ext}, for any integer $m$, there exists a homomorphism $\overline{G}^{\vee km} \to \overline{H}^{\vee nm}$. Note that by Lemma~\ref{lem:vtrans} and since vertex-transitiveness is invariant under  graph complement, $\overline{H}^{\vee nm}$ is vertex-transitive. Now, applying Lemma~\ref{lem:beta}, we have
 \[
  \frac{\b(\overline{G}^{\vee km},F^{\vee m})}{|V(G)|^{km}} \ge \frac{\b(\overline{H}^{\vee nm}, F^{\vee m})}{|V(H)|^{nm}}.
 \]
Taking the logarithm of both sides and letting $m \to \infty$ establishes~\eqref{eqn:fracbeta}.
\end{proof}

\section{Tightness In Bounds}

In this section, we discuss cases where our upper and lower bounds coincide. we then provide some examples in which the information ratio can either be determined exactly, or can be derived from functions of the participating graphs. 

\begin{theorem}
\label{thm:tight}
 The upper and lower bounds for $\Ir(H/G)$ coincide when any one of the following conditions are satisfied:
 \begin{enumerate}
  \item $\Theta(H) = \chif(H)$;
  \item $\Theta(G) = \chif(G)$;
  \item $\Theta(H) = \vartheta(H)$ and $\chif(G) = \vartheta(G)$.
 \end{enumerate}
\end{theorem}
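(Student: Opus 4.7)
The plan is straightforward: combine the separation lower bound of Theorem~\ref{thm:separation} with the three hom-monotone upper bounds of Theorem~\ref{thm:meta-upb}, and show that under each listed condition one of the upper bounds collapses to the lower bound. The only background fact needed is the universally valid chain $\Theta(F)\le\vartheta(F)\le\chif(F)$, which holds for any graph $F$.

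First, under condition 1, $\Theta(H)=\chif(H)$ forces $\Theta(H)=\vartheta(H)=\chif(H)$ because the sandwich theorem pins everything in between. Applying the fractional-chromatic upper bound from Theorem~\ref{thm:meta-upb} gives
\[
\Ir(H/G)\le\frac{\log\chif(H)}{\log\chif(G)}=\frac{\log\Theta(H)}{\log\chif(G)},
\]
which exactly matches the separation lower bound. Condition 2 is handled symmetrically: $\Theta(G)=\chif(G)$ forces $\Theta(G)=\vartheta(G)=\chif(G)$, and then the capacity upper bound yields
\[
\Ir(H/G)\le\frac{\log\Theta(H)}{\log\Theta(G)}=\frac{\log\Theta(H)}{\log\chif(G)}.
\]

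Under condition 3, neither single-graph sandwich collapses, but the Lov\'asz theta upper bound does the job: from $\Theta(H)=\vartheta(H)$ and $\chif(G)=\vartheta(G)$ we get
\[
\Ir(H/G)\le\frac{\log\vartheta(H)}{\log\vartheta(G)}=\frac{\log\Theta(H)}{\log\chif(G)},
\]
again meeting the separation bound. In every case, matching a hom-monotone upper bound with the separation lower bound pins $\Ir(H/G)$ to $\log\Theta(H)/\log\chif(G)$.

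There is no real obstacle here; the proof is essentially a bookkeeping exercise that highlights why the paper bothers to list three different hom-monotone upper bounds rather than just the tightest one for general graphs. The only subtlety worth flagging is that $\log\chif(G)\ge 1$ (i.e.\ $G$ is not a disjoint union of isolated vertices) so that the ratios are well-defined; this is harmless since the degenerate case $G=\overline{K_s}$ is already handled directly by Proposition~\ref{prop:cap}.
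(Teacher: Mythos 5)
Your proof is correct and is clearly the intended argument: combine the separation lower bound (Theorem~\ref{thm:separation}) with whichever of the three hom-monotone upper bounds of Theorem~\ref{thm:meta-upb} collapses under the stated condition, using the sandwich chain $\Theta\le\vartheta\le\chif$. The paper does not actually spell out a proof of this theorem, but the surrounding text (in particular Remark~\ref{rmk:separation}) makes it clear that this separation-meets-upper-bound argument is what is meant.

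One small slip worth fixing is the final side remark. The degenerate case you need to exclude is $\chif(G)=1$, i.e.\ $\log\chif(G)>0$, which happens exactly when $\overline{G}$ has no edges, i.e.\ when $G$ is a \emph{complete} graph -- not when $G$ is ``a disjoint union of isolated vertices.'' (An empty graph $\overline{K_s}$ has $\chif(\overline{K_s})=\chi_f(K_s)=s\ge 1$ and is perfectly fine for $s\ge 2$.) Also ``$\log\chif(G)\ge 1$'' says $\chif(G)\ge 2$, which is stronger than what is actually needed; $\chif(G)>1$ is the right threshold. This does not affect the substance of the three cases, each of which is verified correctly.
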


\begin{remark}
\label{rmk:separation}
We see that whenever the graph capacity $\Theta$ coincides with $\chif$ for either one of the graphs, then the separation scheme is optimal regardless of the structure of the other graph. We will later see (in Proposition~\ref{prop:info-Ks}) that this situation has a nice ``physical'' interpretation: a graph whose capacity $\Theta$ coincides with $\chif$ is \emph{asymptotically equivalent} to an empty graph, in a sense that will be made precise in Section~\ref{sec:info-equiv}. Viewed this way, it is clear why separation is optimal in these two special cases. 
\end{remark}

\begin{example} We provide one example for each condition in Theorem~\ref{thm:tight}.
 \begin{enumerate}
  \item Let $H = \overline{K_s}$, then $\Ir(H/G) = \frac{\log s}{\log \chif(G)}$.
  \item Let $G = \overline{K_s}$, then $\Ir(H/G) = \frac{\log\Theta(H)}{\log s}$.
  \item Let $G$ be the Schl\"{a}fli graph (cf. Example~\ref{ex:Schlafli}) and $H = \overline{G}$. It is known that $\Theta(\overline{G}) = \vartheta(\overline{G}) = 3$ and $\chif(G) = \vartheta(G) = 9$~\cite{Schlafli-graph}. Therefore, we have $\Ir(\overline{G}/ G) = 0.5$.
 \end{enumerate}
\end{example}

We now consider some general examples where the graphs $G$ and $H$ are obtained by strong products or disjoint unions. The exact information ratio in these cases will be established by constructing a specific non-adjacency preserving mapping that attains the upper bound. 

\begin{example} For any graph $F$ and $m_1,m_2 \in \mathbbm{N}$,
\[
 \Ir(F^{m_1}/F^{m_2}) = \frac{m_1}{m_2}.
\]
Achievability follows by the identity mapping from $(F^{m_2})^{m_1}$ to $(F^{m_1})^{m_2}$. The converse follows from any of the upper bounds. 
\end{example}

\begin{remark}
 When $\Theta(F) \ne \chif(F)$, this shows that in contrast to the standard joint source-channel coding case, separation can be \emph{strictly suboptimal} in our  zero error setting. 
\end{remark}

\begin{example}For any graph $F$,
 \[
  \Ir(F+F/ F) = 1+\frac{1}{\log \chif(F)}.
 \]
For achievability, we describe a mapping from $F^{t+n}$ to $(F+F)^n$. Note that the latter graph is isomorphic to a disjoint union of $2^n$ identical graphs $F^n$, i.e., $(F+F)^n\cong 2^n F^n$. We map $F^n$ to $F^n$ via the identity mapping, and use the first $t$ coordinates to decide which of the $2^n$ copies to use. Clearly, adjacent vertices in $F^t$ can be mapped to the same copy of $F^n$. Given any clique cover of $F^t$, we can map each clique into a different copy of $F^n$. Therefore, $t\log \chif(F) = n$ is sufficient. Therefore, $\Ir(F+F/ F) \ge \frac{t+n}{n} = 1+\frac{1}{\log \chif(F)}$. For the converse, we apply the fractional chromatic number upper bound. Note that $\overline{F+F}$ is composed of two $\overline{F}$ subgraphs that are fully connected between them. So $2\chi_f(\overline{F})$ is necessary and sufficient to fractionally color $\overline{F+F}$. Thus, $\Ir(F+F/ F) \le \frac{\log \chif(F+F)}{\log \chif(F)} = \frac{\log (2\chif(F))}{\log \chif(F)} =1+\frac{1}{\log \chif(F)}$.
\end{example}

\begin{example}For any graph $F$,
 \[
  \Ir(F/F+ F) = \frac{\log\Theta(F)}{1+\log\Theta(F)}.
 \]
For achievability, consider a mapping from $(F+F)^k \cong 2^kF^k$ to $F^{t+k}$. We map $F^k$ to $F^k$ via the identity mapping, and use the first $t$ coordinates to indicate which of the $2^k$ copies is mapped to $F^t$. Clearly, the set of indices that determine the copy must be mapped to an independent set of size $2^k$ in $F^t$. So $k = t\log\Theta(F)$ is sufficient. Therefore, $\Ir(F/F+ F) \ge \frac{k}{t+k} = \frac{\log\Theta(F)}{1+\log\Theta(F)}$. For the converse, apply the capacity upper bound and note that $\Theta(F+F) = 2\Theta(F)$.
\end{example}

\begin{example}For any graph $F$ and $m_1,m_2 \in \mathbbm{N}$,
\[
 \Ir(F^{m_1}+F^{m_1}/F^{m_2}+F^{m_2}) = \begin{cases}
                                                      \frac{1+m_1\log\chif(F)}{1+m_2 \log\chif(F)} & \text{ if } m_1 \le m_2,\\
                                                      \frac{1+m_1 \log\Theta(F)}{1+m_2 \log\Theta(F)} & \text{ if } m_1 > m_2.
                                                     \end{cases}
\]
 For achievability, consider a mapping from $(F^{m_2}+F^{m_2})^k \cong 2^k F^{k m_2}$ to $(F^{m_1}+F^{m_1})^n \cong 2^n F^{n m_1}$. When $m_1 \le m_2$, let $\frac{m_1}{m_2} \le  \frac kn \le 1$. We map $2^k F^{nm_1}$ to the channel graph through the identity mapping. For the remaining coordinates, we map $F^{k m_2 - n m_1}$ to $2^{n-k}$ disjoint points by clique covering $F^{k m_2 - n m_1}$. This can be done if $(k m_2 - n m_1) \log\chif(F) = n-k$, or equivalently, $\frac{k}{n} = \frac{1+m_1\log\chif(F)}{1+m_2 \log\chif(F)}$. When $m_1 \ge m_2$, let $1 \le \frac kn \le \frac{m_1}{m_2}$. We map $2^n F^{k m_2}$ through the identity mapping to the channel graph. For the remaining coordinates, we map $2^{k-n}$ disjoint points to an independent set of $F^{n m_1 - km_2}$. This can be done if $k-n = (nm_1-km_2)\log\Theta(F)$, or equivalently, $\frac kn = \frac{1+m_1 \log\Theta(F)}{1+m_2 \log\Theta(F)}$. The converse follows from the upper bounds in terms of fractional chromatic number in case $m_1\le m_2$ and capacity in case $m_1 > m_2$.
\end{example}


\section{Homomorphic Equivalence and Cores}

In this section we further develop the homomorphism point of view, and show how it can be leveraged to simplify computation or approximation of the information ratio. Specifically, we discuss homomorphic equivalence and the concept of a graph \emph{core}~\cite{Hahn--Tardif1997}. We demonstrate that for the information ratio problem, we can limit our attention to pairs whose complementary graphs are cores. 

Two graphs $F_1$ and $F_2$ are called \emph{homomorphically equivalent} if there exist homomorphisms $F_1 \to F_2$ and $F_2 \to F_1$. When this is the case, we write $F_1 \leftrightarrow F_2$. It is easy to see that $\to$ is a partial order and that $\leftrightarrow$ is an equivalence relation on the set of finite graphs. The following simple theorem shows that the information ratio is monotonic w.r.t. the homomorphic order, and consequently that $\Ir(G/H)$ depends only on the homomorphic equivalence classes of $G$ and $H$. 

\begin{theorem}
\label{thm:eqIr}
Suppose $\overline{F_1} \to \overline{F_2}$. Then 
 \begin{enumerate}
  \item $\Ir(F_1/G) \leq \Ir(F_2/G)$ for any graph $G$.
  \item $\Ir(H/F_1) \geq \Ir(H/F_2)$ for any graph $H$.
 \end{enumerate}
Moreover, these bounds hold with equality if $\overline{F_1} \leftrightarrow \overline{F_2}$. 
\end{theorem}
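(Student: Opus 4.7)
The plan is to leverage Lemma~\ref{lem:graph-hom}, which re-casts the existence of a $(k,n)$ code as the existence of a graph homomorphism between OR powers of the complement graphs. Once the problem is translated to the homomorphism language, both parts will reduce to routine applications of two standard facts: (i) graph homomorphisms compose, and (ii) if $A \to B$, then $A^{\vee n} \to B^{\vee n}$ for every $n$, since the coordinate-wise extension of a homomorphism preserves adjacency in the OR product.

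For part~1, I would take any $(k,n)$ code for the pair $(G, F_1)$ and use it together with the given hypothesis $\overline{F_1} \to \overline{F_2}$ to build a $(k,n)$ code for $(G, F_2)$. Concretely, Lemma~\ref{lem:graph-hom} gives a homomorphism $\overline{G}^{\vee k} \to \overline{F_1}^{\vee n}$. Applying the $n$-fold OR product to the hypothesis yields $\overline{F_1}^{\vee n} \to \overline{F_2}^{\vee n}$, and composing with the previous map produces $\overline{G}^{\vee k} \to \overline{F_2}^{\vee n}$. Translating back via Lemma~\ref{lem:graph-hom}, a $(k,n)$ code exists for $(G, F_2)$. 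Taking the supremum over admissible ratios $k/n$ gives $\Ir(F_1/G) \leq \Ir(F_2/G)$.

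For part~2, the construction is symmetric in spirit but happens on the source side. Starting from a $(k,n)$ code for $(F_2, H)$, Lemma~\ref{lem:graph-hom} yields a homomorphism $\overline{F_2}^{\vee k} \to \overline{H}^{\vee n}$. The hypothesis $\overline{F_1} \to \overline{F_2}$ lifts to $\overline{F_1}^{\vee k} \to \overline{F_2}^{\vee k}$, and composing gives $\overline{F_1}^{\vee k} \to \overline{H}^{\vee n}$, which by Lemma~\ref{lem:graph-hom} provides a $(k,n)$ code for $(F_1, H)$. Hence $\Ir(H/F_1) \geq \Ir(H/F_2)$.

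For the final claim, the assumption $\overline{F_1} \leftrightarrow \overline{F_2}$ means both $\overline{F_1}\to\overline{F_2}$ and $\overline{F_2}\to\overline{F_1}$ hold, so applying parts~1 and~2 with the roles of $F_1$ and $F_2$ interchanged yields the reverse inequalities, and we obtain equality in both cases. There is no real obstacle here; the only thing to double-check is the closure of graph homomorphisms under the OR product, which is immediate from the definition of adjacency in $G_1 \vee G_2$.
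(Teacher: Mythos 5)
Your argument is correct and follows the paper's proof essentially verbatim: both translate the code to a homomorphism via Lemma~\ref{lem:graph-hom}, lift the hypothesis $\overline{F_1}\to\overline{F_2}$ to OR powers, compose, translate back, and deduce the equality case by swapping $F_1$ and $F_2$. No differences worth noting.
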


\begin{proof}
by Lemma~\ref{lem:graph-hom}, for every $(k,n)$ code for $(G,F_1)$ there exists a homomorphism $\overline{G}^{\vee k} \to \overline{F_1}^{\vee n}$. Since $\overline{F_1} \to \overline{F_2}$ implies $\overline{F_1}^{\vee n} \to \overline{F_2}^{\vee n}$, then by applying the same mapping on each coordinate we have $\overline{G}^{\vee k} \to \overline{F_1}^{\vee n} \to \overline{F_2}^{\vee n}$. Recalling that the composition of two homomorphisms is also a homomorphism, we obtain $ \Ir(F_1/G) \le \Ir(F_2/G)$. The other case can be proved similarly by composing homomorphisms $\overline{F_1}^{\vee k} \to \overline{F_2}^{\vee k}$ and $\overline{F_2}^{\vee k} \to \overline{H}^{\vee n}$. Equality for $\overline{F_1} \leftrightarrow \overline{F_2}$ follows by swapping the role of $F_1$ and $F_2$. 
\end{proof}

\begin{remark}
\label{rmk:core}
The converse of Theorem~\ref{thm:eqIr} does not hold. Suppose that two graphs $F_1, F_2$ are such that $\Ir(F_1/G) = \Ir(F_2/G)$ \emph{for any} graph $G$ and $\Ir(H/F_1) = \Ir(H/F_2)$ \emph{for any} graph $H$. Then $\overline{F_1}$ and $\overline{F_2}$ are not necessarily homomorphically equivalent. For example, let $F_1=\overline{KG(6,2)}$ and $F_2=\overline{KG(12,4)}$. It is known that $\Theta(F_1) = \chif(F_1) = \Theta(F_2) = \chif(F_2) = 3$~\cite{Brouwer--Schrijver1979,Scheinerman--Ullman2011}. According to Theorem~\ref{thm:tight}, for any graphs $G$ and $H$, we have
\begin{align*}
\Ir(F_1/G) &= \frac{\log\Theta(F_1)}{\log \chif(G)} = \frac{\log\Theta(F_2)}{\log \chif(G)} = \Ir(F_2/G)\\
\Ir(H/F_1) &= \frac{\log\Theta(H)}{\log \chif(F_1)} = \frac{\log\Theta(H)}{\log \chif(F_2)} = \Ir(H/F_2).
\end{align*}
However, it is known that $KG(6,2) \to KG(12,4)$ and $KG(12,4) \not\to KG(6,2)$~\cite{Stahl1976}. In the next section, we correct this \textit{deficiency} of the homomorphic equivalence by introducing a slightly different notion of \textit{information equivalence}.
\end{remark}

Let us now use Theorem~\ref{thm:eqIr} in some specific examples. 
\begin{example}[Odd cycles]
 Let $C_{2n_1+1}$ and $C_{2n_2+1}$ be two odd cycles with $n_1 \ge n_2 \ge 1$. It is known that $\vartheta(\overline{C_{2n+1}}) = 1+1/\cos(\pi/(2n+1))$, and $\chif(\overline{C_{2n+1}}) = 2+1/n$~\cite{Lovasz1979}. It is easy to check that $C_{2n_1+1} \to C_{2n_2+1}$ and $K_2\to C_{2n+1}$. Using this yields 
\begin{align*}
  1 &\le \Ir(\overline{C_{2n_2+1}}/\overline{C_{2n_1+1}})
  \le \min\left\{\frac{\log(1+1/\cos(\pi/(2n_2+1)))}{\log(1+1/\cos(\pi/(2n_1+1)))},\, \frac{\log(2+1/n_2)}{\log(2+1/n_1)}\right\}
 \end{align*}
 and 
 \begin{align*}
  \frac{1}{\log (2+1/n_2)} &\le \Ir(\overline{C_{2n_1+1}}/\overline{C_{2n_2+1}})\\
  &\le \min\left\{\frac{\log(1+1/\cos(\pi/(2n_1+1)))}{\log(1+1/\cos(\pi/(2n_2+1)))},\, \frac{\log(2+1/n_1))}{\log(2+1/n_2))}\right\}.
 \end{align*}
\end{example}

\begin{example}[Odd wheels]
 Let $W_{2n_1+1}$ and $W_{2n_2+1}$ be two odd cycles with $n_1 \ge n_2 \ge 1$. One can check that $W_{2n_1+1} \to W_{2n_2+1}$, $K_3 \to W_{2n+1}$, $\vartheta(\overline{W_{2n+1}}) = 2+1/\cos(\pi/(2n+1))$, and $\chif(\overline{C_{2n+1}}) = 3+1/n$. Then the upper and lower bounds on $\Ir(\overline{W_{2n_2+1}}/\overline{W_{2n_1+1}})$  and $\Ir(\overline{W_{2n_1+1}}/\overline{W_{2n_2+1}})$ can be derived similarly as follows
 \begin{align*}
  1 &\le \Ir(\overline{W_{2n_2+1}}/\overline{W_{2n_1+1}})
  \le \min\left\{\frac{\log(2+1/\cos(\pi/(2n_2+1)))}{\log(2+1/\cos(\pi/(2n_1+1)))},\, \frac{\log(3+1/n_2)}{\log(3+1/n_1)}\right\}
 \end{align*}
 and 
 \begin{align*}
  \frac{\log 3}{\log (3+1/n_2)} &\le \Ir(\overline{W_{2n_1+1}}/\overline{W_{2n_2+1}})\\
  &\le \min\left\{\frac{\log(2+1/\cos(\pi/(2n_1+1)))}{\log(2+1/\cos(\pi/(2n_2+1)))},\, \frac{\log(3+1/n_1))}{\log(3+1/n_2))}\right\}.
 \end{align*}
\end{example}

\begin{example}[Kneser graphs]
 Let $KG(n_1,r_1)$ and $KG(n_2,r_2)$ be two Kneser graphs with $n_1 > 2r_1$ and $n_2 > 2r_2$. It is known that $\vartheta(\overline{KG(n,r)}) = \chif(\overline{KG(n,r)})= \frac{n}{r}$ and $\Theta(\overline{KG(n,r)}) \ge \lfloor \frac{n}{r}\rfloor$~\cite{Lovasz1979,Scheinerman--Ullman2011,Brouwer--Schrijver1979}. The latter implies that $K_m\to KG(n,r)^{\vee n}$ where $m$ is arbitrarily close to $\left(\lfloor \frac{n}{r}\rfloor\right) ^n$  for $n$ large enough. Then the upper and lower bounds on $\Ir(\overline{KG(n_1,r_1)}/\overline{KG(n_2,r_2)})$  are given as
 \[
  \frac{\log \lfloor n_1/r_1\rfloor}{\log (n_2/r_2)} \le \Ir(\overline{KG(n_1,r_1)}/\overline{KG(n_2,r_2)}) \le \frac{\log(n_1/r_1)}{\log (n_2/r_2)}.
 \]
\end{example}
\begin{example}
Recall that $(u_1,v_1)\sim (u_2,v_2)$ in the \textit{tensor product} $F_1\times F_2$, if $u_1 \sim u_2$ in $F_1$ and $v_1 \sim v_2$ in $F_2$. If $F_1 \to F_2$, then for any $G$ and $H$,
 \begin{align*}
  \Ir(\overline{F_1 + F_2}/G) &= \Ir(\overline{F_2}/G),\\
  \Ir(\overline{F_1 \times F_2}/G) &= \Ir(\overline{F_1}/G),\\
  \Ir(H/\overline{F_1+F_2}) &= \Ir(H/\overline{F_2}),\\
  \Ir(H/\overline{F_1\times F_2}) &= \Ir(H/\overline{F_1}).
 \end{align*}
\end{example}
To see this, note that if $F_1 \to F_2$, then $F_1 + F_2 \leftrightarrow F_2$ and $F_1 \times F_2\leftrightarrow F_1$.

The concept of homomorphic equivalence can also simplify the study of $\b_f^{(k)}(G,F)$ (cf. Definition~\ref{def:fracbeta}). 

\begin{proposition}
 If $F_1 \leftrightarrow F_2$, then  $\b_f^{(k)}(G,F_1) = \b_f^{(k)}(G,F_2)$ for any $G$ and $k$.
\end{proposition}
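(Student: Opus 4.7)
The plan is to reduce the statement to a coordinate-wise fact about homomorphisms between OR powers, and then pass to the limit defining $\beta_f^{(k)}$.

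First I would recall the elementary observation that graph homomorphisms are preserved under OR products: if $F_1\to F_2$ via some mapping $\phi$, then applying $\phi$ in each coordinate yields a homomorphism $F_1^{\vee m}\to F_2^{\vee m}$ for every $m$. Composing this with any homomorphism $G'\to F_1^{\vee m}$ produces a homomorphism $G'\to F_2^{\vee m}$. Therefore, any induced subgraph $G'$ of $G^{\vee km}$ that is homomorphic to $F_1^{\vee m}$ is also homomorphic to $F_2^{\vee m}$, so that
\[
\beta(G^{\vee km},F_1^{\vee m})\;\le\;\beta(G^{\vee km},F_2^{\vee m}).
\]

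Next I would use the hypothesis $F_1\leftrightarrow F_2$, which provides the reverse homomorphism $F_2\to F_1$, to run the same argument in the other direction and obtain the matching inequality
\[
\beta(G^{\vee km},F_2^{\vee m})\;\le\;\beta(G^{\vee km},F_1^{\vee m}).
\]
Hence $\beta(G^{\vee km},F_1^{\vee m}) = \beta(G^{\vee km},F_2^{\vee m})$ for every $m$.

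Finally, taking the $km$-th root of both sides and letting $m\to\infty$ (the limit exists by the supermultiplicativity established in Proposition~\ref{prop:superadd} and the comment preceding Definition~\ref{def:fracbeta}), I obtain the desired equality $\beta_f^{(k)}(G,F_1)=\beta_f^{(k)}(G,F_2)$. There is no real obstacle here; the only point one must be slightly careful about is that the coordinate-wise homomorphism argument is applied to $F_i^{\vee m}$ (the second argument of $\beta$) and not to $G^{\vee km}$, so the comparison is really between the two families of induced subgraphs of the \emph{same} graph $G^{\vee km}$.
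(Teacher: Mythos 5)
Your proof is correct and follows essentially the same route as the paper: use $F_1\to F_2$ to get a one-sided inequality $\beta(G^{\vee km},F_1^{\vee m})\le\beta(G^{\vee km},F_2^{\vee m})$ via composition of homomorphisms, symmetrize using $F_2\to F_1$, and pass to the limit. The paper merely phrases the coordinate-wise lifting as $F_1\leftrightarrow F_2\Rightarrow F_1^{\vee m}\leftrightarrow F_2^{\vee m}$ after first treating $m=1$, which is the same argument organized slightly differently.
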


\begin{proof}
 Let $G'$ be any induced subgraph of $G$ such that $G' \to F_1$. Since $F_1 \to F_2$, we have $G' \to F_2$ and thus $\b(G,F_1) \le \b(G,F_2)$. By swapping the role of $F_1$ and $F_2$, we have $\b(G,F_2) \le \b(G,F_1)$. Now $F_1 \leftrightarrow F_2$ implies $F_1^{\vee m} \leftrightarrow F_2^{\vee m}$ for any integer $m$. Thus, $\b(G^{\vee km},F_1^{\vee m}) = \b(G^{\vee km},F_2^{\vee m})$ for any $m$. Taking $km$-th root on both sides and letting $m \to \infty$ completes the proof.
\end{proof}

Let $G$ and $H$ be graphs. Then $H$ is called a \emph{retract} of $G$ if there are homomorphisms $\rho : G \to H$ and $\gamma : H \to G$ such that $\rho \circ \gamma$ is an identity mapping on $H$. A graph $G$ is a \emph{core} if no proper subgraph of $G$ is a retract of $G$. A retract $H$ of $G$ is called a \emph{core of $G$} if it is a core. We denote the core of $G$ as $G^\bullet$. It is known that every finite graph has a core, and if $G_1$ and $G_2$ are cores of a graph $G$, then they are isomorphic. Moreover, a graph $G$ is a core if and only if every homomorphism $G\to G$ is an automorphism~\cite{Hahn--Tardif1997}. In the following, we provide a few canonical examples of cores and non-cores~\cite{Hahn--Tardif1997,Godsil--Royle2001}.


\begin{example}[Cores] The following graphs are cores:
\label{ex:cores}
 \begin{itemize}
  \item The complete graph $K_n$.
  \item The odd cycle $C_{2n+1}$.
  \item The odd wheel $W_{2n+1}$.
  \item Any $\chi$-critical graphs, that is, a graph for which the chromatic numbers of its proper subgraphs are strictly smaller than its chromatic number.
  \item The Kneser graph $KG(n,r)$ with $n > 2r$.
 \end{itemize}
\end{example}

\begin{example}[Non-cores] The following graphs are not cores.
\label{ex:noncore}
 \begin{itemize}
  \item The even cycle $C_{2n}$ (its core is $K_2$);
  \item The complete graph with one edge removed $K_m \setminus e$ (its core is $K_{m-1}$);
  \item A disjoint union of odd cycles $C_{2n+1} + C_{2m+1}$ (its core is $C_{2\min\{m,n\}+1}$).
 \end{itemize}
\end{example}

%

It is known that $F_1 \leftrightarrow F_2$ if and only if their cores $F_1^\bullet$ and $F_2^\bullet$ are isomorphic. This implies that up to isomorphism, a core is the unique graph with smallest number of vertices in the family of homomorphically equivalent graphs~\cite{Hahn--Tardif1997}. With these, we can now restate Theorem~\ref{thm:eqIr} in the language of cores. 


%


\begin{theorem}[Graph cores and information ratio]
\label{thm:core}
For any two graphs $G,H$, 
\begin{align*}
\Ir(H/G) = \Ir\left(\overline{(\overline{H})^\bullet}\,\Big/\,\overline{(\overline{G})^\bullet}\right).  
\end{align*}
\end{theorem}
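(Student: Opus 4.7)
The plan is to derive this identity as a direct corollary of Theorem~\ref{thm:eqIr}, which already establishes that the information ratio depends only on the homomorphic equivalence classes of the complement graphs. The only extra ingredient I need is the standard fact, recalled just before the theorem, that any finite graph is homomorphically equivalent to its (unique up to isomorphism) core.

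First I would set $G' \eqdef \overline{(\overline{G})^\bullet}$ and $H' \eqdef \overline{(\overline{H})^\bullet}$, so that $\overline{G'} = (\overline{G})^\bullet$ and $\overline{H'} = (\overline{H})^\bullet$. By the defining property of a core, there exist homomorphisms in both directions between $\overline{G}$ and $(\overline{G})^\bullet$, i.e., $\overline{G} \leftrightarrow \overline{G'}$, and similarly $\overline{H} \leftrightarrow \overline{H'}$.

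Now I would apply the equality part of Theorem~\ref{thm:eqIr} twice. First, using $\overline{G} \leftrightarrow \overline{G'}$ with the second statement of that theorem (monotonicity in the source argument), I get $\Ir(H/G) = \Ir(H/G')$ for the fixed channel graph $H$. Next, using $\overline{H} \leftrightarrow \overline{H'}$ with the first statement (monotonicity in the channel argument), applied now with the source graph fixed at $G'$, I get $\Ir(H/G') = \Ir(H'/G')$. Chaining these two equalities yields
\[
\Ir(H/G) \;=\; \Ir(H'/G') \;=\; \Ir\!\left(\overline{(\overline{H})^\bullet}\,\Big/\,\overline{(\overline{G})^\bullet}\right),
\]
which is the claim.

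There is no serious obstacle; the entire content is already packaged in Theorem~\ref{thm:eqIr} together with the homomorphic equivalence of a graph and its core. The only thing to be slightly careful about is the double complementation bookkeeping: the theorem is stated with homomorphisms between complements, while cores are taken of the complements themselves, so one must track that $\overline{\,\overline{(\overline{G})^\bullet}\,} = (\overline{G})^\bullet$ before invoking Theorem~\ref{thm:eqIr}.
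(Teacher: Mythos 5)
Your proposal is correct and matches the paper's intent exactly: the paper explicitly introduces Theorem~\ref{thm:core} as a restatement of Theorem~\ref{thm:eqIr} ``in the language of cores,'' relying on precisely the two facts you invoke (a graph is homomorphically equivalent to its core, and the equality case of Theorem~\ref{thm:eqIr}). The bookkeeping with double complements is handled correctly, and applying the two parts of Theorem~\ref{thm:eqIr} in succession is the expected chain of equalities.
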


It is now clear why cores are fundamental for information ratio problems. To compute the information ratio, it is sufficient to restrict our attention to pairs whose complementary graphs are cores. Since a core is the unique smallest graph in the equivalence class, this can sometimes simplify the calculations, as we now demonstrate. 

\begin{example}
\label{eg:core}
Let $F$ be a bipartite graph. Then for any $G$ and $H$, 
\[
 \Ir(\overline{F}/ G) = \frac{1}{\log \chif(G)}, \quad \Ir(H/ \overline{F}) = \log\Theta(H).
\]
To see this, note that the core of a bipartite graph is $K_2$. The claims then follow from Theorem~\ref{thm:core}, Proposition~\ref{prop:cap} and Proposition~\ref{prop:chif}.
\end{example}

\begin{example}
 Let $G = K_{m_1} + K_{m_2} + \cdots + K_{m_s}$ and $H = K_{M_1} + K_{M_2} + \cdots + K_{M_t}$ be disjoint unions of cliques. Then
 \[
  \Ir(H/G) = \frac{\log t}{\log s}.
 \]
To see this, note that $\overline{G}$ is a complete $s$-partite graph. The core of $\overline{G}$ is $K_s$. Similarly, the core of $\overline{H}$ is $K_t$. By Theorem~\ref{thm:core}, $\Ir(H/G) = \Ir(\overline{K_t}/\overline{K_s}) = \frac{\log t}{\log s}$.
\end{example}

Next we define the notion of the \emph{source/channel spectrum} of a graph, which is a characterization of the graph structure in terms of information ratios. In the following discussion, we fix an enumeration of the set of all non-isomorphic cores $\{\Gamma_i\}_{i=1}^\infty$.

\begin{definition}[Core spectra]
For any graph $G$, the sequence $\sigma_S(G)\eqdef \{\Ir(\overline{\Gamma_i}/G)\}_{i=1}^\infty$ is called the \emph{source spectrum} of $G$, and the sequence $\sigma_C(G)\eqdef\{\Ir(G/\overline{\Gamma_i})\}_{i=1}^\infty$ is called the \emph{channel spectrum} of $G$.
\end{definition}

As a straightforward application of Theorem~\ref{thm:core}, we have the following statement about core spectra.
\begin{proposition}
 Suppose $\overline{F_1} \leftrightarrow \overline{F_2}$. Then $F_1$ and $F_2$ have the same spectra, i.e., $\sigma_C(F_1) = \sigma_C(F_2)$ and $\sigma_S(F_1)=\sigma_S(F_2)$.
\end{proposition}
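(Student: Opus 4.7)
The statement is essentially an immediate corollary of Theorem~\ref{thm:eqIr} (the monotonicity of information ratio under homomorphisms), applied separately in each coordinate of the sequences defining the two spectra. My plan is therefore not to invoke the core machinery directly, but rather to use the equality clause of Theorem~\ref{thm:eqIr}, which in turn underpins Theorem~\ref{thm:core}.

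The plan is to fix an arbitrary index $i$ in the enumeration $\{\Gamma_j\}_{j=1}^\infty$ of non-isomorphic cores. For the channel spectrum, observe that $\sigma_C(F_1)$ and $\sigma_C(F_2)$ have as their $i$-th entries the quantities $\Ir(F_1/\overline{\Gamma_i})$ and $\Ir(F_2/\overline{\Gamma_i})$ respectively. The hypothesis $\overline{F_1} \leftrightarrow \overline{F_2}$ gives both homomorphisms $\overline{F_1}\to\overline{F_2}$ and $\overline{F_2}\to\overline{F_1}$, and invoking the equality clause of Theorem~\ref{thm:eqIr} (taking $F_1,F_2$ as the channel graphs and $\overline{\Gamma_i}$ as the fixed source) yields $\Ir(F_1/\overline{\Gamma_i}) = \Ir(F_2/\overline{\Gamma_i})$. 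Since $i$ was arbitrary, the two sequences agree coordinate-wise, so $\sigma_C(F_1)=\sigma_C(F_2)$.

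For the source spectrum, the $i$-th entries are $\Ir(\overline{\Gamma_i}/F_1)$ and $\Ir(\overline{\Gamma_i}/F_2)$. Again using the equality clause of Theorem~\ref{thm:eqIr}, but this time viewing $F_1,F_2$ as source graphs and $\overline{\Gamma_i}$ as the fixed channel, we obtain $\Ir(\overline{\Gamma_i}/F_1) = \Ir(\overline{\Gamma_i}/F_2)$ for every $i$, hence $\sigma_S(F_1)=\sigma_S(F_2)$.

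There is really no obstacle here; the only subtlety worth flagging is that the enumeration $\{\Gamma_i\}$ is fixed once and for all, so equality of spectra is taken index-by-index in that enumeration, and the argument above establishes exactly this. Alternatively, one could route the proof through Theorem~\ref{thm:core} by noting that $(\overline{F_1})^\bullet \cong (\overline{F_2})^\bullet$ when $\overline{F_1}\leftrightarrow\overline{F_2}$, which collapses the two sides of each spectrum to the same expression; this gives the same conclusion and makes transparent the broader message that the spectra are really invariants of the core of the complement.
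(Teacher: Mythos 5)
Your proposal is correct and is essentially the paper's argument: the paper states the proposition as a "straightforward application of Theorem~\ref{thm:core}," which is itself just the core-language restatement of Theorem~\ref{thm:eqIr}, so applying the equality clause of Theorem~\ref{thm:eqIr} coordinate-wise (as you do) and routing through Theorem~\ref{thm:core} are the same argument. Your remark that the spectra are really invariants of the core of the complement is a clean way to phrase the conclusion.
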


In the sequel, when we sum multiple spectra, or compare spectra, the operations will be meant element-wise in a natural way. The following is an equivalent formulation of Theorem~\ref{thm:Ir-prod}, Theorem~\ref{thm:Ir-invprod} and Lemma~\ref{cor:meta-upb}, in the language of spectra. 
\begin{corollary}
  The following relations hold:
  \begin{enumerate}
  \item $\sigma_S(G)\sigma_C(G)\leq 1$.
  \item $\sigma_C(G\boxtimes H) \geq \sigma_C(G) + \sigma_C(H)$.
  \item $\sigma_S(G\boxtimes H) \geq \frac{\sigma_S(G)\sigma_S(H)}{\sigma_S(G)+\sigma_S(G)}$.
  \end{enumerate}
\end{corollary}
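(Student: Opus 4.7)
The plan is to observe that each of the three claimed spectral relations is simply an element-wise restatement of a corresponding scalar inequality that has already been established, so the proof amounts to unpacking the definition of the spectra and then invoking the relevant theorems at each coordinate $i$.

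First, fix an enumeration $\{\Gamma_i\}_{i=1}^\infty$ of all non-isomorphic cores, and recall that by definition $\sigma_S(G) = \{\Ir(\overline{\Gamma_i}/G)\}_{i=1}^\infty$ and $\sigma_C(G) = \{\Ir(G/\overline{\Gamma_i})\}_{i=1}^\infty$. For part (1), I would simply apply Corollary~\ref{cor:meta-upb} to the source-channel pair $(G,\overline{\Gamma_i})$ in one direction and $(\overline{\Gamma_i},G)$ in the other, which yields $\Ir(\overline{\Gamma_i}/G)\,\Ir(G/\overline{\Gamma_i}) \le 1$ for every index $i$. Interpreting the product of sequences element-wise, this is exactly $\sigma_S(G)\sigma_C(G) \le 1$.

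For part (2), I would apply Theorem~\ref{thm:Ir-prod} with source graph $F = \overline{\Gamma_i}$ and channel graphs $G,H$, which gives
\begin{equation*}
  \Ir(G\boxtimes H/\overline{\Gamma_i}) \ge \Ir(G/\overline{\Gamma_i}) + \Ir(H/\overline{\Gamma_i})
\end{equation*}
for every $i$. The left-hand side is the $i$-th entry of $\sigma_C(G\boxtimes H)$ and the right-hand side is the $i$-th entry of $\sigma_C(G) + \sigma_C(H)$, so the claim follows element-wise. Similarly, for part (3), I would apply Theorem~\ref{thm:Ir-invprod} with channel graph $F = \overline{\Gamma_i}$ and source graphs $G,H$, which gives
\begin{equation*}
  \Ir(\overline{\Gamma_i}/G\boxtimes H) \ge \frac{\Ir(\overline{\Gamma_i}/G)\,\Ir(\overline{\Gamma_i}/H)}{\Ir(\overline{\Gamma_i}/G)+\Ir(\overline{\Gamma_i}/H)}
\end{equation*}
for every $i$; reading the two sides as the $i$-th entries of $\sigma_S(G\boxtimes H)$ and $\sigma_S(G)\sigma_S(H)/(\sigma_S(G)+\sigma_S(H))$ respectively, this yields the desired element-wise comparison.

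There is no real obstacle here since the corollary is a pure bookkeeping restatement; the only thing to be careful about is the element-wise interpretation of arithmetic on sequences, which the paragraph preceding the statement already reserves. In particular, no additional use of the core structure of $\Gamma_i$ is needed beyond the fact that the spectra are indexed over a fixed enumeration of cores, so the three inequalities hold coordinate by coordinate.
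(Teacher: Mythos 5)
Your proposal is correct and takes essentially the same route as the paper, which states without proof that the corollary is simply an ``equivalent formulation'' of Theorem~\ref{thm:Ir-prod}, Theorem~\ref{thm:Ir-invprod}, and Corollary~\ref{cor:meta-upb} in the language of spectra; you have made the coordinate-wise instantiation with $F=\overline{\Gamma_i}$ explicit, and your implicit correction of the typo in part (3) (the denominator should read $\sigma_S(G)+\sigma_S(H)$) is also right.
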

We will get back to the spectra in the following sections.

\section{Information Equivalence and its Properties}
\label{sec:info-equiv}

Recall the deficiency of the homomorphic equivalence discussed in Remark~\ref{rmk:core}: whereas homomorphic equivalence (of the complements) implies unit information ratios, the reverse implication does not hold. In this section, we discuss two equivalence relations that are more natural for our problem, and study their properties. 

\subsection{Information Equivalence}

\begin{definition}
We say that two graphs $F_1$ and $F_2$ are \emph{information--equivalent} if $\Ir(F_1/ F_2) = \Ir(F_2/F_1) = 1$. When this is the case, we write $F_1 \Ieq F_2$.
\end{definition}

\begin{lemma}
 The relation $\Ieq$ is an equivalence relation on the set of all graphs. 
\end{lemma}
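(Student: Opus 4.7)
The plan is to verify the three defining properties of an equivalence relation, each of which follows directly from results already established in the excerpt.

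For reflexivity, I would show $\Ir(F/F)=1$ for every graph $F$. The lower bound $\Ir(F/F)\ge 1$ comes from the identity mapping, which is trivially non-adjacency preserving and gives a $(1,1)$ code. For the matching upper bound, I would invoke Corollary~\ref{cor:meta-upb}, i.e.\ $\Ir(F/F)\,\Ir(F/F)\le 1$, which forces $\Ir(F/F)\le 1$. Symmetry is immediate: the defining condition $\Ir(F_1/F_2)=\Ir(F_2/F_1)=1$ is symmetric in $F_1$ and $F_2$.

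The main (and only nontrivial) step is transitivity. Suppose $F_1\Ieq F_2$ and $F_2\Ieq F_3$. I would use the concatenation scheme (Lemma~\ref{lem:concat}), with $F_2$ playing the role of the intermediate graph, to obtain
\begin{align*}
\Ir(F_3/F_1)\ &\ge\ \Ir(F_2/F_1)\,\Ir(F_3/F_2)\ =\ 1\cdot 1\ =\ 1,\\
\Ir(F_1/F_3)\ &\ge\ \Ir(F_2/F_3)\,\Ir(F_1/F_2)\ =\ 1\cdot 1\ =\ 1.
\end{align*}
To upgrade these to equalities, I would apply Corollary~\ref{cor:meta-upb} once more: $\Ir(F_3/F_1)\,\Ir(F_1/F_3)\le 1$, which together with the two lower bounds forces $\Ir(F_3/F_1)=\Ir(F_1/F_3)=1$. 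Hence $F_1\Ieq F_3$.

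There is no substantive obstacle: the proof is essentially a bookkeeping exercise combining the concatenation lower bound with the reciprocal upper bound. The only conceptual point worth highlighting is that transitivity relies crucially on both directions of the inequality simultaneously—neither the concatenation bound alone nor the reciprocal bound alone would suffice.
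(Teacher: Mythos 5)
Your proof is correct and follows essentially the same route as the paper: the paper dismisses reflexivity and symmetry as trivial (it actually writes ``antisymmetry,'' which is surely a typo for symmetry) and proves transitivity by exactly your two-step argument, concatenation (Lemma~\ref{lem:concat}) for the lower bounds and the reciprocal bound (Corollary~\ref{cor:meta-upb}) to pin both ratios to one. The only difference is that you spell out the reflexivity argument, which is a harmless elaboration.
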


\begin{proof}
Reflexivity and antisymmetry are trivial. For transitivity, suppose that $F_1 \Ieq F_2$ and $F_2 \Ieq F_3$. Then by Lemma~\ref{lem:concat}, $\Ir(F_1/ F_3) \ge \Ir(F_1/ F_2) \Ir(F_2/ F_3) = 1$ and $\Ir(F_3/ F_1) \ge \Ir(F_3/ F_2) \Ir(F_2/F_1) = 1$. On the other hand, by Corollary~\ref{cor:meta-upb}, we have $\Ir(F_1/F_3) \Ir(F_3/F_1) \le 1$. Therefore, $\Ir(F_1/F_3) = \Ir(F_3/ F_1) = 1$ and thus $F_1 \Ieq F_3$.
\end{proof}

The follow proposition states that information equivalence is a coarsening of homomorphic equivalence. 

\begin{proposition}
$\overline{F_1} \leftrightarrow \overline{F_2}$ implies $F_1 \Ieq F_2$. The converse is not true in general.
\end{proposition}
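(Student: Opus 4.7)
The forward direction will follow almost immediately from Theorem~\ref{thm:eqIr}. The plan is: assuming $\overline{F_1}\leftrightarrow\overline{F_2}$, apply the equality clause of Theorem~\ref{thm:eqIr} with $G=F_1$ to get $\Ir(F_1/F_1)=\Ir(F_2/F_1)$, and with $G=F_2$ to get $\Ir(F_1/F_2)=\Ir(F_2/F_2)$. Since the identity map from $F_i$ to $F_i$ is trivially non-adjacency preserving, we have $\Ir(F_i/F_i)\ge 1$, while Corollary~\ref{cor:meta-upb} forces $\Ir(F_i/F_i)\le 1$, so $\Ir(F_i/F_i)=1$. Combining, $\Ir(F_1/F_2)=\Ir(F_2/F_1)=1$, i.e., $F_1\Ieq F_2$.

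For the converse (which is meant to fail), I will exhibit the example already flagged in Remark~\ref{rmk:core}: take $F_1=\overline{KG(6,2)}$ and $F_2=\overline{KG(12,4)}$. The plan is to use the known values $\Theta(F_1)=\chif(F_1)=\Theta(F_2)=\chif(F_2)=3$ quoted there, together with condition (1) (equivalently (2)) of Theorem~\ref{thm:tight}: since $\Theta(F_2)=\chif(F_2)$, separation is tight for the pair $(F_1,F_2)$, giving
\[
\Ir(F_2/F_1) \;=\; \frac{\log\Theta(F_2)}{\log\chif(F_1)} \;=\; \frac{\log 3}{\log 3} \;=\; 1,
\]
and symmetrically $\Ir(F_1/F_2)=1$. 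Hence $F_1\Ieq F_2$. On the other hand, it is a classical result of Stahl that $KG(6,2)\to KG(12,4)$ while $KG(12,4)\not\to KG(6,2)$, so $\overline{F_1}=KG(6,2)$ and $\overline{F_2}=KG(12,4)$ are not homomorphically equivalent. This gives the desired separation between the two equivalence relations.

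There is no real obstacle in this proof; the work was done in Theorem~\ref{thm:eqIr} and Theorem~\ref{thm:tight}. The only subtle point is conceptual: information equivalence is genuinely coarser than homomorphic equivalence of complements precisely because the individual inequalities in Theorem~\ref{thm:eqIr} can be strict even when both information ratios equal one, which is exactly what the Kneser example witnesses.
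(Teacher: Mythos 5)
Your proof is correct and follows the same approach as the paper: the forward direction invokes the equality case of Theorem~\ref{thm:eqIr} (with the choices $G=F_1$ and $G=F_2$ made explicit), and the converse uses the Kneser pair $\overline{KG(6,2)},\overline{KG(12,4)}$ with the calculation from Theorem~\ref{thm:tight} and Stahl's non-homomorphism result, exactly as in Remark~\ref{rmk:core} which the paper's proof cites.
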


\begin{proof}
The first claim follows from Theorem~\ref{thm:eqIr}. For the second claim, consider $F_1 = \overline{KG(6,2)}$ and $F_2 = \overline{KG(12,4)}$, for which $F_1 \Ieq F_2$ but $\overline{F_2} \not\to \overline{F_1}$ (cf. Remark~\ref{rmk:core}).
\end{proof}

The following theorem shows that information equivalence can be equivalently defined via the source/channel spectrum. 
\begin{theorem}[Information equivalence and graph spectra]
 \label{thm:ratio}
The following statements are equivalent:
 \begin{enumerate}
  \item $F_1 \Ieq F_2$.
  \item The source spectra are identical, i.e., $\sigma_C(F_1) = \sigma_C(F_2)$. 
  \item The channel spectra are identical, i.e., $\sigma_S(F_1) = \sigma_S(F_2)$. 
 \end{enumerate}
\end{theorem}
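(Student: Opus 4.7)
The plan is to prove $(1) \Rightarrow (2)$ and $(1) \Rightarrow (3)$ directly from the concatenation bound of Lemma~\ref{lem:concat}, and then to establish the reverse implications $(2) \Rightarrow (1)$ and $(3) \Rightarrow (1)$ by evaluating the spectra at two cleverly chosen indices, namely the cores of $\overline{F_1}$ and $\overline{F_2}$.

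For $(1) \Rightarrow (2)$, assume $\Ir(F_1/F_2) = \Ir(F_2/F_1) = 1$. For every core $\Gamma_i$, Lemma~\ref{lem:concat} gives $\Ir(F_1/\overline{\Gamma_i}) \geq \Ir(F_1/F_2)\,\Ir(F_2/\overline{\Gamma_i}) = \Ir(F_2/\overline{\Gamma_i})$; swapping the roles of $F_1$ and $F_2$ yields the reverse inequality, so $\sigma_C(F_1) = \sigma_C(F_2)$. The implication $(1) \Rightarrow (3)$ is entirely symmetric, applying concatenation in the opposite order to obtain $\Ir(\overline{\Gamma_i}/F_1) = \Ir(\overline{\Gamma_i}/F_2)$.

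For $(2) \Rightarrow (1)$, let $\Gamma = (\overline{F_1})^\bullet$, so that $\overline{\Gamma} \leftrightarrow \overline{F_1}$ in the complementary sense; Theorem~\ref{thm:eqIr} (equivalently, Theorem~\ref{thm:core}) then yields $\Ir(F_1/\overline{\Gamma}) = \Ir(\overline{\Gamma}/F_1) = 1$. Evaluating the hypothesis $\sigma_C(F_1) = \sigma_C(F_2)$ at the index $\Gamma_i = \Gamma$ gives $\Ir(F_2/\overline{\Gamma}) = 1$, and Lemma~\ref{lem:concat} then yields
\[
\Ir(F_2/F_1) \;\geq\; \Ir(\overline{\Gamma}/F_1)\,\Ir(F_2/\overline{\Gamma}) \;=\; 1.
\]
Repeating the argument with $\Gamma' = (\overline{F_2})^\bullet$ in the role of $\Gamma$ yields $\Ir(F_1/F_2) \geq 1$. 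Combined with the reciprocity bound $\Ir(F_1/F_2)\,\Ir(F_2/F_1) \leq 1$ from Corollary~\ref{cor:meta-upb}, both ratios must equal $1$, which is precisely $F_1 \Ieq F_2$. The implication $(3) \Rightarrow (1)$ follows by an analogous argument, this time evaluating $\sigma_S$ at $\Gamma$ and at $\Gamma'$, and concatenating through $\overline{\Gamma}, \overline{\Gamma'}$ in the appropriate direction.

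The only non-routine step is choosing the right test graphs in the backward directions. The key observation is that Theorem~\ref{thm:core} supplies a canonical information-equivalent representative of any graph---the complement of the core of its complement---whose presence in the enumerated list $\{\Gamma_i\}$ ``locks'' one side of the spectrum at the value $1$. Concatenation through this intermediate graph then upgrades the spectral equality into a direct information-ratio equality between $F_1$ and $F_2$.
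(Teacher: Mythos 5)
Your proof is correct and follows the same overall strategy as the paper's: the forward implications via chained concatenation inequalities, and the backward implications by specializing the spectral equality and invoking the reciprocity bound of Corollary~\ref{cor:meta-upb}. The one genuine difference is in the backward direction. The paper simply ``takes $G=F_2$'' and ``takes $G=F_1$'' in the hypothesis $\Ir(F_1/G)=\Ir(F_2/G)$, which implicitly relies on the fact (Theorem~\ref{thm:core}) that the core spectrum determines $\Ir(F_j/G)$ for \emph{every} $G$, not just the enumerated cores. Your version makes that reduction explicit: you evaluate at the canonical representative $\overline{\Gamma}$ with $\Gamma=(\overline{F_1})^\bullet$, where $\Ir(F_1/\overline{\Gamma})=1$ is forced, and then concatenate through $\overline{\Gamma}$ to reach $\Ir(F_2/F_1)\geq 1$. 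This is slightly longer but is arguably more faithful to the definition of $\sigma_C$, which is stated only over the enumeration $\{\Gamma_i\}$. One small notational slip: since $\Gamma=(\overline{F_1})^\bullet$, the homomorphic equivalence is $\Gamma\leftrightarrow\overline{F_1}$, not ``$\overline{\Gamma}\leftrightarrow\overline{F_1}$''; what you need in order to apply Theorem~\ref{thm:eqIr} to the pair $(\overline{\Gamma},F_1)$ is precisely $\overline{\overline{\Gamma}}=\Gamma\leftrightarrow\overline{F_1}$, so the substance is right even though the sentence as written inverts a bar.
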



Let us first demonstrate the usefulness of this result. 
\begin{example}
 Let $s \ge 3$ be a positive integer. It is known that $\{KG(ms,m)\}_{m=1}^{\infty}$ are all non-isomorphic cores, and hence not homomorphically equivalent. However, for every $m$, $\overline{KG(ms,m)} \Ieq \overline{K_s}$. Applying Theorem~\ref{thm:ratio}, we have
 \begin{enumerate}
  \item $\Ir(\overline{KG(ms,m)}/G) = \Ir(\overline{K_s}/G) = \frac{\log s}{\log \chif(G)}$ for any graph $G$;
  \item $\Ir(H/\overline{KG(ms,m)}) = \Ir(H/\overline{K_s}) = \frac{\log\Theta(H)}{\log s}$ for any graph $H$.
 \end{enumerate}
\end{example}

\begin{proof}[Proof of Theorem~\ref{thm:ratio}]
 1) $\Rightarrow$ 2): Suppose that $F_1 \Ieq F_2$. By definition, $\Ir(F_1/F_2) = \Ir(F_2/F_1) = 1$. Then, for any graph $G$, 
\[
 \Ir(F_1/G) \ge \Ir(F_1/F_2) \Ir(F_2/G) = \Ir(F_2/G) \ge \Ir(F_2/F_1) \Ir(F_1/G) = \Ir( F_1/ G).
\] 
Hence, $\Ir(F_1/ G) = \Ir(F_2/G)$ for any graph $G$.

 1) $\Rightarrow$ 3): Similarly as above, for any graph $H$, 
\[
 \Ir(H/F_1) \ge \Ir(H/ F_2) \Ir(F_2/F_1) = \Ir(H/F_2) \ge \Ir(H/F_1) \Ir(F_1/F_2) = \Ir(H/F_1).
\] 
Hence, $\Ir(H/F_1) = \Ir(H/F_2)$ for any graph $H$.

2) $\Rightarrow$ 1): Suppose that $\Ir(F_1/ G) = \Ir(F_2/G)$ for any graph $G$. Taking $G = F_2$, we have $\Ir(F_1/F_2) = \Ir(F_2/F_2) = 1$. Taking $G = F_1$, we have $\Ir(F_2/F_1) = \Ir(F_1/F_1) = 1$. Hence, $F_1 \Ieq F_2$.

3) $\Rightarrow$ 1): Similarly as above, taking $H = F_1$ and $H= F_2$ respectively, we get $\Ir(F_2/F_1) = \Ir(F_1/F_1) = 1$.
\end{proof}

We now endow the space of information equivalence classes with a natural metric. Denote by $\Lc(F)$ the equivalence class of all graphs that are information--equivalent to the graph $F$.

\begin{definition}
 Let $G$ and $H$ be two graphs. Define the function 
 \[
 d(\Lc(G),\Lc(H))\eqdef -\log (\min\{\Ir(G/H),\,\Ir(H/G)\}).
 \]
We will write $d(G,H) \eqdef d(\Lc(G),\Lc(H))$ as a shorthand notation whenever convenient. 
\end{definition}

\begin{proposition}[{\bf Metric structure}]
Equipped with $d(\cdot,\cdot)$, the set of information equivalence classes forms a metric space. 
\end{proposition}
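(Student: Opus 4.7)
The plan is to verify the four defining properties of a metric: well-definedness on equivalence classes, non-negativity together with the identity of indiscernibles, symmetry, and the triangle inequality. Symmetry is immediate from the definition, since the $\min$ is symmetric in its arguments. Well-definedness on equivalence classes follows from the implications $1)\Rightarrow 2)$ and $1)\Rightarrow 3)$ in Theorem~\ref{thm:ratio}: if $G\Ieq G'$ and $H\Ieq H'$, then $\Ir(G/H)=\Ir(G'/H)=\Ir(G'/H')$ and analogously $\Ir(H/G)=\Ir(H'/G')$, so the value of $d$ does not depend on the representatives chosen.

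For non-negativity and definiteness, the key tool is Corollary~\ref{cor:meta-upb}, which gives $\Ir(G/H)\,\Ir(H/G)\le 1$. This forces $\min\{\Ir(G/H),\Ir(H/G)\}\le 1$, hence $d(G,H)\ge 0$. Moreover $d(G,H)=0$ is equivalent to $\min\{\Ir(G/H),\Ir(H/G)\}=1$, i.e., both ratios are at least $1$; combined with the reciprocal bound, they must both equal $1$, which is precisely the definition $G\Ieq H$, i.e., $\Lc(G)=\Lc(H)$.

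The triangle inequality $d(G,F)\le d(G,H)+d(H,F)$ is equivalent, after exponentiating, to
\[
\min\{\Ir(G/F),\Ir(F/G)\}\;\ge\;\min\{\Ir(G/H),\Ir(H/G)\}\cdot\min\{\Ir(H/F),\Ir(F/H)\}.
\]
This is where the concatenation scheme (Lemma~\ref{lem:concat}) does the heavy lifting: applied to the triple $(G,H,F)$ and to the reverse chain $(F,H,G)$, it yields
\[
\Ir(F/G)\ge \Ir(H/G)\,\Ir(F/H),\qquad \Ir(G/F)\ge \Ir(H/F)\,\Ir(G/H).
\]
Each right-hand side is bounded below by the product of the two $\min$'s appearing on the right of the desired inequality, so taking the $\min$ on the left preserves the bound. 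Monotonicity of $-\log$ then delivers the triangle inequality.

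The main (and really the only) substantive obstacle is the triangle inequality; everything else is essentially bookkeeping. The fact that it reduces so cleanly to Lemma~\ref{lem:concat} is what makes $-\log\min\{\Ir(G/H),\Ir(H/G)\}$ the right functional, as opposed to e.g.\ $-\log\max$, which would fail non-negativity, or $-\log$ of an average, which would not interact nicely with composition.
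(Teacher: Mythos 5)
Your proof is correct and follows essentially the same route as the paper: non-negativity and definiteness both rest on Corollary~\ref{cor:meta-upb}, and the triangle inequality reduces to Lemma~\ref{lem:concat} applied to the two chains through $H$, exactly as in the paper. The one addition is that you explicitly check well-definedness of $d$ on equivalence classes via Theorem~\ref{thm:ratio}, a step the paper leaves implicit; that is a reasonable and correct refinement, not a different argument.
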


\begin{proof}
Since $\Ir(G/H)\Ir(H/G)\le 1$, we have $\min\{\Ir(G/H),\,\Ir(H/G)\} \le 1$. Thus, $d(G,H) \ge 0$. If $\Lc(G) = \Lc(H)$, i.e., $G \Ieq H$, then $\Ir(G/H) = \Ir(H/G) = 1$ and hence $d(G,H) = 0$. Conversely, if $d(G,H) = 0$, then $\min\{\Ir(G/H),\,\Ir(H/G)\} = 1$. It follows that $\Ir(G/H) \ge 1$ and $\Ir(H/G) \ge 1$. Together with the fact that $\Ir(G/H)\Ir(H/G) \le 1$, we have $\Ir(G/H) =\Ir(H/G) = 1$ and thus $\Lc(G) = \Lc(H)$. Symmetry follows immediately from definition. Finally, for subadditivity we have 
 \begin{align*}
  &d(\Lc(G),\Lc(H)) + d(\Lc(H),\Lc(F))\\
  &= -\log (\min\{\Ir(G/H),\,\Ir(H/G)\})-\log (\min\{\Ir(H/F),\,\Ir(F/H)\})\\
  &=-\log(\min\{\Ir(G/H),\,\Ir(H/G)\}\cdot \min\{\Ir(H/F),\,\Ir(F/H)\})\\
  &\ge -\log(\min\{\Ir(G/H)\Ir(H/F),\,\Ir(F/H)\Ir(H/G)\})\\
  &\ge -\log(\min\{\Ir(G/F),\,\Ir(F/G)\})\\
  &= d(\Lc(G),\Lc(F)).
 \end{align*}
\end{proof}

The next theorem states that $d(\cdot,\cdot)$ is contractive w.r.t. strong product.

\begin{theorem}[{\bf Contractivity  w.r.t. strong product}]
\label{thm:contraction}
For any three graphs $G,H,F$, 
\begin{align*}
d(G\boxtimes F, H\boxtimes F) \le d(G,H).  
\end{align*}
\end{theorem}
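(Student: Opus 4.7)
The plan is to show the equivalent inequality
\begin{equation*}
\min\{\Ir(G\boxtimes F/H\boxtimes F),\,\Ir(H\boxtimes F/G\boxtimes F)\} \;\ge\; \min\{\Ir(G/H),\,\Ir(H/G)\}.
\end{equation*}
Without loss of generality, assume $\Ir(G/H)\le \Ir(H/G)$; by Corollary~\ref{cor:meta-upb} this forces $\Ir(G/H)\le 1$. The task then reduces to establishing both $\Ir(G\boxtimes F/H\boxtimes F)\ge \Ir(G/H)$ and $\Ir(H\boxtimes F/G\boxtimes F)\ge \Ir(G/H)$, after which the conclusion $d(G\boxtimes F,H\boxtimes F)\le -\log\Ir(G/H)=d(G,H)$ is immediate.

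The crux of the argument is a simple \emph{lifting lemma} that I would prove first in the homomorphism language of Lemma~\ref{lem:graph-hom}: for any graphs $A,B,F$ and any $(k,n)$ code for $(A,B)$ with $k\le n$, there is a $(k,n)$ code for $(A\boxtimes F,\,B\boxtimes F)$. Indeed, the assumed code is a homomorphism $\phi\colon \overline{A}^{\vee k}\to\overline{B}^{\vee n}$, and since $k\le n$, the ``padding'' map
\begin{equation*}
\psi\colon \overline{F}^{\vee k}\to\overline{F}^{\vee n},\qquad \psi(x_1,\ldots,x_k)=(x_1,\ldots,x_k,f_0,\ldots,f_0),
\end{equation*}
for an arbitrary fixed $f_0\in V(F)$ is also a homomorphism (an edge at coordinate $i\le k$ on the left side of $\psi$ remains an edge at the same coordinate after padding). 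Since the OR product of two homomorphisms is again a homomorphism,
\begin{equation*}
\phi\vee\psi\colon \overline{A}^{\vee k}\vee\overline{F}^{\vee k}\to \overline{B}^{\vee n}\vee\overline{F}^{\vee n}
\end{equation*}
is a homomorphism, and using $\overline{X\boxtimes F}=\overline{X}\vee\overline{F}$ this is exactly a $(k,n)$ code for $(A\boxtimes F,B\boxtimes F)$.

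With the lifting lemma in hand, I would apply it twice. First, let $(k,n)$ be a code for $(H,G)$ achieving the optimal ratio $k/n=\Ir(G/H)\le 1$, so $k\le n$; lifting with $A=H$, $B=G$ gives $\Ir(G\boxtimes F/H\boxtimes F)\ge k/n=\Ir(G/H)$. Second, since $\Ir(H/G)\ge\Ir(G/H)$, a code for $(G,H)$ with ratio exactly $\Ir(G/H)\le 1$ can be produced by taking an optimal $(k_0,n_0)$ code for $(G,H)$ and padding the channel side with constant coordinates so as to reach a rational ratio $k/n=\Ir(G/H)$ with $k\le n$; lifting with $A=G$, $B=H$ then gives $\Ir(H\boxtimes F/G\boxtimes F)\ge \Ir(G/H)$.

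There is no substantial obstacle; the main point to watch is the asymmetry between source and channel in the definition of $\Ir$, which is exactly what makes the constraint $k\le n$ crucial (ensuring the identity-plus-padding works on the $F$-component). A minor bookkeeping issue is that $\Ir(G/H)$ need not be rational; in that case one takes a sequence of rational codes with ratios approaching $\Ir(G/H)$ from below and passes to the limit, invoking the convention stated before Theorem~\ref{thm:Ir-prod}.
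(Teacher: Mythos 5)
Your proof is correct and takes essentially the same approach as the paper's: the core device in both is the $F$-component padding (mapping $\overline F^{\vee k}\to\overline F^{\vee n}$ by appending a fixed vertex when $k\le n$), and your ``lifting lemma'' is exactly what the paper proves inline as its first claim. The only cosmetic differences are that you reduce by a WLOG assumption and then invoke the lemma twice (with an explicit channel-side padding step to handle the case $\Ir(H/G)>1$), whereas the paper organizes the same idea as two claims and a case analysis that handles $\Ir(G/H)\ge 1$ via a limiting sequence $k_i/n_i\to 1$; you are arguably slightly more explicit about why such a sequence with $k_i\le n_i$ exists.
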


\begin{proof}
We need to show that 
\begin{equation}
\label{eqn:contraction}
\min\{\Ir(H\boxtimes F/G\boxtimes F),\,\Ir(G\boxtimes F/H\boxtimes F)\} \ge \min\{\Ir(H/G),\,\Ir(G/H)\}. 
\end{equation}

First, we show that 
\begin{align}\label{eqn:first_claim}
\Ir(H/G) \le 1 \quad\Rightarrow\quad  \Ir(H\boxtimes F/G\boxtimes F) \ge \Ir(H/G)
\end{align}
For any $(k,n)$ code for the pair $(G,H)$, we design a code for the pair $(G\boxtimes F, H\boxtimes F)$ as follows. We map $G^k$ to $H^n$ the same way as in the $(k,n)$ code for $(G,H)$. Since $\Ir(H/G) \le 1$, we must have $k \le n$. Thus, we can map $F^k$ to $F^n$ by $\phi(f^k) = (f^k,c^{n-k})$ for all $f^k \in F^k$, where $c^{n-k}$ is an arbitrary fixed vertex in $F^{n-k}$. Clearly both mappings are non-adjacency preserving. Therefore, a $(k,n)$ code for $(G\boxtimes F, H\boxtimes F)$ exists, and thus $\Ir(H\boxtimes F/G\boxtimes F) \ge \Ir(H/G)$.

Next, we show that 
\begin{align}\label{eqn:second_claim}
I(H/G) \ge 1 \quad\Rightarrow\quad \Ir(H\boxtimes F/G\boxtimes F) \ge 1.
\end{align}
If $I(H/G) \ge 1$, then there exists a sequence of $(k_i,n_i)$ codes for $(G,H)$ such that $k_i \le n_i, i = 1,2,3,\ldots$ and $\lim_{i \to \infty}\frac{k_i}{n_i} = 1$. For each code $(k_i,n_i)$, by the same construction as above, we can design a $(k_i,n_i)$ code for the pair $(G\boxtimes F, H\boxtimes F)$. Therefore, it must be that $\Ir(H\boxtimes F/G\boxtimes F) \ge 1$.

Let us now establish~\eqref{eqn:contraction}. Since $\Ir(H/G)\Ir(G/H) \le 1$, there are two possible cases: 1) Both $\Ir(H/G) \le 1$ and $\Ir(G/H)\le 1$. Then by~\eqref{eqn:first_claim} we have that $\Ir(H\boxtimes F/G\boxtimes F) \ge \Ir(H/G)$ and $\Ir(G\boxtimes F/H\boxtimes F) \ge \Ir(G/H)$, hence~\eqref{eqn:contraction} follows. 2) $\Ir(H/G) \le 1$ and $\Ir(G/H) \ge 1$ (or vice versa). Then by~\eqref{eqn:second_claim} we have that $\Ir(G\boxtimes F/H\boxtimes F) \ge 1$, and hence $\min\{\Ir(H\boxtimes F/G\boxtimes F),\,\Ir(G\boxtimes F/H\boxtimes F)\}=\Ir(H\boxtimes F/G\boxtimes F)$. Now since $\Ir(H/G) \le 1$, we have that $\Ir(H\boxtimes F/G\boxtimes F) \ge \Ir(H/G) = \min\{\Ir(H/G),\,\Ir(G/H)\}$ by virtue of~\eqref{eqn:first_claim}, and~\eqref{eqn:contraction} follows.  
\end{proof}


For homomorphic equivalence, we know that $G^\bullet \vee H^\bullet \leftrightarrow (G \vee H)^\bullet$ for any $G,H$\cite{Hahn--Tardif1997}. In other words, if $G_1 \leftrightarrow G_2$ and $H_1 \leftrightarrow H_2$, then $G_1 \vee H_1 \leftrightarrow G_1 \vee H_2 \leftrightarrow G_2 \vee H_2 \leftrightarrow G_2 \vee H_1$. We have the following analogous result for information equivalence.

\begin{proposition}
\label{prop:info-prod}
 If four graphs $G_1,G_2,H_1,H_2$ are such that $G_1 \Ieq G_2$ and $H_1 \Ieq H_2$, then
 \[
  G_1\boxtimes H_1 \Ieq G_1 \boxtimes H_2 \Ieq G_2 \boxtimes H_2 \Ieq G_2 \boxtimes H_1.
 \]
\end{proposition}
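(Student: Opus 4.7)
The plan is to reduce the proposition directly to the contractivity property of $d(\cdot,\cdot)$ established in Theorem~\ref{thm:contraction}, together with the fact that $G \Ieq H$ is equivalent to $d(G,H)=0$.

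First I would observe that the strong product is commutative (up to isomorphism) and that information equivalence is (by the earlier lemma) transitive, so it suffices to establish two ``one-coordinate'' steps and then chain them. Specifically, the hypothesis $G_1 \Ieq G_2$ means $d(G_1,G_2)=0$. Applying Theorem~\ref{thm:contraction} with $F=H_1$ gives
\[
  d(G_1 \boxtimes H_1,\; G_2 \boxtimes H_1) \le d(G_1,G_2) = 0,
\]
so $G_1 \boxtimes H_1 \Ieq G_2 \boxtimes H_1$. Similarly, using $H_1 \Ieq H_2$ (i.e. $d(H_1,H_2)=0$), together with the commutativity of $\boxtimes$, contractivity with $F=G_2$ yields
\[
  d(G_2 \boxtimes H_1,\; G_2 \boxtimes H_2) = d(H_1 \boxtimes G_2,\; H_2 \boxtimes G_2) \le d(H_1,H_2) = 0,
\]
so $G_2 \boxtimes H_1 \Ieq G_2 \boxtimes H_2$.

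Chaining these two equivalences via transitivity of $\Ieq$ yields $G_1 \boxtimes H_1 \Ieq G_2 \boxtimes H_2$. The remaining links in the stated chain are obtained analogously: $G_1 \boxtimes H_1 \Ieq G_1 \boxtimes H_2$ (apply contractivity with $F=G_1$ to $H_1 \Ieq H_2$), and $G_1 \boxtimes H_2 \Ieq G_2 \boxtimes H_2$ (apply contractivity with $F=H_2$ to $G_1 \Ieq G_2$).

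I do not anticipate a real obstacle here; the statement is essentially a restatement of contractivity once the metric-space structure on equivalence classes is in hand. The only thing to be mindful of is the use of commutativity of $\boxtimes$ (so that contractivity in the ``first slot'' gives contractivity in either slot), and the fact that vanishing of $d$ really does recover $\Ieq$, which was checked in the verification that $(\,\cdot\,,d)$ is a metric space.
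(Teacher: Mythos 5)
Your proof is correct and follows essentially the same route as the paper: apply the contractivity of $d$ under $\boxtimes$ (Theorem~\ref{thm:contraction}) to each adjacent pair in the chain, use that $d=0$ is equivalent to $\Ieq$, and conclude by transitivity. Your explicit remark about commutativity of $\boxtimes$ is a small clarification of a step the paper leaves implicit, but the argument is the same.
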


\begin{proof}
 By Theorem~\ref{thm:contraction}, $d(G_1\boxtimes H_1,G_1 \boxtimes H_2) \le d(H_1,H_2) =0$, $d(G_1 \boxtimes H_2, G_2 \boxtimes H_2) \le d(G_1,G_2) = 0$, and $d(G_2 \boxtimes H_2, G_2 \boxtimes H_1) \le d(H_2,H_1) = 0$. Therefore, $G_1\boxtimes H_1 \Ieq G_1 \boxtimes H_2$, $G_1 \boxtimes H_2 \Ieq G_2 \boxtimes H_2$, and $G_2 \boxtimes H_2 \Ieq G_2 \boxtimes H_1$. 
\end{proof}



\subsection{A Partial Order on Information Equivalence Classes}
\label{sec:partial}

In this subsection, we define a partial order on the set of information equivalence classes, which leads to interesting properties of the information ratio and several graph invariants. In the end, we provide an intuitive explanation why separation is optimal when $\Theta(G) = \chif(G)$ (cf. Remark~\ref{rmk:separation}).

\begin{definition}
\label{def:info-poset}
 For any two graphs $F_1$ and $F_2$, we write $\Lc(F_1) \preccurlyeq \Lc(F_2)$ if $\Ir(F_2/F_1) \ge 1$.
\end{definition}


\begin{lemma}[{\bf Partial information order}]
 The relation $\preccurlyeq$ is a partial order on the set of information equivalence classes.
\end{lemma}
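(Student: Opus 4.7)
The plan is to verify the four standard requirements for a partial order, with the extra (preliminary) step of checking that the relation is well-defined on equivalence classes rather than on graphs themselves. All four steps will use only the concatenation lemma (Lemma~\ref{lem:concat}) and the reciprocal-product bound $\Ir(H/G)\Ir(G/H)\leq 1$ (Corollary~\ref{cor:meta-upb}).

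First I would verify well-definedness: if $G_1\Ieq G_2$ and $H_1\Ieq H_2$, then $\Ir(H_1/G_1)\geq 1 \Leftrightarrow \Ir(H_2/G_2)\geq 1$. By Lemma~\ref{lem:concat},
\begin{align*}
\Ir(H_2/G_2) \;\geq\; \Ir(H_2/H_1)\,\Ir(H_1/G_1)\,\Ir(G_1/G_2) \;=\; \Ir(H_1/G_1),
\end{align*}
and the symmetric inequality holds by swapping the roles. Hence the condition $\Ir(H/G)\geq 1$ depends only on $\Lc(G)$ and $\Lc(H)$.

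Next, reflexivity is immediate since the identity mapping is non-adjacency preserving, so $\Ir(F/F)\geq 1$ for every $F$, giving $\Lc(F)\preccurlyeq \Lc(F)$. For antisymmetry, suppose $\Lc(G)\preccurlyeq \Lc(H)$ and $\Lc(H)\preccurlyeq \Lc(G)$, i.e., $\Ir(H/G)\geq 1$ and $\Ir(G/H)\geq 1$. Combined with $\Ir(H/G)\Ir(G/H)\leq 1$ from Corollary~\ref{cor:meta-upb}, both ratios must equal $1$, so $G\Ieq H$ and $\Lc(G)=\Lc(H)$.

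Finally, for transitivity, assume $\Lc(F_1)\preccurlyeq \Lc(F_2)$ and $\Lc(F_2)\preccurlyeq \Lc(F_3)$, so that $\Ir(F_2/F_1)\geq 1$ and $\Ir(F_3/F_2)\geq 1$. Applying Lemma~\ref{lem:concat} gives $\Ir(F_3/F_1)\geq \Ir(F_3/F_2)\Ir(F_2/F_1)\geq 1$, hence $\Lc(F_1)\preccurlyeq \Lc(F_3)$. No step here is a substantive obstacle --- the only subtlety is the preliminary check that $\preccurlyeq$ descends to equivalence classes, which is why the lemma is phrased at the level of $\Lc(\cdot)$ rather than at the level of individual graphs.
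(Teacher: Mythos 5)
Your argument is essentially identical to the paper's: reflexivity via the identity mapping on $F$, antisymmetry by combining $\Ir(F_2/F_1)\ge 1$ and $\Ir(F_1/F_2)\ge 1$ with Corollary~\ref{cor:meta-upb}, and transitivity via Lemma~\ref{lem:concat}. The only difference is that you explicitly check that $\preccurlyeq$ is well-defined on information equivalence classes (using concatenation through the equivalences $G_1\Ieq G_2$, $H_1\Ieq H_2$), a step the paper leaves implicit; this is a minor but welcome addition.
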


\begin{proof}
Reflexivity:  For any equivalence class $\Lc(F)$, $\Ir(F/F) = 1$. Thus $\Lc(F) \le \Lc(F)$. Antisymmetry:  Suppose that $\Lc(F_1) \preccurlyeq \Lc(F_2)$ and $\Lc(F_2) \preccurlyeq \Lc(F_1)$, i.e., $\Ir(F_2/F_1) \ge 1$ and $\Ir(F_1/F_2) \ge 1$. Combined with the fact that $\Ir(F_2/F_1)\Ir(F_1/F_2) \le 1$, we have $\Ir(F_1/F_2) = \Ir(F_2/F_1) = 1$, i.e., $F_1 \Ieq F_2$. Thus $\Lc(F_1) = \Lc(F_2)$. Transitivity: Suppose that $\Lc(F_1) \preccurlyeq \Lc(F_2)$ and $\Lc(F_2) \preccurlyeq \Lc(F_3)$, i.e., $\Ir(F_2/F_1) \ge 1$ and $\Ir(F_3/F_2) \ge 1$. Then $\Ir(F_3/F_1) \ge \Ir(F_3/F_2)\Ir(F_2/F_1) \ge 1$. Thus $\Lc(F_1) \preccurlyeq \Lc(F_3)$.
\end{proof}

From Definition~\ref{def:info-poset}, if two graphs $F_1$ and $F_2$ are such that $\Ir(F_1/F_2) < 1$ and $\Ir(F_2/F_1) < 1$, then they are not comparable under the partial order $\preccurlyeq$. The following is an example of two incomparable graphs.

\begin{example}[Incomparable graphs]
Let $F_1 = C_5 \boxtimes C_5$ and $F_2 = \overline{K_6}$. We know that $\Theta(F_1) = \Theta(C_5)^2 = 5, \chif(F_1) = \chif(C_5)^2 = 6.25, \Theta(F_2)=  6$, and $\chif(F_2) = 6$. Thus,
 \[
  \Ir(F_1/F_2) \le \frac{\log\Theta(F_1)}{\log\Theta(F_2)} = \frac{\log 5}{\log 6} < 1 \quad \text{ and }\quad \Ir(F_2/F_1) \le \frac{\log \chif(F_2)}{\log \chif(F_1)} = \frac{\log 6}{\log 6.25} < 1.
 \]
\end{example}

\begin{example}[An infinite chain of partially ordered graphs]
The chain of empty graphs
\[
 \overline{K_1} \preccurlyeq \overline{K_2}  \preccurlyeq \overline{K_3} \preccurlyeq \cdots \preccurlyeq \overline{K_n} \preccurlyeq \cdots
\]
and the complement of odd cycles
\[
 \cdots \preccurlyeq \overline{C_{2n+1}} \preccurlyeq \cdots \preccurlyeq \overline{C_7} \preccurlyeq \overline{C_5} \preccurlyeq \overline{C_3}
\]
are two infinite chains of partially ordered graphs. More generally, for integers $p,q$ with $0 < q \le p$, let $K_{p/q}$ be the \emph{rational complete graph} with vertices $\{0,1,\ldots,p-1\}$ and edges $i \sim j$ if $q \le |i-j| \le p-q$. It is known that for $p/q \ge 2$, $K_{p/q} \to K_{p'/q'}$ if and only if $p/q \le p'/q'$~\cite[Theorem 6.3]{Hell--nesetril2004}. Moreover, for $p/q +1 \le p'/q'$, $\chi_f(K_{p/q}) < \chi_f(K_{p'/q'})$~\cite[Corollaries 6.4, 6.20]{Hell--nesetril2004}, which implies $\Lc(\overline{K_{p/q}}) \neq \Lc(\overline{K_{p'/q'}})$. Thus, 
$$\overline{K_{p_1/q_1}} \preccurlyeq \overline{K_{p_2/q_2}} \preccurlyeq \cdots \preccurlyeq \overline{K_{p_i/q_i}} \preccurlyeq \cdots$$ 
with $2 \le p_{i}/q_i +1 \le p_{i+1}/q_{i+1}$ for all $i$ is an infinite chain of partially ordered graphs. 
%
\end{example}

Recall that we compare spectra element-wise. 
\begin{theorem}[{\bf Information order and graph spectra}]
\label{thm:partial}
The following statements are equivalent:
\begin{enumerate}
 \item $\Lc(F_1) \preccurlyeq \Lc(F_2)$.
 \item $\sigma_S(F_2) \le \sigma_S(F_1)$.
 \item $\sigma_C(F_1) \le \sigma_C(F_2)$.
\end{enumerate}
\end{theorem}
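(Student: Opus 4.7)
My plan is to prove $(1)\Leftrightarrow(2)\Leftrightarrow(3)$ by mimicking the proof strategy of Theorem~\ref{thm:ratio}, replacing equalities with inequalities. The two workhorses will be the concatenation inequality (Lemma~\ref{lem:concat}) for the forward directions, and Theorem~\ref{thm:ratio} itself for the reverse directions, since the latter tells us that any graph can be replaced (in either slot of $\Ir$) by any information-equivalent graph without changing the ratio. Recall that $\sigma_S(F)=\{\Ir(\overline{\Gamma_i}/F)\}_i$ and $\sigma_C(F)=\{\Ir(F/\overline{\Gamma_i})\}_i$ with $\{\Gamma_i\}$ an enumeration of all non-isomorphic cores, and that all spectrum inequalities are meant coordinate-wise.

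For $(1)\Rightarrow(2)$: given $\Ir(F_2/F_1)\ge 1$, Lemma~\ref{lem:concat} yields, for every index $i$, the chain
\[
\Ir(\overline{\Gamma_i}/F_1)\;\ge\;\Ir(\overline{\Gamma_i}/F_2)\,\Ir(F_2/F_1)\;\ge\;\Ir(\overline{\Gamma_i}/F_2),
\]
which is precisely $\sigma_S(F_2)\le \sigma_S(F_1)$. The implication $(1)\Rightarrow(3)$ is entirely symmetric: applying Lemma~\ref{lem:concat} in the other order gives $\Ir(F_2/\overline{\Gamma_i})\ge \Ir(F_2/F_1)\,\Ir(F_1/\overline{\Gamma_i})\ge \Ir(F_1/\overline{\Gamma_i})$ for every $i$.

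For $(2)\Rightarrow(1)$: I will pick the index $j$ with $\Gamma_j=(\overline{F_2})^\bullet$ (such $j$ exists because the enumeration $\{\Gamma_i\}$ contains every core up to isomorphism). Then $\overline{\Gamma_j}$ and $F_2$ have homomorphically equivalent complements, so by Theorem~\ref{thm:eqIr} (or equivalently, by the chain of implications in Theorem~\ref{thm:ratio}) they are information-equivalent. Applying Theorem~\ref{thm:ratio} to this pair lets me substitute $\overline{\Gamma_j}$ for $F_2$ in both slots of any information ratio, hence $\Ir(\overline{\Gamma_j}/F_2)=\Ir(\overline{\Gamma_j}/\overline{\Gamma_j})=1$ and $\Ir(F_2/F_1)=\Ir(\overline{\Gamma_j}/F_1)$. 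The hypothesis $\sigma_S(F_2)\le \sigma_S(F_1)$ evaluated at coordinate $j$ then gives $\Ir(\overline{\Gamma_j}/F_1)\ge \Ir(\overline{\Gamma_j}/F_2)=1$, so $\Ir(F_2/F_1)\ge 1$, establishing $(1)$. The implication $(3)\Rightarrow(1)$ is handled symmetrically by choosing $j$ with $\Gamma_j=(\overline{F_1})^\bullet$, so that $\overline{\Gamma_j}\Ieq F_1$, yielding $\Ir(F_1/\overline{\Gamma_j})=1$ and $\Ir(F_2/F_1)=\Ir(F_2/\overline{\Gamma_j})\ge \Ir(F_1/\overline{\Gamma_j})=1$.

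There is no real obstacle here: everything reduces to the concatenation inequality together with the already-established fact that information ratios depend only on the information-equivalence class, and the existence of a canonical representative of each class inside the core enumeration (via taking complement-then-core). The only point that warrants a brief comment in the written proof is the legitimacy of indexing a specific $\overline{\Gamma_j}$ that is information-equivalent to a given graph, which follows from the definition of $\{\Gamma_i\}$ as a complete list of cores together with Theorem~\ref{thm:eqIr}.
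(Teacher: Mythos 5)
Your proof is correct and takes essentially the same route as the paper: the concatenation inequality (Lemma~\ref{lem:concat}) for the forward implications, and evaluating the spectrum at a suitable coordinate for the reverse implications. The paper's proof of the reverse implications is phrased as if the spectra ranged over all graphs (``Taking $G=F_1$''), whereas the spectra are by definition indexed only by cores; you handle this point carefully by identifying the canonical representative $\overline{\Gamma_j}$ with $\Gamma_j=(\overline{F_2})^\bullet$ (resp.\ $(\overline{F_1})^\bullet$) and invoking Theorem~\ref{thm:eqIr} to transfer the inequality, which is a welcome tightening of the argument.
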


\begin{proof}
1) $\Rightarrow$ 2) and 3): Suppose that $\Lc(F_1) \preccurlyeq \Lc(F_2)$, i.e., $\Ir(F_2/F_1) \ge 1$. Then $\Ir(F_2/G) \ge \Ir(F_2/F_1)\Ir(F_1/G) \ge \Ir(F_1/G)$ for any graph $G$, and $\Ir(H/F_1) \ge \Ir(H/F_2)\Ir(F_2/F_1) \ge \Ir(H/F_2)$ for any graph $H$.

2) $\Rightarrow$ 1): Suppose that $\Ir(F_1/G) \le \Ir(F_2/G)$ for any graph $G$. Taking $G = F_1$, we have $\Ir(F_2/F_1) \ge \Ir(F_1/F_1) = 1$, i.e., $\Lc(F_1) \preccurlyeq \Lc(F_2)$. 

3) $\Rightarrow$ 1): Suppose that $\Ir(H/F_1) \ge \Ir(H/F_2)$ for any graph $H$. Taking $H = F_2$, we have $\Ir(F_2/F_1) \ge \Ir(F_2/F_2) = 1$, i.e., $\Lc(F_1) \preccurlyeq \Lc(F_2)$. 
\end{proof}


\begin{example}
 Let $F_1$ be an induced subgraph of $F_2$. Then 
\begin{enumerate}
 \item $\sigma_S(F_2) \leq \sigma_S(F_1)$.
 \item $\sigma_C(F_1) \leq \sigma_C(F_2)$.
\end{enumerate}
To see this, note that the identity mapping from $F_1$ to $F_2$ is trivially non-adjacency preserving. Thus $\Ir(F_2/F_1) \ge 1$, i.e., $\Lc(F_1) \preccurlyeq \Lc(F_2)$ and the claims follow by virtue of Theorem~\ref{thm:partial}.
\end{example}

Recall Lemma~\ref{lem:hom-mono} and Lemma~\ref{lem:Haemers} that characterized several hom-monotone functions. The following proposition provides analogous statements for information equivalence. 
\begin{proposition}[Information-monotone functions]
\label{prop:info-mono}
 Suppose $\Lc(F_1) \preccurlyeq \Lc(F_2)$. Then
 \begin{enumerate}
  \item $\Theta(F_1) \le \Theta(F_1)$;
  \item $\vartheta(F_1) \le \vartheta(F_2)$;
  \item $\chif(F_1) \le \chif(F_2)$;
  \item $\gamma_f(F_1) \le \gamma_f(F_2)$.
 \end{enumerate}
\end{proposition}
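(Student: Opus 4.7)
The plan is to derive Proposition~\ref{prop:info-mono} as a direct consequence of the upper bounds on the information ratio established earlier. By Definition~\ref{def:info-poset}, the hypothesis $\Lc(F_1) \preccurlyeq \Lc(F_2)$ is simply the statement that $\Ir(F_2/F_1) \geq 1$. Thus the entire proposition reduces to applying the hom-monotone and minrank upper bounds (Theorem~\ref{thm:meta-upb} and Theorem~\ref{thm:Haemers}) to this single inequality.

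Concretely, for each of the four invariants $X \in \{\Theta,\vartheta,\chif,\gamma_f\}$, Theorem~\ref{thm:meta-upb} or Theorem~\ref{thm:Haemers} gives
\[
1 \;\le\; \Ir(F_2/F_1) \;\le\; \frac{\log X(F_2)}{\log X(F_1)},
\]
whenever $\log X(F_1) > 0$. Since the logarithm is monotone, this immediately yields $X(F_1) \le X(F_2)$. I would simply present this in four lines, one per invariant, citing the appropriate bound.

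The only loose end is the degenerate case $\log X(F_1) = 0$, i.e., $X(F_1) = 1$. For all four invariants we have $X(G) \ge 1$ for every nonempty graph $G$ (e.g., $\Theta(G) \ge 1$ from a single vertex, and analogously for $\vartheta$, $\chif$, and $\gamma_f$), so the inequality $X(F_1) \le X(F_2)$ holds trivially in this case. I would dispatch this in a single sentence at the end.

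I do not anticipate any real obstacle: once the information ratio has been bounded above by ratios of logarithms of these invariants in the preceding sections, monotonicity under $\preccurlyeq$ is essentially a restatement of those bounds specialized to the case $\Ir \ge 1$. The only mild care needed is in the edge case above, and in noting that for $\gamma_f$ one uses the multiplicative behavior already exploited in Theorem~\ref{thm:Haemers}.
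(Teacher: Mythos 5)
Your proof is correct and takes essentially the same route as the paper's, which simply states that the result ``follows directly from definition and Theorems~\ref{thm:meta-upb} and~\ref{thm:Haemers}.'' Your handling of the degenerate case $X(F_1)=1$ is a welcome extra bit of care that the paper leaves implicit.
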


\begin{proof}
 Follows directly from definition and Theorems~\ref{thm:meta-upb} and~\ref{thm:Haemers}.
\end{proof}

\begin{remark}
While the chromatic number is a hom-monotone function, it is not an information-monotone function; namely, it is not true in general that $\Lc(F_1) \preccurlyeq \Lc(F_2)$ implies $\bar{\chi}(F_1) \le \bar{\chi}(F_2)$. For example, let $F_1 = \overline{KG(6,2)}$ and $F_2 = \overline{KG(3,1)} = \overline{K_3}$. We know that $F_1 \Ieq F_2$ and thus $\Lc(F_1) \preccurlyeq \Lc(F_2)$. However $\bar{\chi}(F_1) = 4$, which is strictly greater than $\bar{\chi}(F_2) = 3$. Similarly, one can check that the independence number is not an information-monotone function either.  
\end{remark}

\begin{corollary}
\label{cor:info-mono}
 If $F_1 \Ieq F_2$, then
 \begin{enumerate}
  \item $\Theta(F_1) = \Theta(F_1)$;
  \item $\vartheta(F_1) = \vartheta(F_2)$;
  \item $\chif(F_1) = \chif(F_2)$;
  \item $\gamma_f(F_1) = \gamma_f(F_2)$.
 \end{enumerate}
\end{corollary}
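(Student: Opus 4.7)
The plan is to deduce the four equalities directly from the one-sided inequalities supplied by Proposition~\ref{prop:info-mono}, by reading the hypothesis $F_1 \Ieq F_2$ as a pair of comparisons in the partial order $\preccurlyeq$.

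First I would unpack the definition: $F_1 \Ieq F_2$ means $\Ir(F_1/F_2) = \Ir(F_2/F_1) = 1$. In particular both information ratios are at least $1$, so by Definition~\ref{def:info-poset} we simultaneously have $\Lc(F_1) \preccurlyeq \Lc(F_2)$ (from $\Ir(F_2/F_1) \geq 1$) and $\Lc(F_2) \preccurlyeq \Lc(F_1)$ (from $\Ir(F_1/F_2) \geq 1$). Equivalently $\Lc(F_1) = \Lc(F_2)$, since $\preccurlyeq$ has been shown to be a partial order on equivalence classes.

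Next I would apply Proposition~\ref{prop:info-mono} in each direction. The first comparison $\Lc(F_1) \preccurlyeq \Lc(F_2)$ yields
\[
\Theta(F_1) \le \Theta(F_2),\quad \vartheta(F_1) \le \vartheta(F_2),\quad \chif(F_1) \le \chif(F_2),\quad \gamma_f(F_1) \le \gamma_f(F_2).
\]
The reverse comparison $\Lc(F_2) \preccurlyeq \Lc(F_1)$ yields the opposite inequalities. Combining the two sets of inequalities gives equality in all four invariants, which is the conclusion.

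There is no real obstacle here: the corollary is a formal consequence of the antisymmetric use of the information monotonicity proved in the preceding proposition. The only thing to be careful about is invoking Proposition~\ref{prop:info-mono} with the roles of $F_1$ and $F_2$ swapped, which is legitimate because information equivalence is symmetric.
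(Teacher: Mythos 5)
Your proof is correct and matches the argument the paper implicitly intends (the paper gives no explicit proof, presenting this as an immediate consequence of Proposition~\ref{prop:info-mono}): unpack $F_1 \Ieq F_2$ into the two comparisons $\Lc(F_1)\preccurlyeq\Lc(F_2)$ and $\Lc(F_2)\preccurlyeq\Lc(F_1)$, apply the monotonicity in both directions, and conclude equality.
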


%

Recall from Theorem~\ref{thm:tight} that if $\Theta(G) = \chif(G)$, then the separation scheme is optimal regardless of the structure of $H$. We now provide an intuitive explanation of this fact (see also Remark~\ref{rmk:separation}). It is well known that for any positive integer $s$, $\a(\overline{G}) = \chi(G) = s$ if and only if $G \leftrightarrow K_s$. The following proposition provides an asymptotic version of this statement. 
\begin{proposition}
\label{prop:info-Ks}
$G \Ieq \overline{K_s}$ if and only if $\Theta(G) = \chif(G) = s$.
\end{proposition}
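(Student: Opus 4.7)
The plan is to prove both directions by directly evaluating $\Ir(G/\overline{K_s})$ and $\Ir(\overline{K_s}/G)$, which turn out to have clean closed-form expressions because $\overline{K_s}$ is a ``separation-friendly'' graph in the sense of Theorem~\ref{thm:tight}.

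First, I would observe that $\Theta(\overline{K_s}) = \chif(\overline{K_s}) = s$, so $\overline{K_s}$ satisfies both condition 1 and condition 2 of Theorem~\ref{thm:tight}. Consequently, when $\overline{K_s}$ appears as either source or channel, the separation scheme from Theorem~\ref{thm:separation} together with the hom-monotone upper bounds of Theorem~\ref{thm:meta-upb} must coincide, giving the identities
\begin{align*}
\Ir(\overline{K_s}/G) \;=\; \frac{\log s}{\log \chif(G)}, \qquad
\Ir(G/\overline{K_s}) \;=\; \frac{\log \Theta(G)}{\log s}
\end{align*}
for every graph $G$ (these also follow as direct generalizations of Propositions~\ref{prop:cap} and~\ref{prop:chif} once tightness of separation is invoked).

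Next, I would simply unpack the definition of information equivalence. By definition, $G \Ieq \overline{K_s}$ is equivalent to $\Ir(G/\overline{K_s}) = \Ir(\overline{K_s}/G) = 1$. Substituting the two closed-form expressions above, this holds if and only if $\log \Theta(G) = \log s$ and $\log \chif(G) = \log s$, i.e., $\Theta(G) = \chif(G) = s$. This establishes both directions at once.

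I do not anticipate any real obstacle; the argument is essentially bookkeeping. The only subtlety worth emphasizing in the write-up is to justify the two closed-form expressions for $\Ir(\overline{K_s}/G)$ and $\Ir(G/\overline{K_s})$ cleanly — the lower bound side comes from the separation scheme (Theorem~\ref{thm:separation}), while the matching upper bound comes from either the $\Theta$-bound or the $\chif$-bound in Theorem~\ref{thm:meta-upb}, which agree because $\Theta(\overline{K_s}) = \chif(\overline{K_s}) = s$.
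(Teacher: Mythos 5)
Your proof is correct, and it takes a genuinely cleaner route than the paper's. The paper splits the argument asymmetrically: the forward direction ($G \Ieq \overline{K_s} \Rightarrow \Theta(G)=\chif(G)=s$) is derived from Corollary~\ref{cor:info-mono} (information--monotonicity of $\Theta$ and $\chif$), while the reverse direction is derived from the separation lower bound, showing $\Ir(G/\overline{K_s})\geq 1$ and $\Ir(\overline{K_s}/G)\geq 1$ and then implicitly invoking Corollary~\ref{cor:meta-upb} to turn those into equalities. You instead compute the \emph{exact} values $\Ir(\overline{K_s}/G)=\log s/\log\chif(G)$ and $\Ir(G/\overline{K_s})=\log\Theta(G)/\log s$ once, via the tightness-of-separation cases in Theorem~\ref{thm:tight} (both of which are triggered because $\Theta(\overline{K_s})=\chif(\overline{K_s})=s$), and then read off the ``iff'' by setting both equal to $1$. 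The underlying ingredients are the same — the separation lower bound and the hom-monotone upper bounds of Theorem~\ref{thm:meta-upb} — but your packaging proves something slightly stronger (the closed forms) as an intermediate step and handles both directions uniformly, which is arguably the more natural way to argue an equivalence.
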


\begin{proof}
 Suppose that $G \Ieq K_s$. By Corollary~\ref{cor:info-mono}, $\Theta(G) = \Theta(\overline{K_s}) = s$ and $\chif(G) = \chif(\overline{K_s}) = s$. On the other hand, assume that $\Theta(G) = \chif(G) = s$. Then $\Ir(G/\overline{K_s}) \ge \frac{\log\Theta(G)}{\log \chif(\overline{K_s})} = \frac{\log s}{\log s}= 1$ and $\Ir(\overline{K_s}/G) \ge \frac{\log\Theta(\overline{K_s})}{\log \chif(G)} = \frac{\log s}{\log s} = 1$.
\end{proof}

Recall that separation maps the source graph $G$ to an empty graph. Proposition~\ref{prop:info-Ks} states that when $\Theta(G) = \chif(G)$, the graph $G$ is information--equivalent to an empty graph, which is intuitively why separation incurs no loss in the this case, regardless of the structure of the channel graph $H$.


\subsection{Weak Information Equivalence}
In this subsection we introduce another information ratio based equivalence relation, whose equivalence classes also form a metric space in conjunction with a suitably defined distance function. This equivalence relation provides the most general conditions for equalities in Theorem~\ref{thm:Ir-prod} and Theorem~\ref{thm:Ir-invprod}.

\begin{definition}
 We say that two graphs $F_1$ and $F_2$ are \emph{weakly information--equivalent} if $\Ir(F_1/F_2)\Ir(F_2/F_1) = 1$. When this is the case, we write $F_1 \Ieq_w F_2$.
\end{definition}

\begin{lemma}
 The relation $\Ieq_w$ is an equivalence relation on the set of all graphs. 
\end{lemma}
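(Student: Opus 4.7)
The plan is to verify the three axioms (reflexivity, symmetry, transitivity) for the relation $\Ieq_w$. Reflexivity and symmetry will be immediate, while transitivity will rely on combining the concatenation lower bound (Lemma~\ref{lem:concat}) with the reciprocal upper bound (Corollary~\ref{cor:meta-upb}) that have already been established.

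For reflexivity, I would first observe that $\Ir(F/F) = 1$ for any graph $F$: the identity mapping gives $\Ir(F/F) \geq 1$, and Corollary~\ref{cor:meta-upb} applied with $G=H=F$ yields $\Ir(F/F)^2 \leq 1$, hence equality. Thus $\Ir(F/F)\Ir(F/F) = 1$, so $F \Ieq_w F$. Symmetry is immediate, since the defining product $\Ir(F_1/F_2)\Ir(F_2/F_1)$ is manifestly symmetric in $F_1$ and $F_2$.

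The main step is transitivity. Assume $F_1 \Ieq_w F_2$ and $F_2 \Ieq_w F_3$, so that $\Ir(F_1/F_2)\Ir(F_2/F_1) = 1$ and $\Ir(F_2/F_3)\Ir(F_3/F_2) = 1$. By Lemma~\ref{lem:concat} applied twice (once with intermediate graph $F_2$ between $F_1$ and $F_3$, and once in the reverse direction), we get
\begin{align*}
\Ir(F_3/F_1) &\ge \Ir(F_2/F_1)\Ir(F_3/F_2), \\
\Ir(F_1/F_3) &\ge \Ir(F_2/F_3)\Ir(F_1/F_2).
\end{align*}
Multiplying these inequalities and regrouping,
\begin{align*}
\Ir(F_1/F_3)\Ir(F_3/F_1) &\ge \bigl(\Ir(F_1/F_2)\Ir(F_2/F_1)\bigr)\bigl(\Ir(F_2/F_3)\Ir(F_3/F_2)\bigr) = 1.
\end{align*}
On the other hand, Corollary~\ref{cor:meta-upb} gives $\Ir(F_1/F_3)\Ir(F_3/F_1) \le 1$. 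Combining the two bounds yields $\Ir(F_1/F_3)\Ir(F_3/F_1) = 1$, so $F_1 \Ieq_w F_3$.

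There is no real obstacle here; the proof is structurally identical to the one already given for $\Ieq$, with the single change that instead of forcing each individual ratio to equal $1$ we only control the product, and the concatenation and reciprocal bounds happen to be multiplicative in exactly the right way for this to work.
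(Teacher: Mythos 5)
Your proof is correct and follows essentially the same route as the paper's: the transitivity step in the paper also multiplies the two concatenation inequalities through the intermediate graph $F_2$ and sandwiches with the reciprocal upper bound from Corollary~\ref{cor:meta-upb}. Your treatment of reflexivity and symmetry is just slightly more explicit than the paper's (which dismisses them as trivial, with a small typo writing ``antisymmetry'' where ``symmetry'' is meant).
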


\begin{proof}
Reflexivity and antisymmetry are trivial. For transitivity, suppose that $F_1 \Ieq_w F_2$ and $F_2 \Ieq_w F_3$, i.e., $\Ir(F_1/F_2)\Ir(F_2/F_1) = 1$ and $\Ir(F_2/F_3)\Ir(F_3/F_2) = 1$. Then $1 \ge \Ir(F_1/F_3)\Ir(F_3/F_1) \ge \Ir(F_1/F_2)\Ir(F_2/F_1)\Ir(F_3/F_2)\Ir(F_2/F_3) = 1$. Thus, $F_1 \Ieq_w F_3$.
\end{proof}

\begin{proposition}
 If $F_1 \Ieq F_2$ then $F_1 \Ieq_w F_2$. The converse is not true in general. 
\end{proposition}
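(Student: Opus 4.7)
The forward direction $F_1 \Ieq F_2 \Rightarrow F_1 \Ieq_w F_2$ is immediate from the definitions, since $\Ir(F_1/F_2) = \Ir(F_2/F_1) = 1$ trivially gives $\Ir(F_1/F_2)\,\Ir(F_2/F_1) = 1$. No work is needed for this part.

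For the converse, the plan is to exhibit an explicit counterexample using a graph and one of its strong powers. The main tool is the identity $\Ir(F^{m_1}/F^{m_2}) = m_1/m_2$ established earlier (as one of the examples in the section on tightness of bounds, which itself follows from the identity mapping $(F^{m_2})^{m_1}\to (F^{m_1})^{m_2}$ together with any of the hom-monotone upper bounds, e.g.\ Theorem~\ref{thm:meta-upb} applied via $\Theta(F^m)=\Theta(F)^m$). Specializing to $m_1 = m$ and $m_2 = 1$ with $m \ge 2$, this yields $\Ir(F^m/F) = m$ and $\Ir(F/F^m) = 1/m$. Their product equals $1$, so $F \Ieq_w F^m$, while $\Ir(F^m/F) = m \neq 1$ shows $F \not\Ieq F^m$.

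To make the counterexample fully concrete I would take $F = \overline{K_2}$ and $m = 2$, so that $F^m = \overline{K_4}$; here $\Ir(\overline{K_4}/\overline{K_2}) = \log \Theta(\overline{K_4}) = 2$ and $\Ir(\overline{K_2}/\overline{K_4}) = 1/\log\chif(\overline{K_4}) = 1/2$ by Propositions~\ref{prop:cap} and~\ref{prop:chif}, so the required inequalities can also be read off directly without invoking the general power identity. There is no real obstacle in the argument; both halves of the proposition reduce to facts already in hand, and the multiplicativity $\Ir(F^m/F)\,\Ir(F/F^m) = 1$ is essentially forced by the behavior of information ratios on powers of a fixed graph.
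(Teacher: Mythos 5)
Your proof is correct and follows essentially the same route as the paper: the forward direction is trivial from the definitions, and for the converse the paper also uses the pair $(F_2^2, F_2)$, relying on $\Ir(F_2^2/F_2)=2$ and $\Ir(F_2/F_2^2)=1/2$. Your version is a touch more careful in that you instantiate with the concrete non-degenerate choice $F=\overline{K_2}$ (the paper implicitly assumes $F_2$ is such that $\Ir(F_2^2/F_2)=2$ holds, which excludes degenerate graphs like complete graphs where the ratio is not well-behaved), but this is a refinement of the same idea rather than a different argument.
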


\begin{proof}
 Clearly, $\Ir(F_1/F_2) = \Ir(F_2/F_1) = 1$ implies $\Ir(F_1/F_2)\Ir(F_2/F_1) = 1$. For the converse, let $F_1 = F_2^2$. Then $\Ir(F_1/F_2) = 2$ and $\Ir(F_2/F_1) = 1/2$. Therefore, $F_1$ is weakly information--equivalent to $F_2$, yet not information--equivalent to $F_2$.
\end{proof}

Recall that multiplying spectra by a scalar is done element-wise. 
\begin{theorem}[Weak information equivalence and graph spectra]
\label{thm:weak-equiv}
The following statements are equivalent:
 \begin{enumerate}
  \item $F_1 \Ieq_w F_2$.
  \item $\sigma_S(F_1) = \Ir(F_2/F_1)\cdot \sigma_S(F_2)$.
  \item $\sigma_C(F_1) = \Ir(F_1/F_2)\cdot \sigma_C(F_2)$.
 \end{enumerate}
\end{theorem}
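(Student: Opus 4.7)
The plan is to mimic the proof of Theorem~\ref{thm:ratio}, with the concatenation lemma (Lemma~\ref{lem:concat}) as the main driver, carrying along the scalar factor $\Ir(F_2/F_1)$ (resp. $\Ir(F_1/F_2)$) that appears in the weak setting.

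For (1) $\Rightarrow$ (2), suppose $\Ir(F_1/F_2)\Ir(F_2/F_1)=1$. For any graph $H$, concatenation gives the two inequalities
\begin{align*}
\Ir(H/F_1) &\ge \Ir(F_2/F_1)\,\Ir(H/F_2),\\
\Ir(H/F_2) &\ge \Ir(F_1/F_2)\,\Ir(H/F_1).
\end{align*}
Multiplying the second inequality by $\Ir(F_2/F_1)$ and invoking (1) to replace $\Ir(F_2/F_1)\Ir(F_1/F_2)$ by $1$ yields $\Ir(F_2/F_1)\Ir(H/F_2)\ge\Ir(H/F_1)$, so the two inequalities combine to an equality $\Ir(H/F_1)=\Ir(F_2/F_1)\Ir(H/F_2)$. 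Specializing $H=\overline{\Gamma_i}$ for every $i$ gives (2). The implication (1) $\Rightarrow$ (3) is symmetric: concatenate in the other argument, i.e.\
\begin{align*}
\Ir(F_1/G) &\ge \Ir(F_1/F_2)\,\Ir(F_2/G),\\
\Ir(F_2/G) &\ge \Ir(F_2/F_1)\,\Ir(F_1/G),
\end{align*}
multiply the second by $\Ir(F_1/F_2)$, use (1), and specialize $G=\overline{\Gamma_i}$.

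For (2) $\Rightarrow$ (1), let $\Gamma_{i_1}$ be the core of $\overline{F_1}$ in the fixed enumeration. Applying Theorem~\ref{thm:core} twice gives $\Ir(\overline{\Gamma_{i_1}}/F_1)=\Ir(\overline{\Gamma_{i_1}}/\overline{\Gamma_{i_1}})=1$ (since $\Gamma_{i_1}$ is its own core) and $\Ir(\overline{\Gamma_{i_1}}/F_2)=\Ir(\overline{\Gamma_{i_1}}/\overline{\Gamma_{i_2}})=\Ir(F_1/F_2)$, where $\Gamma_{i_2}$ is the core of $\overline{F_2}$. Substituting into the $i_1$-th entry of the source-spectrum identity in (2) yields $1=\Ir(F_2/F_1)\Ir(F_1/F_2)$, which is (1). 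The implication (3) $\Rightarrow$ (1) is analogous: using the same $\Gamma_{i_1}$ and Theorem~\ref{thm:core}, we obtain $\Ir(F_1/\overline{\Gamma_{i_1}})=1$ and $\Ir(F_2/\overline{\Gamma_{i_1}})=\Ir(F_2/F_1)$, so the $i_1$-th entry of the channel-spectrum identity in (3) becomes $1=\Ir(F_1/F_2)\Ir(F_2/F_1)$.

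The only nontrivial step is the converse direction, where the spectra are indexed only by cores rather than by all graphs: the remedy is Theorem~\ref{thm:core}, which allows us to identify $\Ir(\overline{\Gamma_{i_1}}/F_2)$ with $\Ir(F_1/F_2)$ (and the analogous statement for channel spectra). Everything else is a direct manipulation of the sub-multiplicativity provided by Lemma~\ref{lem:concat} together with the universal bound $\Ir(G/H)\Ir(H/G)\le 1$ from Corollary~\ref{cor:meta-upb}.
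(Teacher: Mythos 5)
Your proof is correct and follows essentially the same strategy as the paper: the forward directions chain the concatenation inequality (Lemma~\ref{lem:concat}) in both orders, apply the hypothesis $\Ir(F_1/F_2)\Ir(F_2/F_1)=1$ to force equality, and specialize to a spectral coordinate; the converse directions pick out a specific coordinate and use $\Ir(F/F)=1$. The one place you are more careful than the paper is the converse: the spectra $\sigma_S,\sigma_C$ are indexed only by cores $\overline{\Gamma_i}$, and the paper's argument simply plugs in $G=F_1$ (or $H=F_1$), which is not literally a coordinate of the spectrum. Your explicit reduction via Theorem~\ref{thm:core} to the index $i_1$ for which $\Gamma_{i_1}=(\overline{F_1})^\bullet$, together with the identifications $\Ir(\overline{\Gamma_{i_1}}/F_1)=1$ and $\Ir(\overline{\Gamma_{i_1}}/F_2)=\Ir(F_1/F_2)$, closes this small gap cleanly; otherwise the arguments coincide.
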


\begin{proof}
1) $\Rightarrow$ 2): Suppose that $\Ir(F_1/F_2)\Ir(F_2/F_1) = 1$. Then, we have $\Ir(F_1/G) \ge \Ir(F_2/F_1)\Ir(G/F_2) \ge \Ir(F_2/F_1)\Ir(F_1/F_2)\Ir(G/F_1) = \Ir(G/F_1)$ for any graph $G$. Thus, every inequality in between should hold with equality. In particular, $\Ir(F_1/G) =\Ir(F_1/F_2)\Ir(F_2/G)$. 

1) $\Rightarrow$ 3): Similarly as above, suppose that $\Ir(F_1/F_2)\Ir(F_2/F_1) = 1$. We have $\Ir(H/F_1) \ge \Ir(F_2/F_1)\Ir(H/F_2) \ge \Ir(F_2/F_1)\Ir(F_1/F_2)\Ir(H/F_1) = \Ir(H/F_1)$ for any graph $H$. Thus, $\Ir(H/F_1) = \Ir(F_2/F_1)\Ir(H/F_1)$.

2) $\Rightarrow$ 1): Suppose that $\Ir(F_1/G) = \Ir(F_1/F_2)\Ir(F_2/G)$ for any graph $G$. Taking $G = F_1$, we get $1 = \Ir(F_1/F_1) = \Ir(F_1/F_2)\Ir(F_2/F_1)$. Thus, $F_1 \Ieq_w F_2$. 

3) $\Rightarrow$ 1): Similarly as above, taking $H = F_1$, we have $1= \Ir(F_1/F_1) =\Ir(F_2/F_1)\Ir(H/F_1) $. Thus $F_1 \Ieq_w F_2$.
\end{proof}


\begin{definition}
 Let $G$ and $H$ be two graphs. Define the function 
 \[d_w(\Lc(G),\Lc(H))\eqdef -\log (\Ir(G/H)\Ir(H/G)).\]
We will write $d_w(G,H) \eqdef d_w(\Lc(G),\Lc(H))$ as a shorthand notation whenever convenient. 
\end{definition}

\begin{proposition}[{\bf Metric structure}]
Equipped with $d_w(\cdot,\cdot)$, the set of weakly information equivalence classes forms a metric space. 
\end{proposition}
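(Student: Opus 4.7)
The plan is to check the four metric axioms (well-definedness, non-negativity together with identity of indiscernibles, symmetry, and the triangle inequality) for $d_w$ on the set of weak information equivalence classes. All four should follow from results already established in the paper: Corollary~\ref{cor:meta-upb}, Lemma~\ref{lem:concat}, the definition of $\Ieq_w$, and Theorem~\ref{thm:weak-equiv}.

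First I would verify that $d_w$ is well-defined on equivalence classes. Given $G_1\Ieq_w G_2$ and $H_1\Ieq_w H_2$, I want $\Ir(G_1/H_1)\Ir(H_1/G_1) = \Ir(G_2/H_2)\Ir(H_2/G_2)$. Applying parts 2 and 3 of Theorem~\ref{thm:weak-equiv} twice, one obtains
\begin{equation*}
\Ir(G_1/H_1) = \Ir(G_1/G_2)\,\Ir(H_2/H_1)\,\Ir(G_2/H_2), \qquad \Ir(H_1/G_1) = \Ir(H_1/H_2)\,\Ir(G_2/G_1)\,\Ir(H_2/G_2).
\end{equation*}
Multiplying and using $\Ir(G_1/G_2)\Ir(G_2/G_1) = \Ir(H_1/H_2)\Ir(H_2/H_1) = 1$ gives the desired identity.

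Non-negativity and the identity-of-indiscernibles axiom are immediate: by Corollary~\ref{cor:meta-upb}, $\Ir(G/H)\Ir(H/G)\le 1$, so $d_w(G,H)\ge 0$, with equality holding precisely when $\Ir(G/H)\Ir(H/G) = 1$, which is the very definition of $G\Ieq_w H$, i.e., $\Lc(G)=\Lc(H)$. Symmetry is evident from the definition.

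The triangle inequality is where the concatenation lemma does the work. I need to show
\begin{equation*}
\Ir(G/F)\,\Ir(F/G) \;\ge\; \Ir(G/H)\,\Ir(H/G)\,\Ir(H/F)\,\Ir(F/H),
\end{equation*}
since taking $-\log$ of both sides yields $d_w(G,F)\le d_w(G,H)+d_w(H,F)$. But by Lemma~\ref{lem:concat}, $\Ir(G/F)\ge \Ir(G/H)\Ir(H/F)$ and $\Ir(F/G)\ge \Ir(F/H)\Ir(H/G)$; multiplying these two bounds gives exactly the required inequality. I do not anticipate any real obstacle here: the main ``thinking'' step is the well-definedness verification, which relies on the non-trivial Theorem~\ref{thm:weak-equiv}, while the remaining axioms essentially reduce to bookkeeping with the concatenation inequality and the reciprocal bound.
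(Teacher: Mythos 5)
Your proof is correct and follows essentially the same route as the paper: non-negativity from Corollary~\ref{cor:meta-upb}, identity of indiscernibles and symmetry directly from the definitions, and the triangle inequality by multiplying the two instances of Lemma~\ref{lem:concat} and taking $-\log$. The one addition is your explicit well-definedness check via Theorem~\ref{thm:weak-equiv}, which the paper omits; it is correct (and careful), though one can also get well-definedness for free from the other axioms, since $G_i\Ieq_w G_j$ implies $d_w(G_i,G_j)=0$ and then the triangle inequality sandwiches $d_w(G_1,H_1)=d_w(G_2,H_2)$.
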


\begin{proof}
Since $\Ir(G/H)\Ir(H/G)\le 1$, we have $d_w(G,H) \ge 0$. By definition, $d_w(\Lc(G),\Lc(H)) = 0$ if and only if $\Ir(G/H)\Ir(H/G)= 1$, and hence $\Lc(G) = \Lc(H)$. Symmetry follows by definition. For subadditivity, 
 \begin{align*}
  d_w(G,H) + d_w(H,F) &= -\log (\Ir(G/H)\Ir(H/G))-\log (\Ir(H/F)\Ir(F/H))\\
  &= -\log([\Ir(G/H)\Ir(H/F)][\Ir(F/H)\Ir(H/G)])\\
  &\ge -\log(\Ir(G/F)\Ir(F/G))\\
  &= d_w(G,F).
 \end{align*}
\end{proof}

We now show how weak information equivalence is related to equalities in some of the information ratio inequalities derived in previous sections. 

\begin{theorem}
\label{thm:sum-ratio-tight}
 Let $G,H,F$ be graphs. Then
 \begin{equation}
 \label{eqn:ghf-ineq}
  \Ir(G\boxtimes H/ F) = \Ir(G/ F) + \Ir(H/ F),
 \end{equation}
provided that at least one pair is weakly information--equivalent, i.e., either $G \Ieq_w H$, or $F \Ieq_w G$, or $F \Ieq_w H$.
\end{theorem}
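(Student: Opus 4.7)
The lower bound $\Ir(G\boxtimes H/F) \ge \Ir(G/F) + \Ir(H/F)$ is immediate from Theorem~\ref{thm:Ir-prod}. The work is the matching upper bound, and the unifying strategy is to establish
\[
 \Ir(G\boxtimes H/F) \;\le\; (1+\Ir(H/G))\,\Ir(G/F)
\]
in each case. Once this is available, Lemma~\ref{lem:concat} gives $\Ir(H/F)\ge \Ir(H/G)\Ir(G/F)$, hence
\[
 (1+\Ir(H/G))\,\Ir(G/F) \;=\; \Ir(G/F)+\Ir(H/G)\Ir(G/F) \;\le\; \Ir(G/F)+\Ir(H/F),
\]
which closes the proof.

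Case $F\Ieq_w G$: I would plug $F=G$ into~\eqref{eqn:prod1} of Theorem~\ref{thm:gh-prod} to get the exact identity $\Ir(G\boxtimes H/G)=1+\Ir(H/G)$, and then funnel the ratio $\Ir(G\boxtimes H/F)$ through $G$ via Lemma~\ref{lem:concat}, namely $\Ir(G\boxtimes H/G)\ge \Ir(G\boxtimes H/F)\,\Ir(F/G)$. Rearranging,
\[
 \Ir(G\boxtimes H/F)\;\le\;\frac{1+\Ir(H/G)}{\Ir(F/G)},
\]
and the weak equivalence $\Ir(F/G)\Ir(G/F)=1$ lets me replace $1/\Ir(F/G)$ by $\Ir(G/F)$, giving the target tightened bound.

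Case $G\Ieq_w H$: I would use the dual identity~\eqref{eqn:prod2} of Theorem~\ref{thm:gh-prod}, $\Ir(G/G\boxtimes H)=\Ir(G/H)/(1+\Ir(G/H))$, and concatenate $\Ir(G/F)\ge \Ir(G/G\boxtimes H)\Ir(G\boxtimes H/F)$ through $G\boxtimes H$. Rearranging gives $\Ir(G\boxtimes H/F)\le \Ir(G/F)\bigl(1+1/\Ir(G/H)\bigr)$, and the weak equivalence $\Ir(G/H)\Ir(H/G)=1$ converts $1/\Ir(G/H)$ to $\Ir(H/G)$, again yielding the tightened bound. The case $F\Ieq_w H$ follows from the case $F\Ieq_w G$ by noting $G\boxtimes H\cong H\boxtimes G$ and swapping the roles of $G$ and $H$.

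The only subtlety I expect is avoiding the temptation to attack the problem through Theorem~\ref{thm:Ir-invprod} combined with the reciprocal bound $\Ir(G\boxtimes H/F)\Ir(F/G\boxtimes H)\le 1$. That route yields only $\Ir(G\boxtimes H/F)\le 1/\Ir(F/G)+1/\Ir(F/H)$, and a single weak equivalence upgrades just one of the two terms; so one is left with a bound like $\Ir(G/F)+1/\Ir(F/H)$ that is strictly weaker than $\Ir(G/F)+\Ir(H/F)$. The right move is instead to route the concatenation through one of the two ``boundary'' ratios $\Ir(G\boxtimes H/G)$ or $\Ir(G/G\boxtimes H)$, which are computed \emph{exactly} by Theorem~\ref{thm:gh-prod}.
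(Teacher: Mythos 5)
Your proof is correct, and it takes a somewhat different route from the paper's. The paper proves each case by a circular chain of inequalities: in case $G \Ieq_w H$ it starts from $\Ir(G/F)$, concatenates through $H$ and then through $H\boxtimes G$, substitutes $\Ir(H/H\boxtimes G) = \Ir(H/G)/(1+\Ir(H/G))$ and $\Ir(H\boxtimes G/F) \ge \Ir(H/F)+\Ir(G/F)$, and collapses the whole chain back to $\Ir(G/F)$ using $\Ir(G/H)\Ir(H/G)=1$, forcing each step to hold with equality; the case $F\Ieq_w G$ is handled by an analogous chain starting from $\Ir(H/G)$. You instead factor the work into a single unifying upper bound $\Ir(G\boxtimes H/F) \le (1+\Ir(H/G))\Ir(G/F)$, established by one concatenation through an intermediate supplied by Theorem~\ref{thm:gh-prod} (through $G$ when $F\Ieq_w G$, through $G\boxtimes H$ when $G\Ieq_w H$), and you then finish uniformly via $\Ir(H/F)\ge\Ir(H/G)\Ir(G/F)$ and the lower bound of Theorem~\ref{thm:Ir-prod}. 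The ingredients are the same --- Lemma~\ref{lem:concat}, both exact identities in Theorem~\ref{thm:gh-prod}, and Theorem~\ref{thm:Ir-prod} --- but your decomposition exposes a shared intermediate inequality that the paper's two chain-collapse arguments leave implicit, and each of your cases uses one fewer concatenation step. Your aside explaining why the route through Theorem~\ref{thm:Ir-invprod} and the reciprocal bound is insufficient is also correct and is a useful observation not made in the paper.
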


\begin{proof}
1) Suppose that $G \Ieq_w H$. We have
 \begin{align*}
  \Ir(G/F) &\stackrel{(a)}{\ge} \Ir(G/H)\Ir(H/H\boxtimes G) \Ir(H\boxtimes G/F)\\
  &\stackrel{(b)}{\ge} \Ir(G/H)\left(\frac{\Ir(H/G)}{1+\Ir(H/G)}\right)(\Ir(H/F)+\Ir(G/F))\\
  &\ge \Ir(G/H)\left(\frac{\Ir(H/G)}{1+\Ir(H/G)}\right)(\Ir(H/G)\Ir(G/F)+\Ir(G/F))\\
  & = \Ir(G/H)\Ir(H/G)\Ir(G/F)\\
  & \stackrel{(c)}{=} \Ir(G/F),
 \end{align*}
where $(a)$ follows by applying Lemma~\ref{lem:concat} twice, $(b)$ follows from~\eqref{eqn:prod2} of Theorem~\ref{thm:gh-prod} and~\eqref{eqn:Ir-prod} of Theorem~\ref{thm:Ir-prod}, and $(c)$ follows since $G \Ieq_w H$, i.e., $\Ir(H/G)\Ir(G/H) = 1$. Now every inequality in between has to hold with equality. In particular, we have~\eqref{eqn:ghf-ineq}.

2) Suppose that $F \Ieq_w G$. We have
\begin{align*}
 \Ir(H/G) &\ge \Ir(H/G\boxtimes H) \Ir(G\boxtimes H/F) \Ir(F/G)\\
 &\ge \left(\frac{\Ir(H/G)}{1+\Ir(H/G)}\right)(\Ir(G/F)+\Ir(H/F)) \Ir(F/G)\\
 &\stackrel{(d)}{=} \left(\frac{\Ir(H/G)}{1+\Ir(H/G)}\right)(\Ir(G/F)+\Ir(G/F)\Ir(H/G))\Ir(F/G)\\
 &=\Ir(H/G)\Ir(G/F)\Ir(F/G)\\
 &\stackrel{(e)}{=} \Ir(H/G),
\end{align*}
where $(d)$ follows since $F \Ieq_w G$ and thus $\Ir(H/F) = \Ir(G/F)\Ir(H/G)$ by Theorem~\ref{thm:weak-equiv}, and $(e)$ follows since $\Ir(G/F)\Ir(F/G) = 1$ when $F \Ieq_w G$. This establishes~\eqref{eqn:ghf-ineq}.
\end{proof}

\begin{theorem}
 \label{thm:inv-ratio-tight}
 Let $G,H,F$ be graphs. Then
 \begin{equation}
  \label{eqn:inv-ratio-tight}
  \Ir(F/G\boxtimes H) = \frac{\Ir(F/G)\Ir(F/H)}{\Ir(F/G)+\Ir(F/H)},
 \end{equation}
provided that at least one pair is weakly information--equivalent, i.e., either $G \Ieq_w H$, or $F \Ieq_w G$, or $F \Ieq_w H$.
\end{theorem}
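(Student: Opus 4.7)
The plan is to establish the matching upper bound, since the lower bound is already given by Theorem~\ref{thm:Ir-invprod}. The argument parallels the proof of Theorem~\ref{thm:sum-ratio-tight}, but uses identity~\eqref{eqn:prod1} of Theorem~\ref{thm:gh-prod} as the algebraic engine rather than Theorem~\ref{thm:Ir-prod}.

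First I would derive two generic upper bounds on $\Ir(F/G\boxtimes H)$ that hold for \emph{all} graphs. Applying Lemma~\ref{lem:concat} with intermediate graph $G\boxtimes H$ gives $\Ir(F/G)\ge \Ir(G\boxtimes H/G)\,\Ir(F/G\boxtimes H)$, and invoking~\eqref{eqn:prod1}, namely $\Ir(G\boxtimes H/G)=1+\Ir(H/G)$, yields
\[
\Ir(F/G\boxtimes H)\le \frac{\Ir(F/G)}{1+\Ir(H/G)}, \qquad \Ir(F/G\boxtimes H)\le \frac{\Ir(F/H)}{1+\Ir(G/H)},
\]
the second bound following by symmetry in the roles of $G$ and $H$.

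Next I would extract from each weak--equivalence hypothesis the key identity provided by Theorem~\ref{thm:weak-equiv}. The argument in its proof in fact shows that whenever $A\Ieq_w B$, the identities $\Ir(A/X)=\Ir(A/B)\Ir(B/X)$ and $\Ir(X/A)=\Ir(B/A)\Ir(X/B)$ hold for \emph{every} graph $X$, not only for cores. Specializing: $G\Ieq_w H$ (taking $X=F$ in the second identity with $A=G,B=H$) and $F\Ieq_w H$ (taking $X=G$ in the first identity with $A=F,B=H$) both produce $\Ir(F/G)=\Ir(H/G)\Ir(F/H)$, equivalently $\Ir(H/G)=\Ir(F/G)/\Ir(F/H)$; whereas $F\Ieq_w G$ (taking $X=H$ in the first identity with $A=F,B=G$) produces $\Ir(F/H)=\Ir(F/G)\Ir(G/H)$, equivalently $\Ir(G/H)=\Ir(F/H)/\Ir(F/G)$.

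Finally I would substitute each identity into the appropriate upper bound. In the first two cases, plugging $\Ir(H/G)=\Ir(F/G)/\Ir(F/H)$ into the first upper bound gives
\[
\Ir(F/G\boxtimes H)\le \frac{\Ir(F/G)}{1+\Ir(F/G)/\Ir(F/H)}=\frac{\Ir(F/G)\Ir(F/H)}{\Ir(F/G)+\Ir(F/H)},
\]
while in the third case the analogous substitution of $\Ir(G/H)=\Ir(F/H)/\Ir(F/G)$ into the second upper bound yields the same value. Combined with Theorem~\ref{thm:Ir-invprod}, this closes the identity~\eqref{eqn:inv-ratio-tight}. The only point of care is pairing each key identity with the upper bound along which the denominator collapses to the harmonic-mean form; once this matching is made correctly, the remaining computation is purely algebraic, and I do not anticipate any substantive obstacle.
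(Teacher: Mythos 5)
Your proposal is correct and follows essentially the same route as the paper's proof: both use the concatenation bound through the intermediate graph $G\boxtimes H$ together with identity~\eqref{eqn:prod1}, then invoke the weak-equivalence multiplicativity from Theorem~\ref{thm:weak-equiv} to collapse the resulting expression to the harmonic-mean form. The only cosmetic difference is that you phrase the argument as a direct squeeze (upper bound meets lower bound) whereas the paper writes a cyclic chain of inequalities starting and ending at $\Ir(F/G)$ (resp.\ $\Ir(F/H)$) and concludes all intermediate steps are equalities; the underlying inequalities and the way the three hypotheses are dispatched are the same.
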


\begin{proof}
 1) Suppose that $G \Ieq_w H$.  We have
 \begin{align*}
  \Ir(F/G) &\stackrel{(a)}{\ge} \Ir(F/G\boxtimes H)\Ir(G\boxtimes H/G) \\
  &\stackrel{(b)}{\ge} \left(\frac{\Ir(F/G)\Ir(F/H)}{\Ir(F/G)+\Ir(F/H)}\right)(1+\Ir(H/G))\\
  &\stackrel{(c)}{=} \left(\frac{\Ir(F/G)\Ir(F/H)}{\Ir(F/H)\Ir(H/G)+\Ir(F/H))}\right)(1+\Ir(H/G))\\
  &= \Ir(F/G),
 \end{align*}
where $(a)$ follows by Lemma~\ref{lem:concat}, $(b)$ follows by Theorem~\ref{thm:Ir-invprod} and~\eqref{eqn:prod1} of Theorem~\ref{thm:gh-prod}, and $(c)$ follows since $G \Ieq_w H$ and thus $\Ir(F/G) = \Ir(F/H)\Ir(H/G)$ by Theorem~\ref{thm:weak-equiv}. Now every inequality in between has to hold with equality. In particular, we have~\eqref{eqn:inv-ratio-tight}.

2) Suppose that $F \Ieq_w G$. We have
\begin{align*}
 \Ir(F/H) &\ge \Ir(F/G\boxtimes H)\Ir(G\boxtimes H/H)\\
 &\ge \left(\frac{\Ir(F/G)\Ir(F/H)}{\Ir(F/G)+\Ir(F/H)}\right)(1+\Ir(G/H))\\
 &\stackrel{(d)}{=} \left(\frac{\Ir(F/G)\Ir(F/H)}{\Ir(F/G)+\Ir(F/G)\Ir(G/H)}\right)(1+\Ir(G/H))\\
 &= \Ir(F/H),
\end{align*}
where $(d)$ follows since $F \Ieq_w G$ and thus $\Ir(F/H) = \Ir(F/G)\Ir(G/H)$ by Theorem~\ref{thm:weak-equiv}. This establishes~\eqref{eqn:inv-ratio-tight}.
\end{proof}

\section{Information--Critical Graphs}
\label{sec:critical}

In this section we introduce the notion of information--critical graphs, which are (informally) graphs that are ``minimal'' in the information ratio sense. We show that the complements of several known cores are information--critical. Below, we write $\sigma_S(F_1) < \sigma_S(F_2)$ to mean that the source spectrum of $F_1$ is strictly and uniformly smaller (on every coordinate) than that of $F_2$ (and similarly for $\sigma_C(\cdot)$). 

\begin{definition}
 A graph $F$ is  said to be \emph{information--critical} if there exists an edge $e \in E(F)$ such that both $\sigma_S(F\setminus e) < \sigma_S(F)$ and $\sigma_C(F\setminus e) > \sigma_C(F)$. 
\end{definition}
The following theorem provides equivalent characterizations of information--critical graphs. 
\begin{theorem}
  The following statements are equivalent:
  \begin{enumerate}
  \item $F$ is information--critical.
  \item $\sigma_S(F\setminus e) < \sigma_S(F)$ for some edge $e\in E(F)$. 
  \item $\sigma_C(F\setminus e) > \sigma_C(F)$ for some edge $e\in E(F)$. 
  \item $\Ir((F \setminus e)/F) > 1$ for some edge $e \in E(F)$.
  \end{enumerate}
\end{theorem}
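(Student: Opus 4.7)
The plan is to establish the circular chain (1) $\Rightarrow$ (2) $\Rightarrow$ (4) $\Rightarrow$ (3) $\Rightarrow$ (1), using the \emph{same} edge $e$ throughout, so that all four statements become equivalent. The starting observation is that for any edge $e \in E(F)$, writing $F' \eqdef F\setminus e$, the identity map on $V(F)$ is non-adjacency preserving as a mapping $F \to F'$, since removing an edge can only create new non-adjacent pairs. This yields a $(1,1)$ code for $(F, F')$ and hence the baseline $\Ir(F'/F) \ge 1$, so the theorem is really about characterizing when this baseline is strict.

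The implications (1) $\Rightarrow$ (2) and (1) $\Rightarrow$ (3) are immediate from the definition. The key direction is (2) $\Rightarrow$ (4). For the witness edge $e$ of (2), set $\Gamma \eqdef (\overline{F'})^\bullet$, the core of $\overline{F'}$. A direct computation via Theorem~\ref{thm:core} gives $\Ir(\overline{\Gamma}/F') = \Ir(F'/\overline{\Gamma}) = 1$, i.e., $\overline{\Gamma} \Ieq F'$; Theorem~\ref{thm:ratio} then produces $\Ir(\overline{\Gamma}/F) = \Ir(F'/F)$. Since hypothesis (2) forces strict inequality at every (nondegenerate) coordinate of the source spectrum --- in particular at the one indexed by $\Gamma$ ---
\[
  1 \;=\; \Ir(\overline{\Gamma}/F') \;<\; \Ir(\overline{\Gamma}/F) \;=\; \Ir(F'/F),
\]
which is exactly (4). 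The symmetric implication (3) $\Rightarrow$ (4) uses instead $\Gamma \eqdef (\overline{F})^\bullet$, with $\Ir(F/\overline{\Gamma}) = 1$ and $\Ir(F'/\overline{\Gamma}) = \Ir(F'/F)$.

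For the reverse direction, the strict inequality $\Ir(F'/F) > 1$ propagates to both spectra via Lemma~\ref{lem:concat}: for any core $\Gamma_i$,
\[
  \Ir(\overline{\Gamma_i}/F) \;\ge\; \Ir(\overline{\Gamma_i}/F')\,\Ir(F'/F), \quad \Ir(F'/\overline{\Gamma_i}) \;\ge\; \Ir(F'/F)\,\Ir(F/\overline{\Gamma_i}),
\]
so $\Ir(\overline{\Gamma_i}/F) > \Ir(\overline{\Gamma_i}/F')$ and $\Ir(F'/\overline{\Gamma_i}) > \Ir(F/\overline{\Gamma_i})$ whenever the respective second factor on the right is positive. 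This yields (2) and (3) for the same edge $e$, closing the chain: (3) produces (4) by the previous paragraph, which then produces (2), and combining (2) with (3) for the same edge gives (1). The main subtlety worth flagging is that ``strictly smaller on every coordinate'' must be read modulo degenerate coordinates such as $\Gamma_i = K_1$ (where both entries identically vanish); this does not affect the argument since the witness coordinates used --- cores of $\overline{F}$ and $\overline{F'}$ --- are nondegenerate in every nontrivial instance.
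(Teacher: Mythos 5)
Your proposal is correct and follows essentially the same hub-and-spoke structure as the paper's proof, with (4) as the central node: (1) $\Rightarrow$ (2),(3) by definition, (2) $\Rightarrow$ (4) and (3) $\Rightarrow$ (4) by evaluating the spectrum at the coordinate corresponding to $F\setminus e$ (resp. $F$), and (4) $\Rightarrow$ (1) via the concatenation inequality. The only differences are cosmetic: you spell out explicitly that the relevant spectral coordinate is indexed by the core $(\overline{F\setminus e})^\bullet$ and pass through Theorem~\ref{thm:core}, whereas the paper simply says ``for the channel graph $F\setminus e$'', and you flag the potential degeneracy at trivial coordinates (e.g.\ $\Gamma_i = K_1$), a subtlety that the paper's proof of $4)\Rightarrow 1)$ also leaves implicit.
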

\begin{proof}
1) $\Rightarrow$ 2) and 3): By definition. 

2) $\Rightarrow$ 4): 
  Suppose that there exists an edge $e \in E(F)$ such that $\sigma_S(F\setminus e) < \sigma_S(F)$. Then specifically for the channel graph $F\setminus e$, we have $\Ir((F \setminus e)/F) > \Ir((F \setminus e)/(F \setminus e)) = 1$. 

3) $\Rightarrow$ 4): 
 Suppose that there exists an edge $e \in E(F)$ such that $\sigma_C(F\setminus e) > \sigma_C(F)$. Then specifically for the source graph $F$, we have $\Ir((F \setminus e)/F) > \Ir(F/F) = 1$. 
  
  
  4) $\Rightarrow$ 1):
  Suppose that there exists an edge $e \in E(F)$ such that $\Ir((F \setminus e)/F) > 1$. Then for any graph $H$, $\Ir(H/F) \ge\Ir(H/(F\setminus e))\Ir((F\setminus e)/F) > \Ir(H/(F \setminus e))$.   For any graph $G$, $\Ir((F\setminus e)/G) \ge \Ir((F\setminus e)/F)\Ir(F/G) > \Ir(F/G)$. In other words, $\sigma_S(F\setminus e) < \sigma_S(F)$ and $\sigma_C(F\setminus e) > \sigma_C(F)$ for this edge $e$. By definition, $F$ is information--critical.
\end{proof}

Here are simple sufficient conditions for criticality. 
\begin{proposition}
\label{prop:critical}
 If there exists an edge $e \in E(F)$ such that $\Theta(F\setminus e) > \chif(F)$, then $F$ is information--critical. 
\end{proposition}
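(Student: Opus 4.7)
My plan is to reduce the claim to the equivalent characterization given in the preceding theorem, namely that $F$ is information--critical if and only if there exists an edge $e \in E(F)$ with $\Ir((F\setminus e)/F) > 1$. Under the hypothesis, I will exhibit such an edge by invoking the separation lower bound.

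Fix an edge $e \in E(F)$ with $\Theta(F \setminus e) > \chif(F)$. Apply Theorem~\ref{thm:separation} to the source-channel pair consisting of source graph $F$ and channel graph $F \setminus e$. This yields
\[
 \Ir((F\setminus e)/F) \ \ge\ \frac{\log \Theta(F\setminus e)}{\log \chif(F)}.
\]
In the ``generic'' case $\chif(F) > 1$, both logarithms are positive and the hypothesis $\Theta(F\setminus e) > \chif(F)$ forces the ratio above to exceed $1$. Hence $\Ir((F\setminus e)/F) > 1$, and by the equivalent characterization $F$ is information--critical.

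For the degenerate case $\chif(F) = 1$, note that this forces $F$ to be complete (assuming $|V(F)| \geq 2$, which is implicit since $E(F)$ must be nonempty to talk about removing an edge), and therefore $F^k$ has no pair of distinct non-adjacent vertices for any $k$. Consequently, any mapping from $V(F^k)$ is vacuously non-adjacency preserving, so $(k,n)$ codes from $F$ to $F\setminus e$ exist for all $k,n$, giving $\Ir((F\setminus e)/F) = \infty > 1$. Criticality follows again.

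The proof really is a one-line corollary of the separation scheme combined with the equivalent characterization, so there is no substantive obstacle. The only mild subtlety is the boundary case $\chif(F) = 1$, where the separation bound formally involves division by zero; but that case is disposed of by the observation that non-adjacency preservation is trivial whenever the source graph is complete.
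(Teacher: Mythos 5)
Your proof is correct and takes the same approach the paper intends when it writes ``immediate by comparing lower and upper bounds'': you apply the separation lower bound (Theorem~\ref{thm:separation}) to the pair $(F, F\setminus e)$ to conclude $\Ir((F\setminus e)/F) \ge \frac{\log\Theta(F\setminus e)}{\log\chif(F)} > 1$, and then invoke the equivalent characterization of information-criticality. Your treatment of the boundary case $\chif(F)=1$ (i.e.\ $F$ complete) is a small extra precaution the paper omits but is correctly handled.
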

\begin{proof}
  Immediate by comparing lower and upper bounds.  
\end{proof}

\begin{corollary}\label{cor:tr_free}
  Suppose $\overline{F}$ is a connected triangle-free graph with at least three vertices, and $\chif(F) < 3$. Then $F$ is information--critical. 
\end{corollary}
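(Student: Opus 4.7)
My plan is to reduce the claim to Proposition~\ref{prop:critical} via an elementary triangle-creation argument in $\overline{F}$. By Proposition~\ref{prop:critical}, it suffices to exhibit an edge $e \in E(F)$ with $\Theta(F\setminus e) > \chif(F)$. Since $\chif(F) < 3$ and $\Theta(\cdot) \ge \alpha(\cdot)$, it is in fact enough to find $e \in E(F)$ with $\alpha(F\setminus e) \ge 3$.

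Next I would pass to the complement. Removing an edge $e = \{u,v\}$ from $F$ corresponds to adding that same edge to $\overline{F}$, and independent sets in $F\setminus e$ correspond to cliques in $\overline{F}+e$. So the task becomes finding a non-edge $\{u,v\}$ of $\overline{F}$ (i.e., an edge of $F$) whose addition creates a triangle in $\overline{F}$, equivalently, finding two non-adjacent vertices $u,v$ of $\overline{F}$ admitting a common neighbor $w$.

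To produce such a configuration, I would invoke the hypothesis that $\overline{F}$ is connected with $|V(\overline{F})|\ge 3$. Any connected graph on $n \ge 3$ vertices has average degree strictly greater than $1$, so some vertex $w$ of $\overline{F}$ has at least two neighbors $u,v$. Triangle-freeness of $\overline{F}$ then forces $u\not\sim v$ in $\overline{F}$, hence $\{u,v\}$ is an edge $e$ of $F$, and $\{u,v,w\}$ is an independent set in $F\setminus e$. This yields $\alpha(F\setminus e) \ge 3$ and therefore $\Theta(F\setminus e)\ge 3 > \chif(F)$, so Proposition~\ref{prop:critical} applies.

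I do not anticipate any real obstacle: the hypotheses are tailored so that Proposition~\ref{prop:critical} goes through via this short degree-plus-triangle argument. The only sanity check is the existence of a vertex of degree at least $2$ in $\overline{F}$, which is immediate from connectedness together with $|V(\overline{F})|\ge 3$.
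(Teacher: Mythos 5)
Your proof is correct and follows essentially the same route as the paper: both reduce to Proposition~\ref{prop:critical} by finding an edge $e$ of $F$ whose addition to $\overline{F}$ creates a triangle, hence $\alpha(F\setminus e)\geq 3 > \chif(F)$. The paper states "it is easy to see" this; you have simply supplied the elementary degree argument (a connected graph on $\geq 3$ vertices has a vertex of degree $\geq 2$, and triangle-freeness makes its two neighbors non-adjacent) that the paper leaves implicit.
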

\begin{proof}
  It is easy to see that we can always add an edge to $\overline{F}$ to create a triangle. Hence, we can remove an edge $e$ from $F$ such that $\Theta(F\setminus e) \geq \alpha(F\setminus e) \geq 3$. The claim follows by Proposition~\ref{prop:critical} since we assumed that $\chif(F) <3$.  
\end{proof}

\begin{example}
\label{ex:oddcycle}
Let $F = \overline{C_{2n+1}}$ be the complement of an odd cycle with $n \ge 2$. Then $F$ is information--critical. This follows from Corollary~\ref{cor:tr_free} since $C_{2n+1}$ is a connected triangle-free graph, and $\chif(\overline{C_{2n+1}}) = 2+\frac 1n < 3$. 
\end{example}

\begin{example}
 Let $F = \overline{W_{2n+1}}$ be the complement of an odd wheel. Then $F$ is information--critical. To see this, observe that the complement of a wheel is the disjoint union of the complement of a cycle with an isolated vertex, i.e., $\overline{W_{2n+1}} = \overline{C_{2n+1}} + \{0\}$. Label the vertices of $\overline{C_{2n+1}}$ by $\{1,2,\ldots,2n+1\}$, where $i$ and $(i+1 \mod 2n+1)$ are non-adjacent. By removing the edge $e$ connecting  $1$ and $3$, the vertices $\{0,1,2,3\}$ become an independent set of $\overline{W_{2n+1}}\setminus e$. Thus, $\Theta(\overline{W_{2n+1}}\setminus e) \ge 4 > \chif(\overline{W_{2n+1}}) = 3+\frac 1n$.
\end{example}

\begin{example}
 Let $F = \overline{KG(n,r)}$ be the complement of a Kneser graph, where $r$ does not divide $n$. Then $F$ is information--critical. To see this, recall that $\overline{KG(n,r)}$ is a graph whose vertices are $r$-subsets of $\{1,2,\ldots,n\}$, and where two vertices are adjacent if the intersection of the corresponding subsets is non-empty. Let $q = \lfloor n/r\rfloor$. For $1 \le i \le q$, let $v_i$ denote the vertex corresponds to the $r$-subset $\{(i-1)r+1,(i-1)r+2,\ldots,ir\}$. The vertex $v_q$ and the vertex corresponds to subset $\{n-r+1,n-r+2,\ldots,n\}$ (denoted by $v_0$) are adjacent, since $r$ does not divide $n$, which implies $\{(q-1)r+1,(q-1)r+2,\ldots,qr\}\cap \{n-r+1,n-r+2,\ldots,n\} \neq \emptyset$. By removing the edge $e$ between $v_q$ and $v_0$, the vertices set $\{v_0,v_1,v_2,\ldots,v_q\}$ become an independent set of $\overline{KG(n,r)}\setminus e$. Therefore, $\Theta(\overline{KG(n,r)}\setminus e) \ge \lfloor n/r\rfloor + 1 > \chif(\overline{KG(n,r)}) = n/r$.
\end{example}

\begin{example}
\label{ex:Mycielski}
The \textit{Mycielski construction}~\cite{Mycielski1955} takes a graph $F$ and creates a graph $M(F)$ with $2|V(F)|+1$ vertices, such that 1) $F$ is connected implies $M(F)$ is connected ; 2) $\alpha(\overline{M(F)}) = \alpha(\overline{F})$; and 3) $\chi_f(M(F)) = \chi_f(F) + \frac{1}{\chi_f(F)}$. Thus, using also Corollary~\ref{cor:tr_free} again, if $F$ is any graph such that $\overline{F}$ is connected and triangle-free, and $\chif(F) < \frac{3+\sqrt{5}}{2}$, then $\overline{M(\overline{F})}$ is information--critical. For example, $\overline{M(C_{2n+1})}$ is information--critical for any $n\geq 2$, and so is $\overline{M(M(C_{2n+1}))}$ for any $n\geq 7$. It is known that if $F$ is $\chi$-critical (cf. definition in Example~\ref{ex:cores}), then $M(F)$ is also $\chi$-critical, and hence a core~\cite{Che--Collins}. Therefore both $M(C_{2n+1})$ and $M(M(C_{2n+1}))$ are cores, providing yet another example of a core whose complement is information--critical. 
\end{example}

So far, every information--critical graph we saw is the complement of a core. In the next example, we provide two information--critical graphs whose complements are not cores.

\begin{example}
 Let $F = \overline{C_{2n}}$ be the complement of an even cycle with $n \ge 2$. Then $F$ is information--critical, since by Corollary~\ref{cor:tr_free}, $C_{2n}$ is a connected triangle-free graph and $\chif(F) = 2 < 3$.  Moreover, by Example~\ref{ex:Mycielski}, the Mycielski construction $\overline{M(C_{2n})}, n \ge 2$, is also information--critical. Note that $C_{2n}$ is not a core (cf. Example~\ref{ex:noncore}), and neither is $M(C_{2n})$. To see the latter, consider for example $M(C_{4})$ as illustrated in Figure~\ref{fig:mc4}. One can check that the mapping $f(1) = f(3) = 1, f(2) = f(4) = 2, f(1') = f(3') = 1',f(2')=f(4') = 2'$ and $f(0) = 0$ is a homomorphism $M(C_4) \to M(C_4)$, yet not an isomorphism. Thus $M(C_4)$ is not a core.
\begin{figure}[htbp]
 \small
 \centering
 \psfrag{1}[cc]{$1$}
 \psfrag{2}[cc]{$2$}
 \psfrag{3}[cc]{$3$}
 \psfrag{4}[cc]{$4$}
 \psfrag{0}[cc]{$0$}
 \psfrag{1'}[bc]{$1'$}
 \psfrag{2'}[bc]{$2'$}
 \psfrag{3'}[bc]{$3'$}
 \psfrag{4'}[bc]{$4'$}
 \includegraphics[scale = .6]{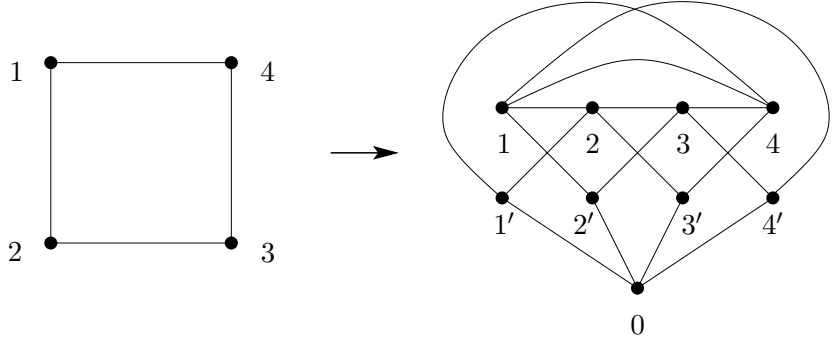}
 \caption{$C_4$ (left) and its Mycielski construction $M(C_4)$ (right). In $M(C_{4})$, the subgraph induced by vertices $\{1,2,3,4\}$ is isomorphic to $C_4$, $i' \sim 0$ for $i = 1,2,3,4$, and for each $i \sim j$ in $C_4$, there are two edges $i\sim j'$ and $i' \sim j$ in $M(C_4)$.}
 \label{fig:mc4}
\end{figure}
\end{example}


We now provide an example of a graph that is not information--critical. 
\begin{example}\label{ex:not_critical}
Consider $C_4$, the cycle over $4$ vertices. Removing any edge $e$ from $C_4$ results in a path over $4$ vertices, denoted by $P_4$. Clearly $\overline{C_4} \leftrightarrow \overline{P_4}$ and thus it holds that $\Ir(C_4/F) = \Ir(P_4/F) = \Ir((C_4\setminus e)/F)$ and $\Ir(F/C_4) = \Ir(F/P_4) = \Ir(F/(C_4\setminus e))$ for any graph $F$, and any edge $e$. Therefore $C_4$ is not information--critical (in a very strong sense, see also a short discussion in the subsequent  section).
\end{example}

\section{Open Problems}

In Section~\ref{sec:upper-bd}, we derived several upper bounds on the information ratio in terms of capacity, Lov{\'a}sz theta function, and fractional chromatic number (Theorem~\ref{thm:meta-upb}), Haemers' minrank function (Theorem~\ref{thm:Haemers}), and $\b(G,F)$ (Theorem~\ref{thm:beta}). It would be interesting to compare them. 

\begin{problem}
Can Theorem~\ref{thm:Haemers} strictly improve the other upper bounds in Theorem~\ref{thm:meta-upb}? 
\end{problem}
\begin{problem}
Can Theorem~\ref{thm:beta} strictly improve the other upper bounds in Theorem~\ref{thm:meta-upb} and Theorem~\ref{thm:Haemers} when the channel graph is vertex-transitive?  
\end{problem}

It is known that for the set of homomorphic equivalence classes, the partial order $\to$ forms a lattice~\cite{Gratzer1971}. For any two homomorphic equivalence classes $\Hc(G)$ and $\Hc(H)$, their least upper bound is the disjoint union $\Hc(G+H)$ and greatest lower bound is the tensor product $\Hc(G \times H)$~\cite[Theorem 2.37]{Hahn--Tardif1997}. In Section~\ref{sec:partial}, we defined a partial order $\preccurlyeq$ on the set of information equivalence classes (cf. Definition~\ref{def:info-poset})
\begin{problem}
 Does the information partial order form a lattice over the set of information equivalence classes? If so, what are the least upper bound and greatest lower bound of two information equivalence classes $\Lc(G)$ and $\Lc(H)$?
\end{problem}

Recall that the partial order induced by graph homomorphism (which refines the information partial order) contains an infinite antichain,  i.e., an infinite sequence of incomparable graphs. Thus we may ask:
\begin{problem}
Does the information partial order contain an infinite antichain? 
\end{problem}
We note in passing that even if the answer to the above question is negative, the information partial order is not a well partial order since it contains an infinite strictly decreasing chain: $\cdots \prec \overline{C_7} \prec \overline{C_5} \prec \overline{C_3}$, where $G \prec H$ if $\Lc(G) \preccurlyeq \Lc(H)$ and $\Lc(G) \neq \Lc(H)$. 

In Theorem~\ref{thm:contraction}, we showed that the function $d(G,H) \eqdef -\log\min\{\Ir(G/H),\,\Ir{H/G}\}$ is contractive w.r.t. strong product. We further conjecture it is also contractive w.r.t. disjoint union.
\begin{problem}
 Let $G,H$ be two graphs. Is it true that $d(G+ F, H+ F) \le d(G,H)$ for any graph $F$?
\end{problem}


In Section~\ref{sec:critical}, the complement of every core we were able to verify was information--critical. This leads to the following questions.

\begin{problem}
Suppose that $G$ has at least one edge, and $\overline{G}$ is a core. Then 
\begin{enumerate}[(1)]
\item Is $G$ information--critical?
\item Assume further that $G$ is not information--equivalent to $\overline{K_s}$ for any integer $s$. Is $G$ information--critical? 
\end{enumerate}
\end{problem}

It may be interesting to refine the notion of information--criticality. One way to do that is to let the edge being removed depend on the graph we compare to. Specifically, we can call a graph $F$ \emph{source--critical w.r.t. $H$} if there exists an edge $e$ such that $\Ir(H/(F\setminus e)) < \Ir(H/F)$, and \emph{channel--critical w.r.t. $G$} if there exists an edge $e$ such that $\Ir((F\setminus e)/G) > \Ir(F/G)$. Note that if a graph is information--critical, then it is both source--critical and channel--critical w.r.t. all graphs, but not necessarily vice-versa. Also, note that by Example~\ref{ex:not_critical}, there exists graphs that are not source--critical nor channel--critical w.r.t. any other graph. 

\begin{problem}
Consider the following questions:
\begin{enumerate}[(1)]
\item Are there graphs that are source--critical (resp. channel--critical) w.r.t. to all graphs, but not channel--critical (resp. source--critical) w.r.t. some graph? 
\item Find $G,H_1,H_2$ such that $G$ is source--critical w.r.t. $H_1$ but not w.r.t. $H_2$.  
\item Find $G_1,G_2,H$ such that $H$ is channel--critical w.r.t. $G_1$ but not w.r.t. $G_2$.  
\end{enumerate}
\end{problem}

\bibliographystyle{IEEEtran}
\bibliography{jscc}

\end{document}